\documentclass[11pt]{article}
\usepackage[utf8]{inputenc}
\usepackage[margin=1in,footskip=0.25in]{geometry}
\usepackage{amsmath}
\usepackage{mdframed}
\usepackage{mathtools}
\usepackage{bbm}
\usepackage{titling}
\usepackage{parskip}
\usepackage{amsthm}
\usepackage{amssymb,amscd,amsfonts,amsbsy}
\usepackage{hyperref}
\usepackage{setspace}
\usepackage[
  backend=biber,
  style=apa,
  natbib=true,
  maxcitenames=2,
  mincitenames=1,
  maxbibnames=99,
  giveninits=true,
  uniquename=false,
  doi=false,
  url=false,
  isbn=false,
  eprint=false
]{biblatex}
\usepackage{graphicx}
\usepackage{caption}
\usepackage{subcaption}
\usepackage{float} 
\usepackage{setspace}
\usepackage{makecell}
\usepackage{array}
\usepackage{tikz}
\usetikzlibrary{positioning,arrows.meta}
\usepackage{authblk}
\usepackage{float}
\usepackage{titletoc}
\usepackage{booktabs}
\usepackage{makecell}
\usepackage{array}
\usepackage{listings}
\usepackage{xcolor}
\usepackage{tabularx}
\usepackage{booktabs}
\usepackage{longtable}
\usepackage{needspace}

\AtEveryBibitem{\clearfield{pages}}

\definecolor{codebackground}{gray}{0.95}

\tikzset{
  var/.style = {circle, draw, thick, inner sep=1.6pt, minimum size=18pt},
  >=Latex, node distance=14mm and 20mm
}

\DeclareNameAlias{author}{family-given}
\DeclareNameAlias{editor}{family-given}
\DeclareNameAlias{translator}{family-given}

\hypersetup{
    colorlinks=true,
    linkcolor=blue,
    filecolor=magenta,      
    urlcolor=cyan,
    pdftitle={Overleaf Example},
    pdfpagemode=FullScreen,
    }

\newcommand{\subtitle}[1]{%
  \posttitle{%
    \par\end{center}
    \begin{center}\large#1\end{center}
    \vskip0.5em}%
}

\usepackage{graphicx}
\graphicspath{ {images/} }
\newtheorem{remark}{Remark}

\newtheorem{proposition}{Proposition}

\title{Two-sample inference for sequence-derived feature distributions from deep sequencing data}

\author[1]{James Peng\thanks{Email: \href{mailto:james.s.peng@berkeley.edu}{james.s.peng@berkeley.edu}}}\author[2]{Florian Stijven}
\author[3]{Michal Juraska}
\author[3]{Allan C. deCamp}
\author[3]{Elena E. Giorgi}
\author[4]{Linbo Wang}
\author[1,4]{Peter B. Gilbert}

\affil[1]{Department of Biostatistics, University of Washington, Seattle, WA, USA}
\affil[2]{I-BioStat, KU Leuven, Leuven, Belgium}
\affil[3]{Vaccine and Infectious Disease Division, Fred Hutchinson Cancer Center, Seattle, WA, USA}
\affil[4]{Department of Statistical Sciences, University of Toronto, Toronto, Canada}

\begin{document}

\maketitle

\begin{abstract}
\begin{singlespace}
Many biological contexts involve heterogeneous populations of related molecular variants within a single individual. For example, some viral infections such as HIV-1 exhibit substantial diversity within a single host. Advances in sequencing technology now make it possible to characterize such populations by generating up to thousands of genetic sequences from a biological sample. In many settings, each sequence can be mapped to a numeric feature, such as Hamming distance to a reference sequence. This collection of feature values defines a one-dimensional distribution that characterizes individual-level viral diversity. When comparing these features across independent groups, past analyses have typically reduced each individual’s data to a single summary value, such as the mean or mode. We propose a framework that retains the full individual-level distribution as the object of analysis. Drawing on existing tools from distributional data analysis, our framework uses Wasserstein barycenters to summarize the observed  individual-level distributions within groups into a single representative distribution. These group barycenters are then compared using differences in quantiles and tested for equality. A practical challenge is that the number of sequences obtained, known as sequencing depth, can vary across biological samples. Through simulation studies, we examine when low sequencing depth can distort inference and present practical strategies for addressing these issues. We then illustrate the framework using data from the HIV-1 Antibody Mediated Prevention efficacy trials and a human B cell receptor sequencing study of vaccine-elicited immune responses. By using the entire collection of sequences for each individual, our approach can reveal group differences that may be obscured by single-number summary analyses.
\end{singlespace}
\end{abstract}

\section{Introduction}

Many biological studies seek to characterize heterogeneous populations of related molecular variants within a single individual. For example,  rapidly evolving RNA viruses, such as HIV-1 and hepatitis C, can have a diverse population within each host, referred to as a quasispecies. This diversity is generated by within-host evolutionary processes such as mutation, selection, genetic drift, recombination, and transmission bottlenecks. It can influence clinically important outcomes, including immune escape, drug resistance, viral persistence, disease progression, and transmission \citep{domingo_viral_2012, sim_tracking_2015, nelson_within-host_2015, dinis_deep_2016}. Another example occurs in the human adaptive immune repertoire, which contains a vast array of B cell receptors. Following infection or vaccination, an individual's B cell lineages can undergo clonal expansion and somatic hypermutation, generating diverse collections of related receptors \citep{nouri_somatic_2020}.

To study this diversity, researchers collect biological samples, such as blood or tissue, from individuals and sequence the molecular variants present in those samples. Historically, such diversity has been difficult to characterize with high resolution because older approaches, such as single genome amplification followed by Sanger sequencing, are resource-intensive and typically yield up to 20-30 sequences per biological sample. In recent years, \textit{deep sequencing} technologies have transformed this setting by generating large collections of sequences per sample, enabling more detailed study of heterogeneous populations within individuals \citep{nelson_within-host_2015, boyd_deep_2016, lauring_within-host_2020}.

Direct analysis of raw sequence data is challenging because the number of possible sequences grows rapidly with sequence length, and observed sequences often vary substantially across individuals. To make these analyses more tractable, sequences are often summarized using lower-dimensional features. These can include discrete indicators for specific amino acid residues or motifs, classification into broad sequence types or lineages, or quantitative numeric scores. Our proposal focuses specifically on the third type: numeric sequence features. Examples include the Hamming distance to a reference sequence or sequence-specific phenotypic properties (either experimentally derived or predicted from sequence). Once individual sequences are mapped to numeric values, a common analytical task is to compare these sequence-derived features across treatment conditions or demographic factors such as geographic region. In the HIV-1 sequencing context, such analyses have included quantifying HIV-1 Env V1V2 loop length and predicted N-linked glycosylation sites across stages of disease progression \citep{curlin_hiv-1_2010} and comparing predicted transmissibility across modes of sexual transmission \citep{tully_differences_2016}. In B cell receptor studies, analyses have compared levels of somatic hypermutation and distance from inferred germline sequences between treatment arms \citep{willis_vaccination_2025}.


With deep sequencing data, the collection of sequences from a single individual yields a distribution of numeric feature values. Past analyses have often collapsed these data into a single representative sequence or summarized the feature distribution by a single scalar value, such as the mean or mode. This reduction simplifies group comparisons by allowing for classical two-sample procedures but, as we will argue, may fail to detect differences between groups not captured by the chosen summary. In contrast, we propose retaining each individual's complete collection of feature values. Thus, the data unit for each individual is the distribution itself. 

Because distributional observations cannot be averaged, differenced, or tested in the same way as real-valued observations, analyzing them requires additional statistical tools. A growing body of literature has developed tools for data in which the observational unit is a distribution \citep{agueh_barycenters_2011, chen_wasserstein_2023, lin_causal_2023}. A central object in this work is the \textit{Wasserstein barycenter}, which provides a notion of an average for a set of distributions. Our aim is to translate ideas from this literature into a practical framework for biological sequencing studies. We describe how to compare two independent groups by treating each individual’s sequence-derived features as a distribution, summarizing each group with a Wasserstein barycenter, and evaluating group differences through pointwise differences in their barycenter quantile functions. Furthermore, we develop hypothesis tests for the equality of group barycenters, designed to detect different classes of alternatives that may arise in practice. The proposed methods are implemented in a \textsf{R} package, \texttt{mcbarycenter}.

A complication in this approach is that individual-level feature distributions are observed at varying resolutions. The number of sequences obtained from a biological sample---known as \textit{sequencing depth}---can differ substantially across individuals, depending on the laboratory sequencing tool used and potentially reflecting differences in the abundance of the target molecules in the sample. While deep sequencing can theoretically yield thousands of sequences per sample, the realized sequence count often varies substantially and may be low for certain samples \citep{sims_sequencing_2014}. With low sequencing depth, the observed feature values provide a limited view of the distribution of the feature within the underlying biological population. This becomes especially problematic in comparisons when sequencing depth varies systematically between groups, which could happen if the amount of the biological target systematically differs between the groups being compared.

HIV-1 prevention trials provide a motivating example in which these sequencing depth issues arise. In these studies, participants are randomized to receive either an active intervention or a placebo and are followed for incident HIV-1 infection. For participants who acquire HIV-1, blood samples are collected to generate viral sequences. These sequences are then compared between study arms to evaluate whether the intervention is associated with differences in the viral variants observed among individuals who acquire the virus \parencite{gilbert_sieve_2001}. Sequencing depth in these datasets can vary depending on the sequencing technology and processing pipeline used, and may also reflect the amount of viral genetic material available in a sample. Thus, if an intervention affects post-infection viral load, it may induce systematic differences in sequencing depth between study arms. 

We highlight three distinct examples of trials with different sequencing depths: the Phambili (HVTN 503), AMP (HVTN 703/704), and Imbokodo (HVTN 705) trials (Figure \ref{fig:study_seq_depth}; Supplemental Table \ref{tab:hiv_studies}). The Phambili study, which used older Sanger sequencing technology, has very low sequencing depths that are distributed similarly across arms
(Vaccine: median 5 [range 2--12]; Placebo: 6 [5--12]). In contrast, the AMP study, which used PacBio deep sequencing, exhibits a wide range of sequencing depths with an overall median of 141.5 sequences per person (range: 1--988), relatively balanced across arms (Treatment: 134 [1--672]; Placebo: 157 [4--988]). Finally, the Imbokodo trial, which also used deep sequencing, demonstrates a scenario with systematic differences between groups. While the overall median is 121 sequences (range: 1--589), vaccine recipients tended to have lower sequencing depths than placebo recipients (Vaccine: 96 [2--493]; Placebo: 146 [1--589]). These three examples illustrate that sequencing depth is not only highly variable across studies, but can also be affected by the intervention itself. We will study the impact of varying sequencing depth on our framework through simulation studies and provide practical guidance for mitigating this potential issue.

\begin{figure}[!htbp]
    \centering
    \includegraphics[width=1\textwidth]{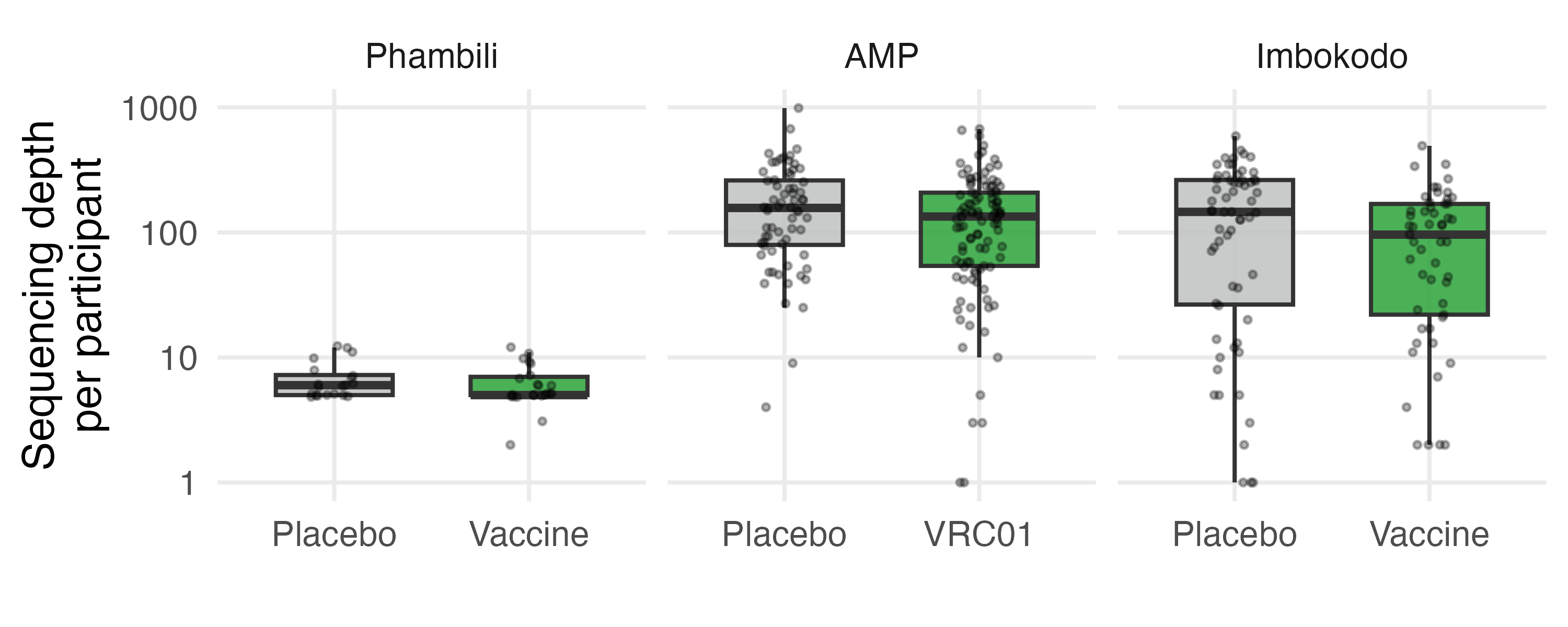}
    \caption{HIV-1 Envelope sequencing depth of participants acquiring HIV-1 infection for three HIV-1 prevention trials---Phambili (HVTN 503), AMP (HVTN 703/704), and Imbokodo (HVTN 705) ---stratified by study arm. The y-axis is shown on the log10 scale.}
    \label{fig:study_seq_depth}
\end{figure}

To our knowledge, Wasserstein barycenter-based distributional methods have not previously been applied to the analysis of deep sequencing data. A related paper applies a similar framework in genetics research of single-cell differential expression \citep{zhang_bsde_2022}. Their method uses Wasserstein barycenter techniques to analyze single-cell sequencing data in a two independent sample problem, comparing cases versus controls. While relying on similar ideas, our proposal differs in a few ways: first, we explicitly consider the potential for low or differential sequencing depth to produce biased estimates. Second, we consider an estimand---the difference in quantiles---that contrasts the  Wasserstein barycenters of the two groups, whereas they consider the Wasserstein distance between the two barycenters as a single summary measure. Third, we approach hypothesis testing differently. Their permutation testing procedure evaluates the null of exchangeability between two groups, which may be automatically violated if the distribution of sequencing depths differs systematically between groups. In contrast, our testing approach can be valid even when sequencing depths differ between groups.

The remainder of the paper is organized as follows. Section~\ref{sec:statistical_tasks} provides a high-level introduction to the statistical tasks for our proposed analytic approach and presents illustrative toy examples to build intuition. Section \ref{sec:methodology} describes the data structure, estimands, and methodology for estimating the group-specific Wasserstein barycenters and testing the hypothesis of equal barycenters between groups. Section \ref{sec:simulations} presents results from simulation studies. Section \ref{sec:application} presents the framework applied to the AMP HIV-1 prevention trials and human B cell receptor sequencing study of vaccine-elicited immune responses. Section \ref{sec:discussion} concludes.

\section{Statistical tasks with distributional data}\label{sec:statistical_tasks}

To build intuition, we first introduce the framework's statistical tasks at a high level and provide illustrative toy examples. The statistical tasks include:

\paragraph{Task 1: Average distribution.} First, we define what constitutes an ``average" distribution across individuals. We use the \textit{Wasserstein barycenter}, which provides a notion of averaging distributions under optimal transport geometry \citep{agueh_barycenters_2011}. For a more comprehensive overview of Wasserstein barycenters, we refer the reader to the review article by \citet{bigot_statistical_2020}.

Compared to pointwise averages of density functions, the Wasserstein barycenter can better preserve the salient features of component distributions. In Figure \ref{fig:barycenter_quantile}(a)-(b), we display the densities of five bimodal distributions of the same shape along with their Wasserstein barycenter; as shown, the barycenter has the same bimodal shape as the five input distributions. 
Additionally, for one-dimensional distributions, the Wasserstein barycenter has a convenient and interpretable representation: its quantile function is equal to the pointwise average of the quantile functions of its component distributions \citep{bigot_characterization_2018} (Figure \ref{fig:barycenter_quantile}(c)-(d)). Given this property, we frame our presentation around quantile functions of distributions rather than their densities. In practice, we summarize the within-host feature distributions for each group using their mean quantile function or equivalently, the quantile function of their Wasserstein barycenter.

\begin{figure}[ht!]
    \centering
    \includegraphics[width=\textwidth]{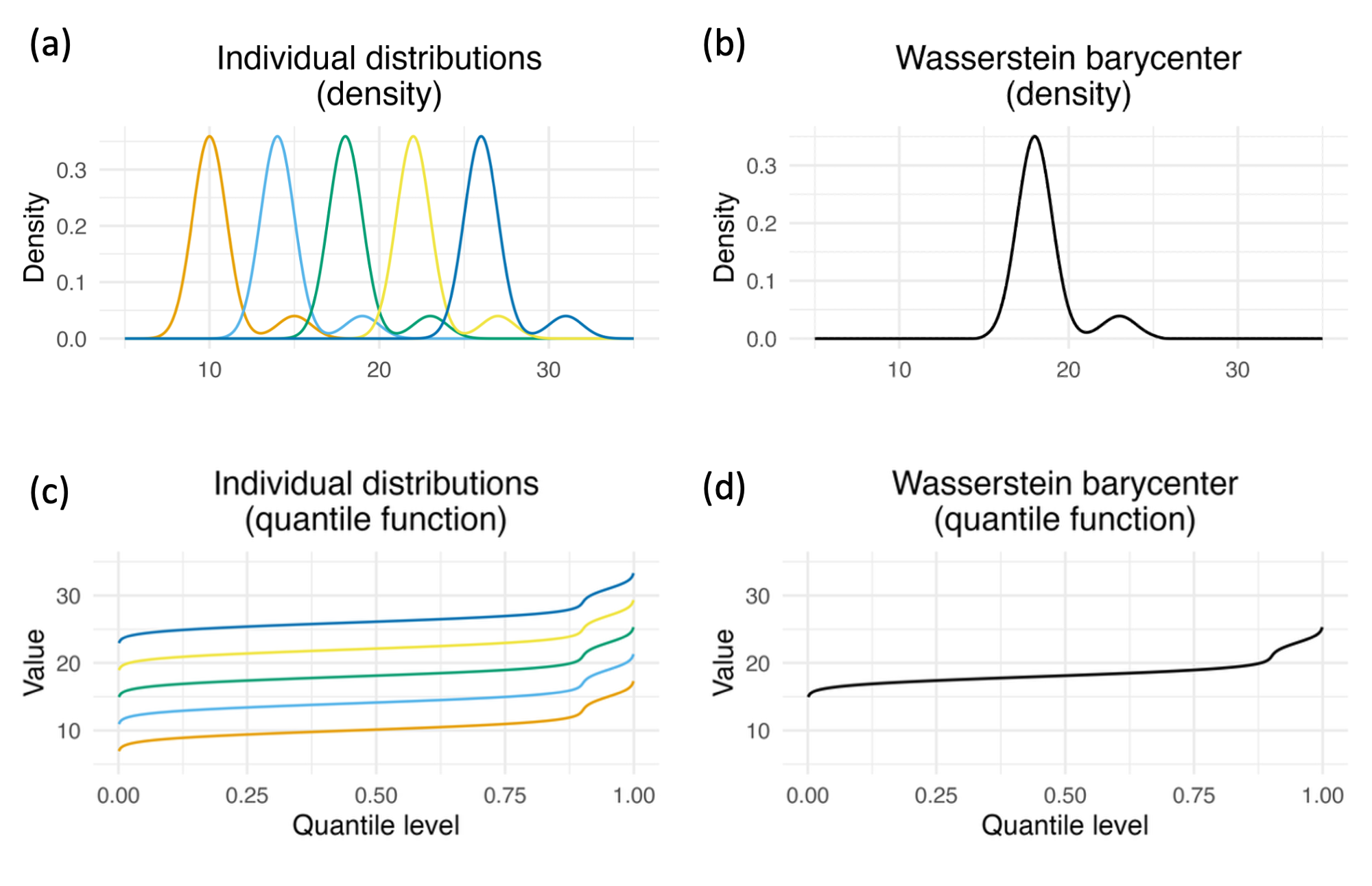}
    \caption{(a) Densities of five bimodal distributions and (b) the density of their Wasserstein barycenter. (c) Quantile functions of the same five bimodal distributions and (d) the quantile function of their Wasserstein barycenter. The Wasserstein barycenter quantile function is obtained by taking the pointwise mean of the input quantile functions.}    \label{fig:barycenter_quantile}
\end{figure}

\paragraph{Task 2: Differences between barycenter distributions.} After estimating the barycenter for each group, we quantify how the barycenter distributions differ between groups. Analyses that reduce a distribution to a single scalar such as the mean or mode often miss important differences in shape or tail behavior. Instead, we employ a \textit{quantile difference map}, which compares two distributions by evaluating the differences between them across quantile levels in $(0, 1)$. Comparing quantiles is a well-established method for contrasting distribution functions \citep{doksum_empirical_1974}, interpreting quantile regression \citep{koenker_regression_1978}, and studying causal treatment effects for both real-valued and distributional outcomes \citep{firpo_efficient_2007, lin_causal_2023, katta_interpretable_2024}. Such a comparison allows us to more thoroughly characterize differences between barycenter distributions; a toy example is shown in Figure \ref{fig:quantile_diff}.

\begin{figure}[ht!]
    \centering
    \includegraphics[width=\textwidth]{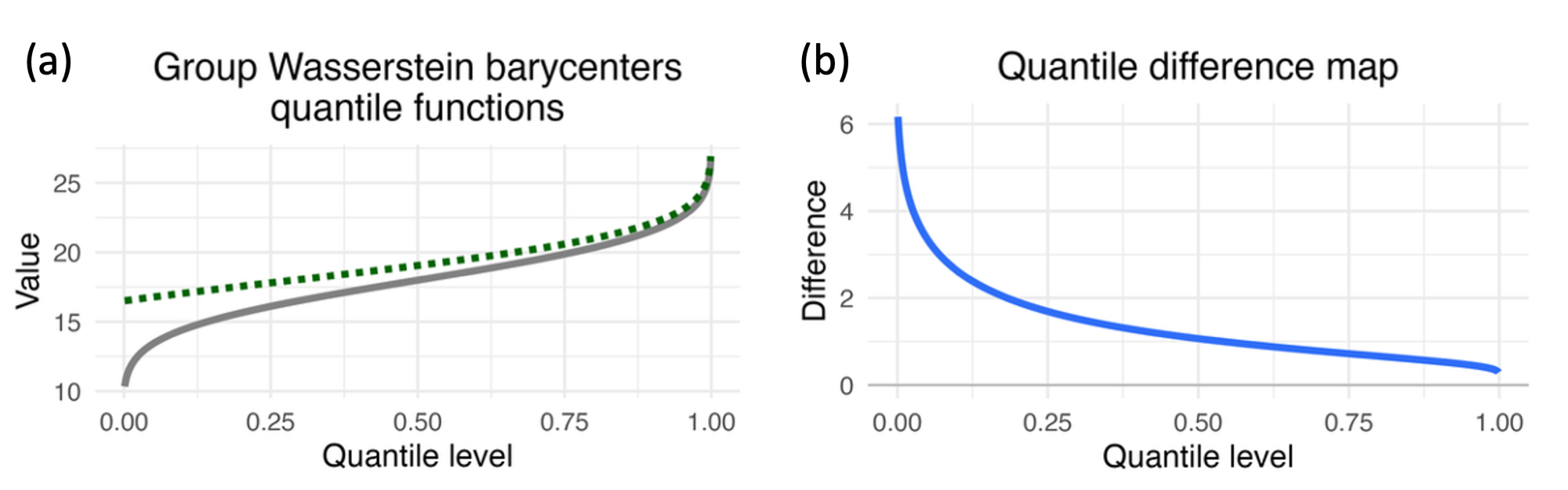}
    \caption{(a) Two group Wasserstein barycenter quantile functions (Group 1: solid gray; Group 2: dotted green), and (b) their associated quantile difference map (comparing Group 2 versus Group 1).  In this example, the 0.01 quantile of the Group 2 barycenter, representing a near-minimum sequence feature value, is approximately six units higher than that of Group 1. In contrast, the median feature value in Group 2 is only approximately one unit higher than in Group 1. In the upper tail, the two barycenters converge, with their maximum feature values being approximately equal.}    \label{fig:quantile_diff}
\end{figure}

\paragraph{Task 3: Hypothesis testing for equal Wasserstein barycenters.} Finally, we require a hypothesis test to determine whether the Wasserstein barycenters differ between groups. Specifically, we test the null hypothesis that the Wasserstein barycenters of the within-host viral feature distributions are identical across groups. Equivalently, this null asserts that the difference in barycenter quantiles between the two groups is zero across all quantile levels in $(0, 1)$.

By representing the barycenter distribution via its quantile function, this problem can be viewed through the lens of functional data analysis, where tests evaluating equality of functionals means between groups are well-studied \citep{horvath_inference_2012}. For distributional data specifically, two-sample tests have been explored by a handful of past works. \citet{dubey_frechet_2019} propose an ANOVA-type test evaluating the equality of barycenters and Fréchet variances; \citet{zhang_bsde_2022} use  a permutation test for the strong null of exchangeability between groups; and \citet{jeong_wasserstein_2026} derive the asymptotic limiting distribution of a scaled, squared Wasserstein distance test statistic. Additionally, \citet{petersen_wasserstein_2021} examine testing in a Wasserstein regression context, which, in its simplest form, can be reduced to a two-sample barycenter problem. Our approach aligns most closely with \citet{jeong_wasserstein_2026}. However, we propose more flexible test statistics designed to target various alternative data scenarios that could arise in practice.

\subsection{Illustrative toy examples}

To build intuition, Figure \ref{fig:toy_scenarios} illustrates four toy scenarios alongside the corresponding outputs from our framework, which we refer to as a \textit{distributional analysis}. In each scenario, the left panel shows the individual-level feature distributions and their group-specific Wasserstein barycenters, the middle panel shows the corresponding quantile functions, and the right panel shows the resulting difference in barycenter quantiles between groups. We contrast this approach with a \textit{mean-based analysis}, which relies on a classical two-sample mean comparison after computing the average feature value for each individual.

\begin{enumerate}
\item[(a)] \textit{Similar distributions (Figure \ref{fig:toy_scenarios}(a)).} Individuals in Group 1 exhibit sequence feature distributions similar to those in Group 2, resulting in similar barycenters. The distributional analysis estimates negligible differences between the two group barycenters, producing a quantile difference curve nearly flat at 0 across all quantile levels. Similarly, the mean-based analysis estimates a near-zero difference between group means.

\item[(b)] \textit{Uniform differences (Figure \ref{fig:toy_scenarios}(b)).} Group 2 individuals generally exhibit higher sequence feature values than Group 1, resulting in a uniform difference between the two group barycenters across quantile levels. Consequently, the distributional analysis yields an approximately uniform positive difference across all quantiles. The mean-based analysis similarly estimates a positive difference in means of roughly the same magnitude.

\item[(c)] \textit{Upper-tail difference (Figure \ref{fig:toy_scenarios}(c)).}  Group 2 individuals exhibit more extreme feature values in the upper tail, but the distributions otherwise resemble those from Group 1. Therefore, for most quantile levels, the Group 1 and Group 2 barycenters are similar, but they diverge at the upper tail. The distributional analysis captures this by producing near-zero estimates for most quantile levels but a positive difference at the upper tail. In contrast, the mean-based analysis yields an attenuated estimate, which averages this positive tail difference with the near-zero difference in other parts of the distribution.

\item[(d)] \textit{Difference in variability (Figure \ref{fig:toy_scenarios}(d)).} Group 1 individuals tend to have distributions with more variability compared to Group 2. Therefore, the Group 1 barycenter has a higher variance than the Group 2 barycenter, despite sharing the same overall mean. Here, the distributional analysis reveals a positive difference at the lower tail, a negative difference at the upper tail, and an approximately zero difference at the 0.5 quantile. In contrast, the mean-based analysis estimates a null difference between group means.
\end{enumerate}

In scenarios (a) and (b), both the mean-based and distributional  approaches yield similar results and may have comparable power to detect group differences. In scenario (c), however, the distributional approach can be more sensitive to the upper-tail difference and also characterizes a feature of the barycenter difference that the mean-based analysis obscures. Finally, in scenario (d), the mean-based approach will fail to detect the difference in variability, whereas the distributional approach can capture this difference. Together, these examples highlight two advantages of our proposal: it can have power to detect distributional differences that mean-based analyses may miss, and it provides a higher resolution comparison between groups.

\begin{figure}[!htpb]
    \centering
    \includegraphics[width=0.95\textwidth]{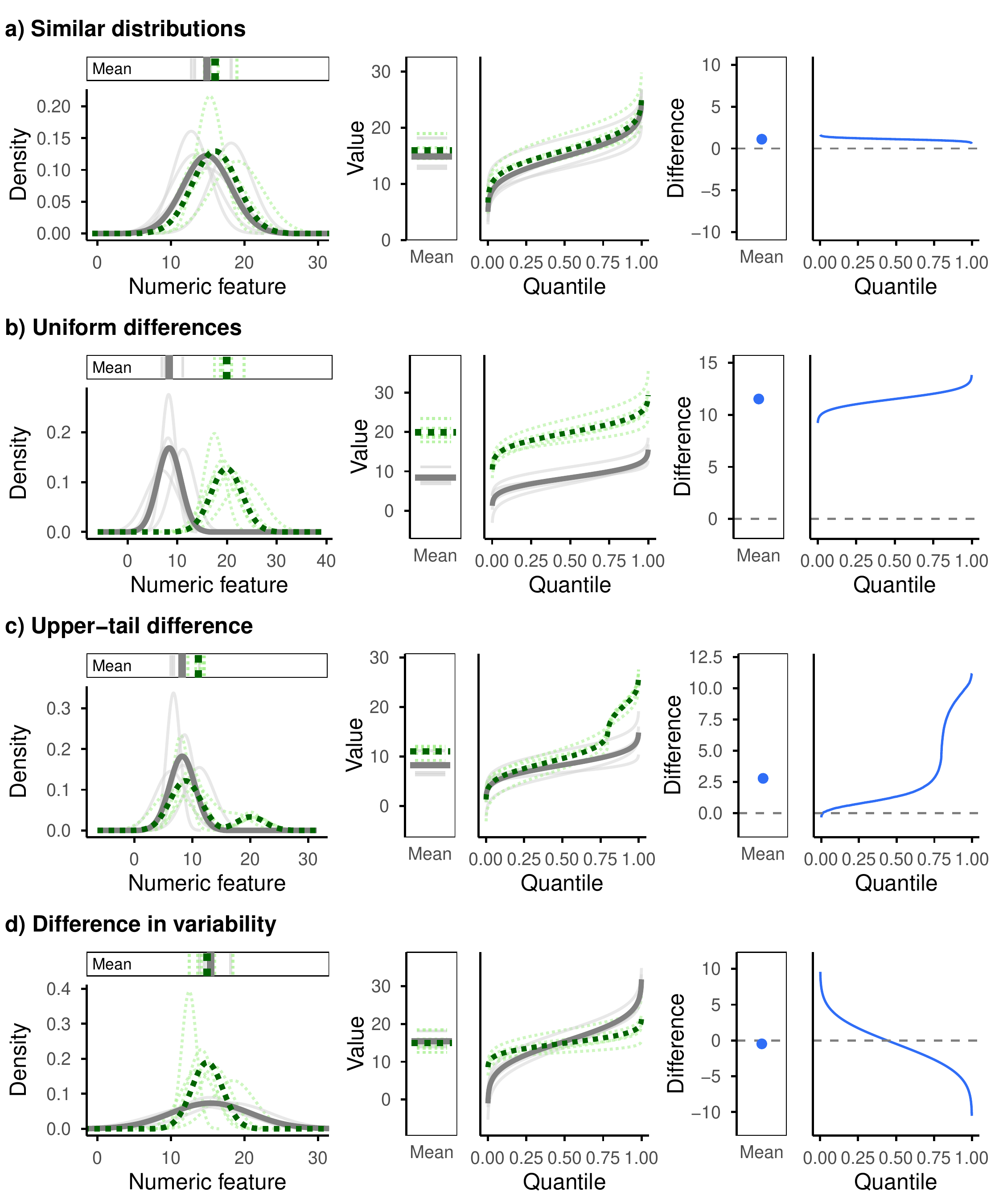}
    \caption{Output of a distributional analysis for four hypothetical toy datasets. In each scenario, there are five participants in each group. In the left panels, light curves show the densities of numeric features for each individual (Group 1, gray solid; Group 2, green dotted), while bold curves represent the group-specific Wasserstein barycenters. The middle panels show the corresponding individual-level and group barycenter quantile functions. The right panels display the difference in quantiles between the Group 2 and Group 1 barycenters. The inner panels labeled ``Mean" show the corresponding mean-based analysis, comparing the average feature values between groups. Panels illustrate the following scenarios: (a) similar distributions, (b) approximately uniform differences, (c) upper-tail difference, and (d) difference in variability.}
\label{fig:toy_scenarios}
\end{figure}

\section{Methodology} \label{sec:methodology} 

\subsection{Data setup}

For each individual $i$, let $G_i \in \{1,2\}$ denote the binary group indicator and $Y_i$ denote the individual's underlying within-host distribution of a numeric sequence feature (continuous or discrete). Note that $Y_i$, which is a realization of a \textit{distribution-valued random object}, is not directly observed; we instead observe $m_i$ measurements $Y_{i1},\ldots,Y_{im_i}$ sampled from this distribution, where $m_i$ denotes the sequencing depth. We assume that, conditional on $Y_i$, $Y_{i1},\ldots,Y_{im_i} \stackrel{\mathrm{i.i.d.}}{\sim} Y_i$. We allow the distribution of sequencing depth $m_i$ to differ in each group, so that $m_i \mid G_i=g \sim \eta_g$ for $g \in \{1,2\}$, where $\eta_g$ denotes the distribution of sequencing depths for Group $g$. In total, we observe $n$ individuals in our dataset, with $n_1$ in Group 1 ($G = 1$) and $n_2$ in Group 2 ($G = 2$).

From the observed measurements for individual $i$, we can construct the empirical distribution $\widehat{Y}_i^{\mathrm{emp}} = \frac{1}{m_i}\sum_{j=1}^{m_i}\delta_{Y_{ij}}$. We also consider a smoothed version of this empirical distribution, denoted $\widehat{Y}_i^{\mathrm{sm}}$, obtained by applying a smoothing procedure to $\widehat{Y}_i^{\mathrm{emp}}$, such as kernel density smoothing or smoothing of the quantile function \citep{bigot_upper_2018}. Both $\widehat{Y}_i^{\mathrm{emp}}$ and $\widehat{Y}_i^{\mathrm{sm}}$ can be viewed as error-prone proxies for $Y_i$, with the magnitude of error depending on the sequencing depth $m_i$. Because the sequencing-depth distributions $\eta_1$ and $\eta_2$ may differ between groups, the amount of measurement error may also differ between groups. As shown in the following sections, this measurement error can bias estimation of the target estimands.

For a generic probability measure $\lambda$ on the real line, let $F_{\lambda}(y)$ denote its cumulative distribution function for $y \in \mathbb{R}$, and let $F_{\lambda}^{-}(t)$ denote its quantile function for $t \in (0,1)$. Because the individual-level outcome is distribution-valued, we measure distances between distributions using the $2$-Wasserstein metric. In optimal transport, the Wasserstein distance quantifies the minimum cost of transporting probability mass from one distribution to another; in the case of the 2-Wasserstein distance on Euclidean space, this cost is based on squared Euclidean distance. For one-dimensional distributions, the $2$-Wasserstein distance between $\lambda_1$ and $\lambda_2$, denoted by $W_2(\lambda_1, \lambda_2)$, has a convenient representation in terms of their quantile functions:
\begin{equation}
W_2(\lambda_1,\lambda_2)
:=
\left(
\int_0^1
\left|
F_{\lambda_1}^{-}(t)-F_{\lambda_2}^{-}(t)
\right|^2
\,dt
\right)^{1/2}.
 \label{eq:w2_distance}
\end{equation}

Using this metric, we define the Wasserstein barycenter of the distribution-valued random object $Y$ as the probability measure $\mathbb{B}Y$ that minimizes the expected squared Wasserstein distance between a random realization of $Y$ and a candidate probability measure $\nu$:
\begin{equation*}
\mathbb{B}Y
:=
\arg\min_{\nu}
\mathbb{E}
\left[
W_2^2(Y,\nu)
\right],
\end{equation*}
where the minimum is taken over probability measures $\nu$, and the expectation is taken with respect to the randomness in $Y$. We denote the corresponding quantile function by  $F_{\mathbb{B}Y}^{-}(t)$. More generally, for a generic random variable $X$, we define the conditional Wasserstein barycenter of $Y$ given $X$ by
\begin{equation*}
\mathbb{B}[Y \mid X]
:=
\arg\min_{\nu}
\mathbb{E}
\left[
W_2^2(Y,\nu)\mid X
\right].
\end{equation*}

By the one-dimensional representation of the Wasserstein distance in \eqref{eq:w2_distance}, the quantile function of the Wasserstein barycenter is given by the pointwise mean of the quantile functions of the component distributions:
\begin{equation}
F_{\mathbb{B}Y}^{-}(t)
=
\mathbb{E}\left[F_Y^{-}(t)\right],
\qquad
t \in (0,1).
\label{eq:barycenter_quantile_translation}
\end{equation}

Because of this representation, we frame our methodology around estimating the Wasserstein barycenter's quantile function. In Task 1, we wish to estimate the group-specific Wasserstein barycenter quantile function
\begin{equation}
\mu_g(t)
:=
F^{-}_{\mathbb{B}[Y \mid G=g]}(t),
\qquad
t \in (0,1), \quad g \in \{1,2\}.
\end{equation}

In Task 2, we estimate the quantile difference map, which compares the quantiles of the group-specific Wasserstein barycenters between the two groups:
\begin{equation}
\Delta(t)
:=
\mu_2(t)-\mu_1(t),
\qquad
t \in (0,1).
\label{eq:statistical_estimand}
\end{equation}

Finally, in Task 3, our hypothesis test evaluates the global null hypothesis that the two group-specific barycenter quantile functions are equal at every quantile level in a compact set $\mathcal{T} \subset (0,1)$:
\begin{equation}
H_0:\Delta(t)=0 \text{ for all } t \in \mathcal{T}.
\end{equation}

\begin{remark}[Simple random sample assumption] \label{remark:simple_random_sample}
    The assumption $Y_{i1},\ldots,Y_{im_i} \stackrel{\mathrm{i.i.d.}}{\sim} Y_i$ may not hold depending on the sequencing technology and processing procedure used. For example, PCR resampling, amplification bias, sequencing error, and bioinformatic filtering can cause the observed sequences to deviate from independent draws from the biological viral population. In such settings, we assume that $Y_{i1},\ldots,Y_{im_i} \stackrel{\mathrm{i.i.d.}}{\sim} \tilde{Y}_i$, where $\tilde{Y}_i$ can be interpreted as the distribution of sequence-derived features after the sequencing and processing procedure, rather than the underlying biological viral population. Molecular barcoding approaches, such as Primer ID and related unique molecular identifier-based methods, can reduce PCR resampling artifacts and sequencing errors, improving the extent to which observed sequences reflect the viral template population present in the biological sample \citep{zhou_unique_2022}.
\end{remark}

\subsection{Task 1: Estimation and inference for the Wasserstein barycenter}

We aim to estimate the group-specific barycenter quantile functions $\mu_1(t)$ and $\mu_2(t)$. We first consider two estimators based on the empirical distributions $\widehat{Y}^{\mathrm{emp}}_i$ and $\widehat{Y}^{\mathrm{sm}}_i$ introduced above. Let
$$
Q_i^{\mathrm{emp}}(t) := F^{-}_{\widehat{Y}^{\mathrm{emp}}_i}(t),
\qquad
Q_i^{\mathrm{sm}}(t) := F^{-}_{\widehat{Y}^{\mathrm{sm}}_i}(t),
\qquad t \in (0,1),
$$
denote the empirical and pre-smoothed empirical quantile functions for individual $i$, respectively. The first estimator is the \textit{empirical} Wasserstein barycenter, which averages the empirical quantile functions within each group. Specifically, for $g \in \{1,2\}$, we define
\begin{equation}
\widehat{\mu}^{\mathrm{emp}}_g(t)
:=
\frac{1}{n_g}
\sum_{i=1}^n \mathbbm{1}\{G_i=g\}\,Q_i^{\mathrm{emp}}(t),
\qquad t \in (0,1).
\label{eq:empirical_barycenter}
\end{equation}

The second is a \textit{pre-smoothed empirical} Wasserstein barycenter, obtained by replacing each individual's empirical distribution with its smoothed version before averaging the corresponding quantile functions:
\begin{equation}
\widehat{\mu}^{\mathrm{sm}}_g(t)
:=
\frac{1}{n_g}
\sum_{i=1}^n \mathbbm{1}\{G_i=g\}\,Q_i^{\mathrm{sm}}(t),
\qquad t \in (0,1).
\label{eq:presmoothed_barycenter}
\end{equation}

These two empirical estimators perform well when sequencing depths are uniformly high, because each individual's empirical quantile function is an accurate approximation to their true unobserved quantile function. When sequencing depth is low, however, the empirical quantile function for each individual can be biased as an estimator of the corresponding true quantile function, and this can be especially pronounced in the tails \citep{bobkov_one-dimensional_2019}. As a result, averaging these biased quantile functions across individuals produces a biased estimator of the group-specific barycenter quantile function, which does not go away with increasing sample size \citep{bigot_upper_2018, peng_estimating_2026}. Although pre-smoothing can reduce sampling noise, it will not generally remove this bias. 

Motivated by this problem, \citet{peng_estimating_2026} studied Wasserstein barycenter estimation under sparse sampling, where each individual distribution is observed through only a small number of measurements. To address the limitations of individual-level empirical quantile estimation in this setting, the paper developed the marginal-constructed barycenter (MCB) estimator, which uses information pooled across individuals rather than attempting to accurately reconstruct each individual distribution separately. The approach relies on a non-informative sequencing depth assumption, meaning that sequencing depth for an individual is assumed to be independent of that individual's feature distribution, conditional on group. The MCB estimator is based on an alternative representation of the barycenter estimation problem through binomial mixture models, and can be configured with different structural models for these mixtures. While the asymptotic properties require that the binomial mixing distributions be correctly specified, more flexible implementations, such as spline-based mixture models, can be used in practice. (With these flexible implementations, misspecification will still be present; however, its effect on estimation and inference may be minor.) We denote the corresponding group-specific barycenter quantile estimator by $\widehat{\mu}_g^{\mathrm{mcb}}(t)$.

The three estimators are therefore suited to datasets with different sequencing depths. The empirical and pre-smoothed empirical barycenters are most appropriate when sequencing depths are uniformly large enough to support reliable estimation of individual-level quantiles. The MCB estimator is instead designed for sparse sampling settings, where individual-level distributions may be poorly estimated due to small within-subject sample sizes. In this case, the MCB estimator replaces reliance on accurate individual-level distribution estimation with structural assumptions on the marginal mixture models and on non-informative cluster sizes. We evaluate the performance of each estimator across datasets with different sequencing depths in our simulation studies in Section \ref{sec:simulations}.

Each estimator---$\widehat{\mu}^{\mathrm{emp}}_g(t)$, $\widehat{\mu}^{\mathrm{sm}}_g(t)$, and $\widehat{\mu}^{\mathrm{mcb}}_g(t)$---is pointwise asymptotically linear ($n_g \rightarrow \infty$) for each $t \in (0,1)$ under estimator-specific regularity conditions; these conditions are stated explicitly in Appendix \ref{sec:asymptotic_conditions}. Briefly, $\widehat{\mu}^{\mathrm{emp}}_g(t)$ and $\widehat{\mu}^{\mathrm{sm}}_g(t)$ are pointwise asymptotically linear when the estimated quantile function for each individual distribution converges sufficiently quickly to the corresponding true quantile function. These conditions may require sequencing depths $m_i$ to increase sufficiently quickly with $n_g$, so that the average per-individual quantile estimation error is negligible at the $n_g^{-1/2}$ scale. In contrast, the MCB estimator does not require sequencing depths to increase with $n_g$ and is pointwise asymptotically linear when the binomial mixture models used in the estimation procedure are correctly specified. For notational simplicity, let $\widehat{\mu}_g(t)$ denote any one of the estimators $\widehat{\mu}^{\mathrm{emp}}_g(t)$, $\widehat{\mu}^{\mathrm{sm}}_g(t)$, or $\widehat{\mu}^{\mathrm{mcb}}_g(t)$. Then, under the conditions in Appendix \ref{sec:asymptotic_conditions}, we have pointwise asymptotic normality  for each fixed $t \in (0,1)$:
\begin{equation}
\sqrt{n_g}\left\{\widehat{\mu}_g(t)-\mu_g(t)\right\}
\xrightarrow{d}
N\!\left(0,\sigma_g^2(t)\right),
\qquad
\sigma_g^2(t)
=
\operatorname{Var}\!\left\{\phi_g(O_i;t)\mid G_i=g\right\},
\label{eq:mu_pointwise_clt}
\end{equation} 
where $\xrightarrow{d}$ denotes convergence in distribution.
Accordingly, pointwise asymptotically valid confidence intervals for $\mu_g(t)$ can be constructed as
\begin{equation}
\widehat{\mu}_g(t)
\pm
z_{1-\alpha/2}\,
\frac{\widehat{\sigma}_g(t)}{\sqrt{n_g}},
\label{eq:mu_pointwise_ci}
\end{equation}
where $\widehat{\sigma}_g(t)$ is a consistent estimator of $\sigma_g(t)$.

The pointwise asymptotic normality result above provides valid inference for $\mu_g(t)$ at any fixed quantile level $t$. However, pointwise results alone do not characterize the joint behavior of $\widehat{\mu}_g(t)$ across quantile levels and therefore cannot be used directly to construct simultaneous confidence bands or global hypothesis tests. For these purposes, it is useful to view the function $t \mapsto \mu_g(t)$ itself as the estimand and to establish convergence of the estimated function as a stochastic process. Under additional regularity conditions, also stated in Appendix \ref{sec:asymptotic_conditions}, the estimator $\widehat{\mu}_g$ converges weakly as a stochastic process. Specifically, for any compact set $\mathcal{T} \subset (0,1)$,
\begin{equation}
\sqrt{n_g}\left\{\widehat{\mu}_g-\mu_g\right\}
\rightsquigarrow
\mathbb{G}_g
\qquad \text{in } \ell^\infty(\mathcal{T}),
\label{eq:mu_weak_convergence}
\end{equation}
where $\rightsquigarrow$ denotes weak convergence of stochastic processes and $\mathbb{G}_g$ is a tight mean-zero Gaussian process with covariance kernel $\Sigma_g(s,t)$. This result allows us to perform uniform inference over $t \in \mathcal{T}$. In particular, let $\sigma_g(t) = \{\Sigma_g(t,t)\}^{1/2}$ and let $c_{1-\alpha,g}$ denote the $(1-\alpha)$ quantile of $\sup_{t \in \mathcal{T}} |\mathbb{G}_g(t)/\sigma_g(t)|$. Then an asymptotic $(1-\alpha)$ uniform confidence band for $\mu_g$ based on the studentized process is given by
\begin{equation}
\left\{
\widehat{\mu}_g(t)
\pm
\widehat{c}_{1-\alpha,g}
\frac{\widehat{\sigma}_g(t)}{\sqrt{n_g}}
:
t \in \mathcal{T}
\right\},
\label{eq:mu_uniform_band}
\end{equation}
where $\widehat{\sigma}_g(t)$ and $\widehat{c}_{1-\alpha,g}$ are consistent estimators of $\sigma_g(t)$ and $c_{1-\alpha,g}$, respectively.

\subsection{Task 2: Estimation and inference for the difference in quantiles}

Next, we consider estimation and inference for the quantile difference map $\Delta(t)$ defined in \eqref{eq:statistical_estimand}. Pointwise asymptotic normality of $\widehat{\Delta}(t)$, constructed using any of $\widehat{\mu}^{\mathrm{emp}}_g(t)$, $\widehat{\mu}^{\mathrm{sm}}_g(t)$, or $\widehat{\mu}^{\mathrm{mcb}}_g(t)$, follows from the asymptotic normality of the group-specific estimators together with the continuous mapping theorem and independence of both groups. Let $n=n_1+n_2$, and suppose that $n_g/n \to \pi_g \in (0,1)$ for $g \in \{1,2\}$. Then, for each fixed $t \in (0,1)$,
\begin{equation}
\sqrt{n}\left\{\widehat{\Delta}(t)-\Delta(t)\right\}
\xrightarrow{d}
N\!\left(0,\sigma_\Delta^2(t)\right),
\label{eq:delta_pointwise_clt}
\end{equation}
where, under independence of the two groups, $\sigma_\Delta^2(t)
= \frac{\sigma_1^2(t)}{\pi_1} + \frac{\sigma_2^2(t)}{\pi_2}$.

Weak convergence in $\ell^\infty(\mathcal{T})$ of $\widehat{\Delta}$ follows from weak convergence of the group-specific estimators together with the continuous mapping theorem. Specifically,
\begin{equation}
\sqrt{n}\left\{\widehat{\Delta}-\Delta\right\}
\rightsquigarrow
\mathbb{G}_\Delta
\qquad \text{in } \ell^\infty(\mathcal{T}),
\label{eq:delta_fclt}
\end{equation}
where $\mathbb{G}_\Delta$ is a mean-zero Gaussian process with covariance kernel $\Sigma_\Delta(s,t)$.
Under independence of the two groups,
\begin{equation*}
\Sigma_\Delta(s,t)
=
\frac{\Sigma_1(s,t)}{\pi_1} + \frac{\Sigma_2(s,t)}{\pi_2}
,
\qquad s,t \in \mathcal{T}.
\label{eq:delta_cov_kernel_sum}
\end{equation*}
These results can be used to derive both asymptotically valid pointwise confidence intervals and uniform confidence bands, similar to \eqref{eq:mu_pointwise_ci} and \eqref{eq:mu_uniform_band} in the previous section, respectively.

\subsection{Task 3: Hypothesis testing}

Finally, we describe hypothesis testing for the global null hypothesis $H_0: \Delta(t) = 0 \text{ for all } t \in \mathcal{T}$. By the weak convergence result in \eqref{eq:delta_fclt}, the estimated difference process satisfies $\sqrt{n}\left\{\widehat{\Delta}-\Delta\right\}
\rightsquigarrow
\mathbb{G}_\Delta$ in $\ell^\infty(\mathcal{T})$. Under $H_0$, $\Delta(t)=0$ for all $t \in \mathcal{T}$, so the scaled observed process $\sqrt{n}\,\widehat{\Delta}$ converges weakly to $\mathbb{G}_\Delta$.

A global test statistic may be constructed by applying a real-valued functional to the estimated difference process. That is, for a functional $\phi:\ell^\infty(\mathcal{T})\to\mathbb{R}$, we may consider statistics of the form $T = \phi(\sqrt{n}\widehat{\Delta})$. If $\phi$ is continuous in $ \ell^\infty(\mathcal{T})$, then under $H_0$, $T \rightsquigarrow \phi(\mathbb{G}_\Delta)$ by the continuous mapping theorem. Therefore, critical values can be obtained from the limiting Gaussian process, or in practice from a simulation-based procedure that consistently approximates its distribution \citep[Chapter 10]{kosorok_introduction_2008}. 

We consider three global test statistics based on this weak convergence result. The first two are based on the studentized difference process. Let $\widehat{Z}_\Delta(t)=\sqrt{n}\widehat{\Delta}(t)/\{\widehat{\Sigma}_\Delta(t,t)\}^{1/2}$ when $\widehat{\Sigma}_\Delta$ is an estimate of the covariance kernel of the limiting process $\mathbb{G}_\Delta$. We define the supremum test statistic $T_\infty$ and the integrated squared test statistic $T_{\mathrm{int}}$ as
\begin{equation}
T_\infty
:=
\sup_{t \in \mathcal{T}}
\left|
\widehat{Z}_\Delta(t)
\right|
\qquad
T_{\mathrm{int}}
:=
\int_{\mathcal{T}}
\widehat{Z}_\Delta(t)^2\,dt.
\label{eq:test_stats}
\end{equation}

The third statistic $T_M$ is based on the squared functional Mahalanobis semi-distance of the estimated difference process, analogous to Hotelling's $T^2$ test but for functional data \citep{joseph_two-sample_2015}. Let $\widehat{\mathcal{C}}_\Delta$ denote the covariance operator on $L^2(\mathcal{T})$ induced by the estimated covariance kernel $\widehat{\Sigma}_\Delta$, such that $(\widehat{\mathcal{C}}_\Delta f)(s)
:=
\int_{\mathcal{T}}
\widehat{\Sigma}_\Delta(s,t)f(t)\,dt.$
Then,
\begin{equation}
T_M
:=
n
\left\langle
\widehat{\Delta},
\widehat{\mathcal{C}}_{\Delta}^{-1}\widehat{\Delta}
\right\rangle.
\label{eq:test_mahalanobis}
\end{equation}

In practice, we do not use the inverse covariance operator $\widehat{\mathcal{C}}_{\Delta}^{-1}$ directly, since covariance inversion is typically unstable or ill-posed in infinite-dimensional function spaces. We therefore compute a regularized version of $T_M$ using a truncated spectral inverse based on the first $K$ empirical functional principal components of $\widehat{\mathcal{C}}_{\Delta}$. Specifically, if $\widehat{\lambda}_1,\ldots,\widehat{\lambda}_K$ and $\widehat{\psi}_1,\ldots,\widehat{\psi}_K$ denote the first $K$ empirical eigenvalues and eigenfunctions of $\widehat{\mathcal{C}}_{\Delta}$, then we compute
\begin{equation}
T_M^{(K)}
:=
n
\sum_{k=1}^{K}
\frac{
\left\langle
\widehat{\Delta},
\widehat{\psi}_k
\right\rangle^2
}{
\widehat{\lambda}_k
}.
\label{eq:test_mahalanobis_regularized}
\end{equation}

\begin{table}[ht]
\centering
\footnotesize
\setlength{\tabcolsep}{3pt}
\renewcommand{\arraystretch}{1.4}

\begin{tabular}{>{\centering\arraybackslash}m{0.13\textwidth}
                >{\centering\arraybackslash}m{0.22\textwidth}
                >{\centering\arraybackslash}m{0.22\textwidth}
                >{\centering\arraybackslash}m{0.23\textwidth}
                >{\centering\arraybackslash}m{0.16\textwidth}}
\hline
Statistic & Test Statistic & Test Statistic (on grid $\{t_1, ..., t_m\}$) & Sensitive to... & Figure \\
\hline

Supremum $(T_{\infty})$
&
$
\sup_{t \in \mathcal{T}}
\left|
\frac{\widehat{\Delta}(t)}
{\widehat{\mathrm{se}}\{\widehat{\Delta}(t)\}}
\right|
$
&
$
\max_{1 \leq j \leq k}
\left|
\frac{\widehat{\Delta}(t_j)}
{\widehat{\mathrm{se}}\{\widehat{\Delta}(t_j)\}}
\right|
$
&
Localized differences at one or a small number of quantile levels
&
\includegraphics[width=1.3in,height=2in,keepaspectratio]{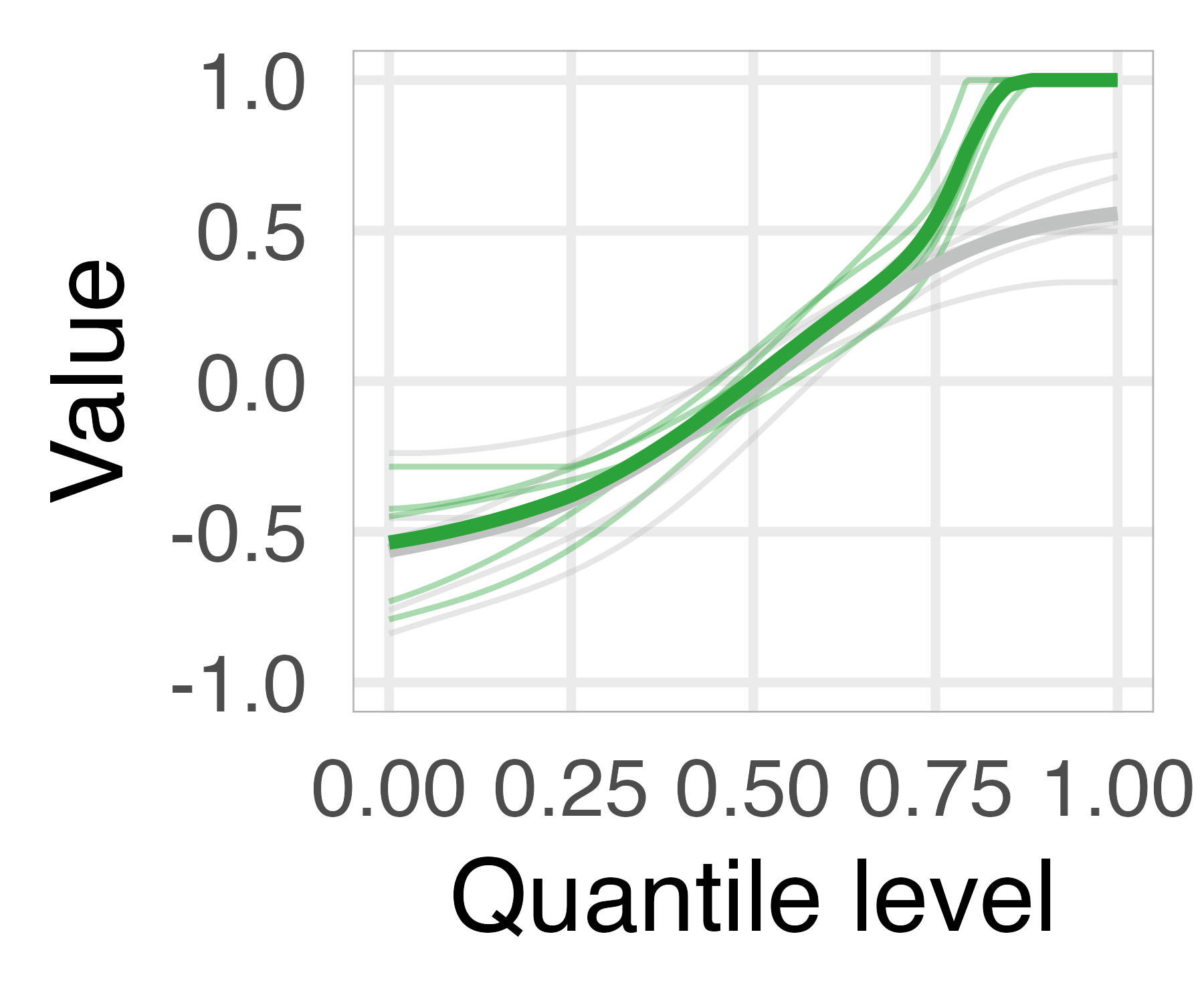}
\\

Integrated squared $(T_{\mathrm{int}})$
&
$
\int_\mathcal{T}
\left[
\frac{\widehat{\Delta}(t)}
{\widehat{\mathrm{se}}\{\widehat{\Delta}(t)\}}
\right]^2
\, dt
$
&
$
\sum_{j=1}^{m} w_j
\left[
\frac{\widehat{\Delta}(t_j)}
{\widehat{\mathrm{se}}\{\widehat{\Delta}(t_j)\}}
\right]^2
$
&
Broad differences spread across the quantile curve
&
\includegraphics[width=1.3in,height=2in,keepaspectratio]{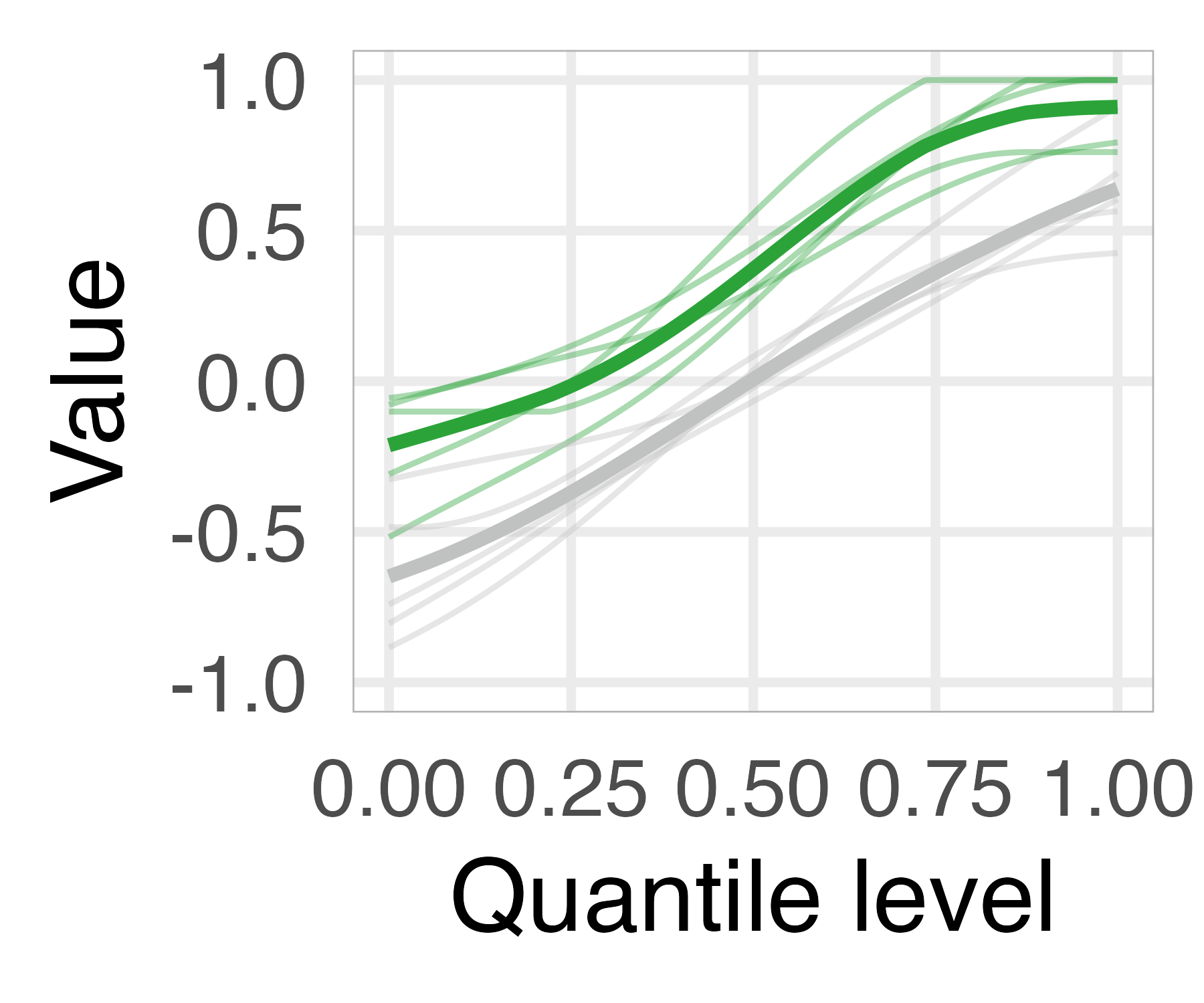}
\\

Hotelling-type $(T_M)$
&
$
n
\left\langle
\widehat{\Delta},
\widehat{\mathcal{C}}_{\Delta}^{-1}\widehat{\Delta}
\right\rangle
$
&
$
n
\widehat{\boldsymbol{\Delta}}^\top
\widehat{\boldsymbol{\Sigma}}_{\Delta,K}^{-1}
\widehat{\boldsymbol{\Delta}}
$
&
Differences after accounting for covariance across quantile levels
&
\includegraphics[width=1.3in,height=2in,keepaspectratio]{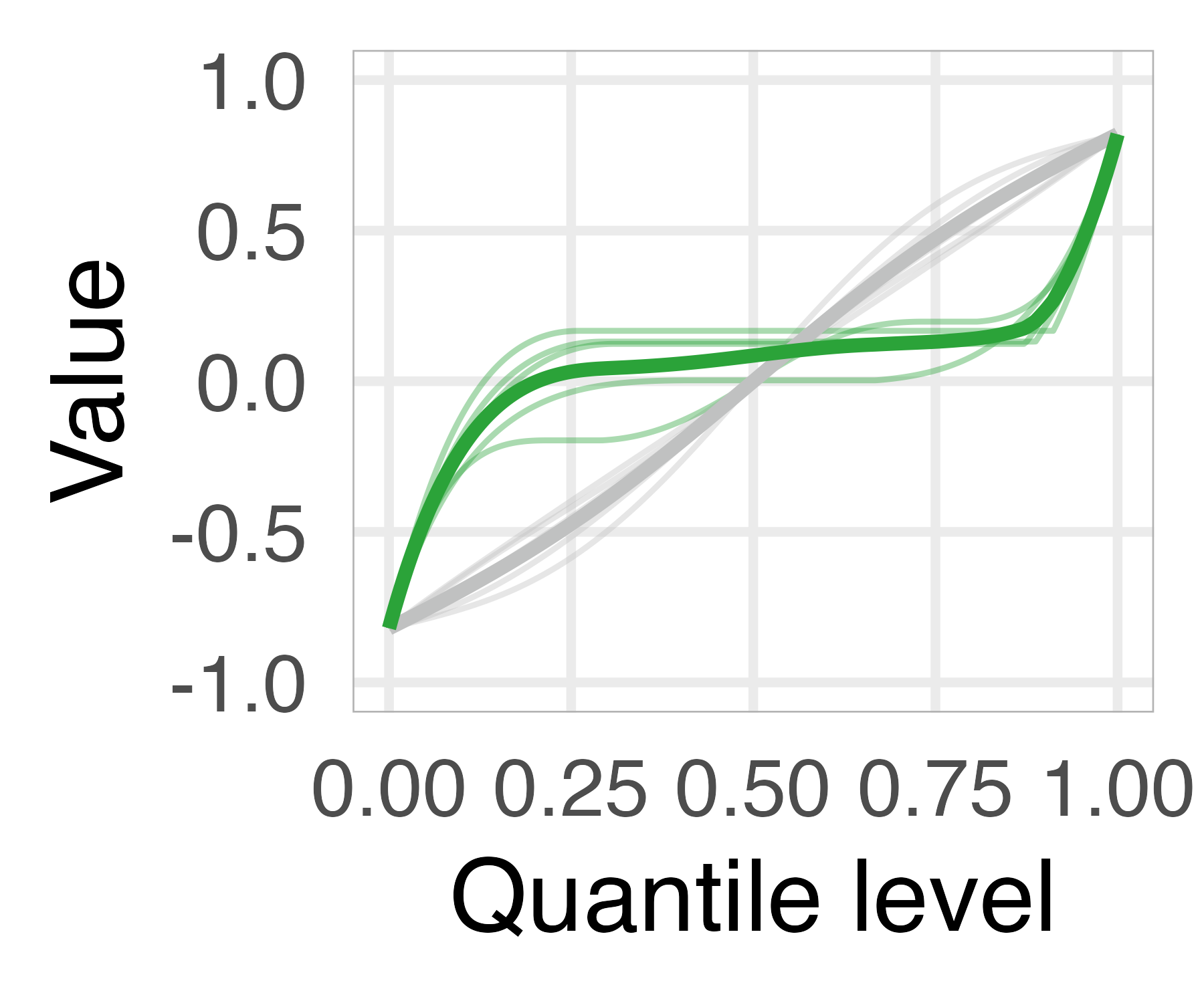}
\\

\hline
\end{tabular}
\caption{
Candidate test statistics for comparing group-specific barycenter quantile functions and schematic alternatives to which each statistic is expected to be sensitive. The right-hand column shows toy examples of where the test statistic should be most powerful, with light lines representing individual-level quantile functions and dark lines representing the corresponding group barycenter quantile functions. Group 1 is shown in gray and Group 2 in green.
}\label{tab:hypothesis_testing}
\end{table}

The three test statistics are sensitive to different alternatives. The supremum statistic is most useful for detecting localized differences concentrated at a small number of quantile levels. The integrated squared statistic aggregates evidence across the quantile curve and is useful for detecting accumulated departures from the null. The Hotelling-type statistic accounts for the covariance structure across quantile levels and may be more powerful when the signal lies along low-variance directions of the estimated process (Table \ref{tab:hypothesis_testing}). We will evaluate the performance of each of these test statistics in our simulation study in Section \ref{sec:simulations}.

\begin{remark}[Other test statistics]
Other test statistics can be constructed from the estimated difference process depending on the alternatives of interest. For example, one-sided supremum statistics may be useful for directional alternatives, while weighted versions of $T_{\mathrm{int}}$ can place greater emphasis on specific regions of the quantile function, such as the tails.
\end{remark}

\subsection{Implementation on a grid}

In practice, the procedure can be implemented using a finite-dimensional representation of the
estimated quantile functions. In our implementation, the functions are evaluated on a common
dense grid in the following steps:
\begin{enumerate}
    \item Choose a grid $t_1,\ldots,t_m \in (0,1)$.

    \item Estimate the Wasserstein barycenter quantile function for each group on the grid using
    $\widehat{\mu}^{\mathrm{emp}}_g(t)$, $\widehat{\mu}^{\mathrm{sm}}_g(t)$, or
    $\widehat{\mu}^{\mathrm{mcb}}_g(t)$. Denote the resulting generic estimator by
    $\widehat{\mu}_g(t)$, and write its grid representation as
    \begin{equation*}
    \widehat{\boldsymbol{\mu}}_g
    =
    \left\{
    \widehat{\mu}_g(t_1),\ldots,\widehat{\mu}_g(t_m)
    \right\}^\top .
    \end{equation*}
    Let $\widehat{\boldsymbol{\Sigma}}_g$ denote the estimated covariance matrix of the limiting process
$\sqrt{n_g}(\widehat{\boldsymbol{\mu}}_g-\boldsymbol{\mu}_g)$. For the empirical and pre-smoothed empirical barycenters, this covariance matrix can be estimated from the sample covariance of the individual-level empirical quantile functions; for the MCB estimator, it can be estimated using a non-parametric bootstrap.

    \item Estimate the difference vector as
    \begin{equation*}
    \widehat{\boldsymbol{\Delta}}
    =
    \widehat{\boldsymbol{\mu}}_2
    -
    \widehat{\boldsymbol{\mu}}_1
    =
    \left\{
    \widehat{\Delta}(t_1),\ldots,\widehat{\Delta}(t_m)
    \right\}^\top .
    \end{equation*}
    Assuming the two groups are independent, estimate the covariance matrix of the limiting process
    $\sqrt{n}(\widehat{\boldsymbol{\Delta}}-\boldsymbol{\Delta})$ by
    \begin{equation*}
    \widehat{\boldsymbol{\Sigma}}_{\Delta}
    =
    \frac{\widehat{\boldsymbol{\Sigma}}_1}{\widehat{\pi}_1} + \frac{\widehat{\boldsymbol{\Sigma}}_2}{\widehat{\pi}_2},
    \end{equation*}
    where $\widehat{\pi}_g=n_g/n$.

    \item Compute the observed test statistics (Table \ref{tab:hypothesis_testing}). First define the pointwise
    standardized differences
    \begin{equation*}
    \widehat{Z}_{\Delta,j}
    =
    \frac{
    \sqrt{n}\widehat{\Delta}(t_j)
    }{
    \sqrt{\widehat{\Sigma}_{\Delta,jj}}
    },
    \qquad j=1,\ldots,m,
    \end{equation*}
    where $\widehat{\Sigma}_{\Delta,jj}$ is the $j$th diagonal element of
    $\widehat{\boldsymbol{\Sigma}}_{\Delta}$.

    The supremum statistic is computed as
    \begin{equation*}
    T_\infty
    =
    \max_{1 \leq j \leq m}
    \left|
    \widehat{Z}_{\Delta,j}
    \right|,
    \end{equation*}
    and the integrated squared statistic is approximated by
    \begin{equation*}
    T_{\mathrm{int}}
    =
    \sum_{j=1}^m
    w_j\widehat{Z}_{\Delta,j}^2,
    \end{equation*}
    where $w_j$ are numerical integration weights, depending on the chosen grid.

    The Hotelling-type statistic is computed using a regularized inverse of
    $\widehat{\boldsymbol{\Sigma}}_\Delta$:
    \begin{equation*}
    T_M
    =
    n
    \widehat{\boldsymbol{\Delta}}^\top
    \widehat{\boldsymbol{\Sigma}}_{\Delta,K}^{-1}
    \widehat{\boldsymbol{\Delta}}.
    \end{equation*}
    Here $\widehat{\boldsymbol{\Sigma}}_{\Delta,K}^{-1}$ is obtained by retaining the first
    $K$ empirical functional principal components of $\widehat{\boldsymbol{\Sigma}}_\Delta$,
    chosen to explain a prespecified proportion of the total variance.

    \item Approximate the null distributions. For $T_\infty$ and $T_{\mathrm{int}}$, the null
    distribution is obtained by simulating mean-zero Gaussian difference processes with covariance
    $\widehat{\boldsymbol{\Sigma}}_\Delta$:
    \begin{equation*}
    \boldsymbol{G}_b
    \sim
    N(0,\widehat{\boldsymbol{\Sigma}}_\Delta),
    \qquad b=1,\ldots,B.
    \end{equation*}
    Each simulated process is then standardized pointwise:
    \begin{equation*}
    Z_{b,j}
    =
    \frac{G_{b,j}}
    {\sqrt{\widehat{\Sigma}_{\Delta,jj}}},
    \qquad j=1,\ldots,m.
    \end{equation*}
    The statistics $T_\infty$ and $T_{\mathrm{int}}$ are recomputed for each
    $\boldsymbol{Z}_b :=(Z_b(t_1),\ldots,Z_b(t_J))^\top$ to estimate the null distribution for each test statistic. The Monte Carlo $p$-value is the proportion of simulated null
    statistics greater than or equal to the observed statistic.

    For the Hotelling-type statistic, the null distribution can be approximated using the simulated process or the asymptotic reference $\chi^2$-distribution \citep{joseph_two-sample_2015}
    \begin{equation*}
    T_M^{(K)} \sim \chi^2_K,
    \end{equation*}
    where $K$ is the number of retained empirical principal components.
    \end{enumerate}

These procedures are implemented in an accompanying \textsf{R} package \texttt{mcbarycenter}, which provides all three barycenter estimators, the corresponding difference-in-quantiles estimators, and the global testing procedures described above. A vignette illustrating the use of these methods is provided in Appendix~\ref{sec:vignette}. 

\begin{remark}[Multiplier bootstrap]
    In step 2 of the procedure above, if estimation of the covariance matrix $\widehat{\boldsymbol{\Sigma}}_g$ for $g \in \{1,2\}$ is infeasible due to the density of the grid or for other reasons, one could use the Gaussian multiplier bootstrap to approximate the null distribution of the test statistics $T_\infty$ and $T_{\mathrm{int}}$ \citep{chernozhukov_gaussian_2013}. 
\end{remark}

\section{Simulation studies}\label{sec:simulations} 

We conduct two simulation studies. The first examines how low sequencing depth, as well as differential sequencing depth between groups, can affect estimation and inference for Wasserstein barycenters. This study is intended to show performance of each estimator---the empirical Wasserstein barycenter, pre-smoothed empirical barycenter, or MCB estimator---under different sequencing depth settings. The second simulation study evaluates the proposed hypothesis testing procedure, focusing on the validity of each test statistic under the null and its power under a range of alternative scenarios.

In both simulation studies, we estimate each barycenter quantile function on a grid of quantile levels $\{0.01, 0.02, \ldots, 0.99\}$. We compare the following estimators:
\begin{enumerate}
    \item Empirical Wasserstein barycenter.
    \item Pre-smoothed empirical Wasserstein barycenter. Each individual distribution is pre-smoothed using kernel density estimation with a Gaussian kernel and a bandwidth selected by Silverman's rule of thumb \citep{silverman_density_1986}.
    \item Correctly specified MCB estimator. A Beta distribution specification is used to fit the binomial mixing distributions, which matches the Dirichlet process specification of the data generating process.
    \item Flexible but misspecified MCB estimator. The binomial mixing distributions are fit using splines \citep{efron_empirical_2016}. The spline basis consists of natural splines with 5 degrees of freedom, augmented with point mass basis functions at 0 and 1, and fit using a regularization parameter of 0.01.
\end{enumerate}
The correctly specified MCB estimator is included to evaluate performance when the structural assumptions required by the MCB estimator are satisfied. The spline-based MCB estimator is included to evaluate performance in a more realistic setting in which the mixing distributions may be misspecified.

\subsection{Simulation Study 1: impact of varying sequencing depth on estimation bias}

\subsubsection{Setup}

In the first simulation study, we examine how sequencing depth affects estimation and pointwise inference for the group-specific barycenter quantile functions and the quantile difference map. We consider sequencing depth settings roughly matching scenarios observed in the HIV-1 prevention studies described in the Introduction (Figure~\ref{fig:study_seq_depth}). The settings considered in the simulation are summarized in Table~\ref{tab:sim_seq_depth}.

\begin{table}[htbp]
\centering
\caption{Sequencing depth ranges for Simulation Study 1}
\label{tab:sim_seq_depth}
\begin{tabular}{lll}
\hline
\textbf{Sequencing depth scenario} & \textbf{Sequencing depth (Group 1)} & \textbf{Sequencing depth (Group 2)} \\
\hline
Low balanced        & 2 to 15  & 2 to 15 \\
Low differential            & 2 to 15  & 2 to 7 \\
High balanced       & 2 to 300 & 2 to 300 \\
High differential           & 2 to 300 & 2 to 150 \\
\hline
\end{tabular}

\vspace{0.5em}
\end{table}

For each sequencing depth setting, we consider two sequence feature scenarios:
\begin{enumerate}
    \item \textit{Null scenario}: The two groups have equal Wasserstein barycenters. In both groups, each individual-level distribution is generated from a Dirichlet process with base measure $\mathrm{Normal}(0,1)$ and concentration parameter 2.

    \item \textit{Alternative scenario}: The two groups have unequal Wasserstein barycenters. In Group 1, each individual-level distribution is generated from a Dirichlet process with base measure $\mathrm{Normal}(0,1)$ and concentration parameter 2. In Group 2, each individual-level distribution is generated from a Dirichlet process with concentration parameter 2 and a bimodal Normal mixture base measure, with 80\% weight on $\mathrm{Normal}(0,1)$ and 20\% weight on $\mathrm{Normal}(6,0.25^2)$.
\end{enumerate}

For each simulation setting, we generate observations for 50 individuals in each group using a two-stage mechanism. (Additional simulation studies with 20 and 100 individuals per group are included in Appendix \ref{sec:additional_sims}.) First, for each individual, we sample an individual-level distribution according to the sequence feature scenario described above. Second, we sample that individual's sequencing depth from a discrete uniform distribution over the corresponding range in Table~\ref{tab:sim_seq_depth}, and then sample the corresponding number of observations from the individual's distribution. Combining the four sequencing depth settings with the two sequence feature scenarios yields eight total simulation settings. The true group-specific Wasserstein barycenter quantile functions, and the resulting quantile difference map, are approximated by computing the empirical barycenter of one large dataset with 100{,}000 individuals per group and 1{,}000 sequences per individual. Performance is assessed in terms of bias, estimated standard error, and pointwise $95\%$ confidence interval coverage. Each simulation setting was repeated over 500 independently generated datasets, and estimator performance was summarized across these replicates. 

\subsubsection{Results}

Summary results for the bias, mean estimated standard error, and confidence interval coverage of the estimators of $\Delta(t)$ are shown in Figure~\ref{fig:sim_null_res} under the null and Figure~\ref{fig:sim_non_null_res} under the alternative. In each figure, results are averaged separately over the central quantiles ($0.06$--$0.94$) and the tail quantiles ($0.01$--$0.05$ and $0.95$--$0.99$).

We first consider the low, balanced sequencing depth setting. Under the null scenario, all estimators perform similarly well, with negligible bias and near-nominal confidence interval coverage despite the low sequencing depths. This is expected because, under the null and with the same sequencing depth distribution in the two groups, the bias in the estimated group-specific barycenters is the same across groups. As a result, when the two estimated barycenter quantile functions are differenced, this bias cancels, producing accurate estimates of $\Delta(t)=0$ across quantile levels.

Under the alternative scenario, however, differences among the estimators become  apparent. Both empirical barycenters exhibit larger bias than the MCB estimators, with the discrepancy most pronounced in the tail quantiles. The pre-smoothed empirical barycenter reduces bias relative to the empirical barycenter, but its coverage remains poor in the tails and it does not perform as well as the MCB estimators. The mean absolute bias for the tail quantiles was 0.454 for the empirical barycenter and 0.345 for the pre-smoothed empirical barycenter, compared with 0.030 and 0.064 for the correctly specified and spline-based MCB estimators, respectively. Confidence interval coverage in the tails was 71\% for the empirical barycenter and 65\% for the pre-smoothed empirical barycenter, compared with 93\% and 91\% for the correctly specified and spline-based MCB estimators, respectively.

Next, we consider the low, differential sequencing depth setting. Under the null scenario, the empirical and pre-smoothed empirical barycenters perform reasonably well for the central quantiles, with low bias and near-nominal confidence interval coverage. However, their performance deteriorates in the tails, where both estimators exhibit larger bias and reduced coverage. This pattern is more pronounced under the alternative scenario: the empirical and pre-smoothed empirical barycenters show increased bias even in the central quantiles, with the largest bias and poorest coverage occurring in the tails. These results indicate that when sequencing depth is low, systematic differences in sequencing depth between groups can distort estimation and inference on $\Delta(t)$, particularly at the tail quantile levels. In contrast, both MCB estimators reduce bias and improve coverage. The correctly specified MCB estimator has the smallest bias and coverage closest to the nominal level. The flexible spline-based MCB estimator also performs well with respect to both bias and coverage, suggesting that, in this simulation setting, the spline-based specification is sufficiently flexible to approximate the relevant mixing distributions used to estimate the barycenter. Under the alternative scenario, the mean absolute bias in the tail quantiles was 0.902 for the empirical barycenter and 0.462 for the pre-smoothed empirical barycenter, compared with 0.038 and 0.152 for the correctly specified and spline-based MCB estimators, respectively; the corresponding confidence interval coverage was 31\%, 71\%, 95\%, and 89\%.

Finally, we consider the high sequencing depth settings. In these settings, the empirical barycenter performs substantially better than in the low sequencing depth settings and is generally similar to the MCB estimators for both central and tail quantiles.  This improvement is expected because, with larger sequencing depths, each individual's empirical quantile function provides a better approximation to the corresponding latent quantile function. Although some individuals still have low sequencing depths in these settings, they do not appear to have enough influence to substantially distort the empirical barycenter, particularly in the central part of the distribution. In the alternative scenario, the MCB estimators show slightly lower tail bias and coverage closer to the nominal level compared to the empirical barycenter. The pre-smoothed empirical barycenter has worse performance in the higher sequencing depth settings, especially in the tails. This shows that the smoothing step may introduce additional bias when the empirical distributions are already well-estimated.

More detailed simulation results are provided in Supplementary Figures~\ref{fig:diff_bias}--\ref{fig:bary_ci}. These figures show bias, mean estimated standard error, and confidence interval coverage for both $\Delta(t)$ and the group-specific barycenter quantile functions, $\mu_g(t)$, across all quantile levels $t \in \{0.01, 0.02, \ldots, 0.99\}$. Additional simulations with 20 and 100 individuals per group showed similar patterns and are presented in Appendix~\ref{sec:additional_sim1}. Notably, the MCB estimators continued to perform well even with 20 individuals per group.

\begin{figure}[!htpb]
    \centering
     \includegraphics[width=\textwidth]{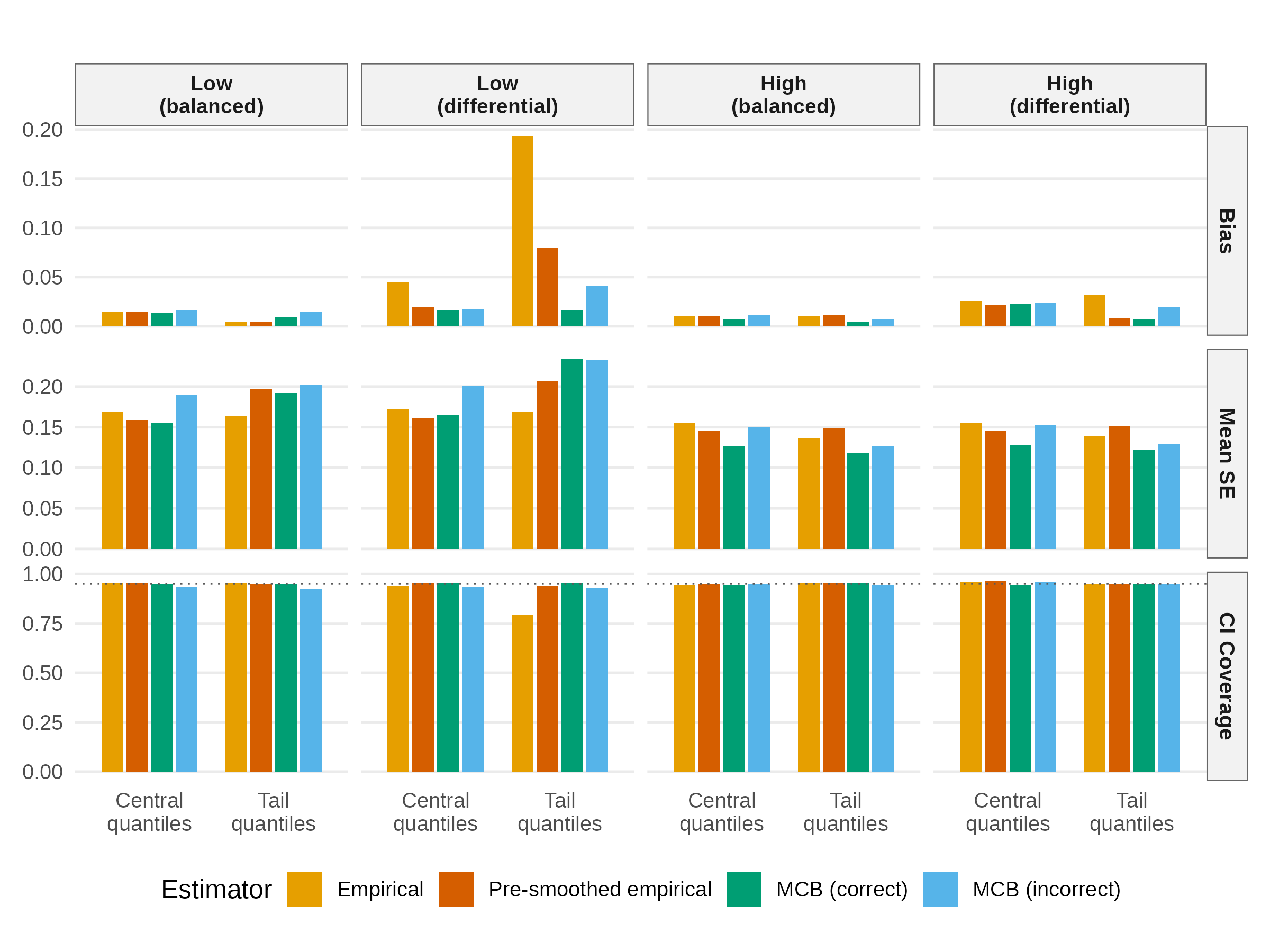}
    \caption{Summary of estimation of the difference in quantiles $\Delta(t)$ under the null scenario for Simulation Study 1. Results are shown across four sequencing depth settings: low balanced, low differential, high balanced, and high differential sequencing depth. For each setting, we compare the empirical barycenter, labeled ``Empirical''; the pre-smoothed empirical barycenter, labeled ``Pre-smoothed empirical''; the correctly specified MCB estimator, labeled ``MCB (correct)''; and the misspecified spline-based MCB estimator, labeled ``MCB (incorrect)''. Performance is summarized by average absolute bias, mean estimated standard error, and pointwise confidence interval coverage for $\Delta(t)$, averaged separately over the central quantiles ($0.06$--$0.94$) and tail quantiles ($0.01$--$0.05$ and $0.95$--$0.99$). The dotted horizontal line in the coverage panels indicates the nominal 95\% coverage level.}
    \label{fig:sim_null_res}
\end{figure}

\begin{figure}[!htpb]
    \centering
     \includegraphics[width=\textwidth]{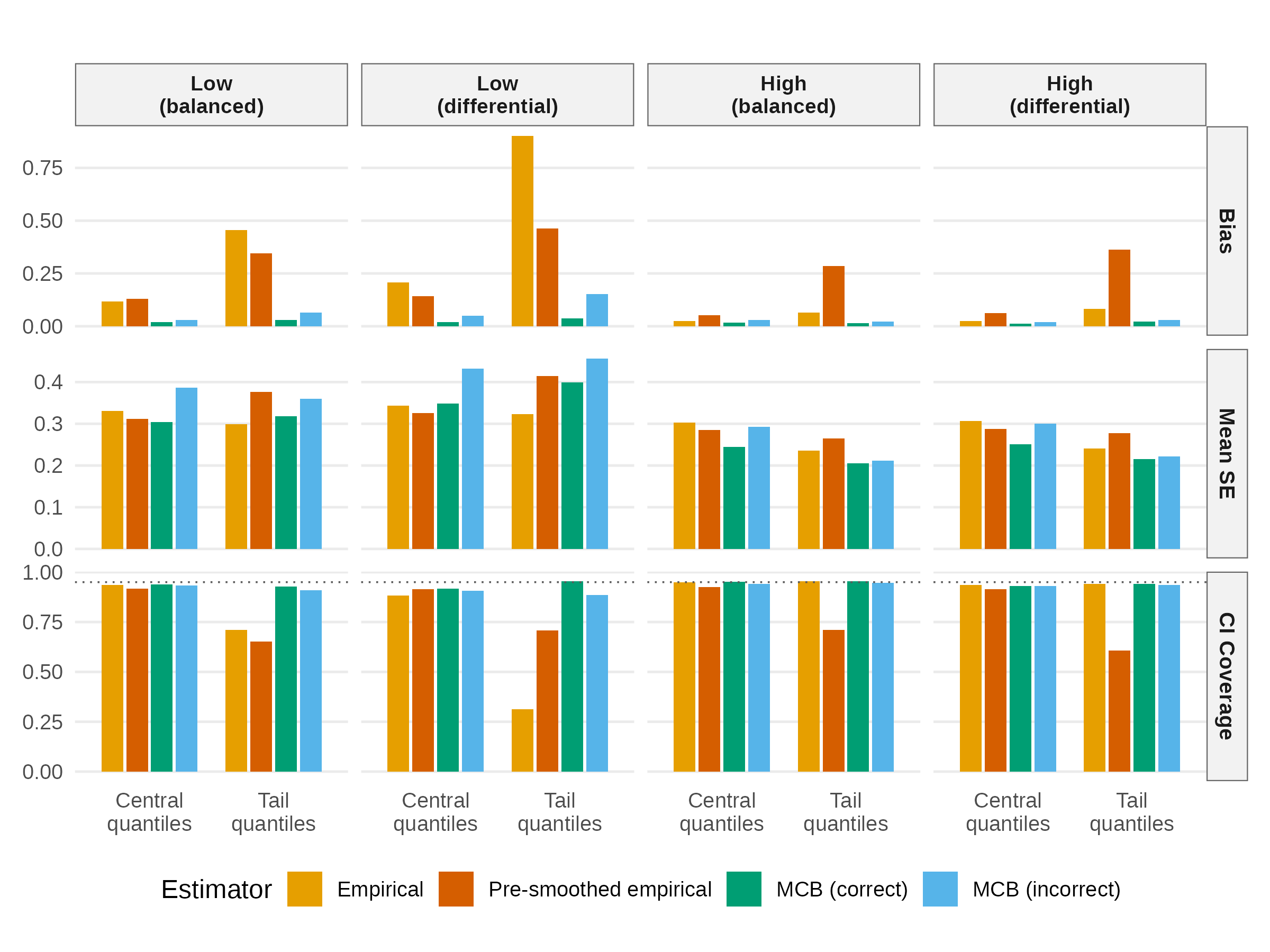}
    \caption{Summary of estimator of the difference in quantiles $\Delta(t)$ under the alternative scenario for Simulation Study 1. Results are shown across four sequencing depth settings: low balanced, low differential, high balanced, and high differential sequencing depth. For each setting, we compare the empirical barycenter, labeled ``Empirical''; the pre-smoothed empirical barycenter, labeled ``Pre-smoothed empirical''; the correctly specified MCB estimator, labeled ``MCB (correct)''; and the misspecified spline-based MCB estimator, labeled ``MCB (incorrect)''. Performance is summarized by average absolute bias, mean estimated standard error, and pointwise confidence interval coverage for $\Delta(t)$, averaged separately over the central quantiles ($0.06$--$0.94$) and tail quantiles ($0.01$--$0.05$ and $0.95$--$0.99$). The dotted horizontal line in the coverage panels indicates the nominal 95\% coverage level.}
    \label{fig:sim_non_null_res}
\end{figure}

\subsection{Simulation Study 2: hypothesis testing}

\subsubsection{Setup}

In the second simulation study, we evaluate the proposed hypothesis testing procedure across a range of two sample scenarios. The goal is to assess Type I error under null scenarios with equal Wasserstein barycenters and power under alternatives with different types of barycenter differences. As in Simulation Study 1, we generate individual-level distributions from Dirichlet processes. We consider four broad classes of scenarios, described in high-level below with full details provided in Supplementary Table~\ref{tab:sim2_scenarios}:

\begin{enumerate}
    \item \textit{Null}: The two groups have identical Wasserstein barycenters. We consider three null settings with different common barycenter shapes: normal, symmetric bimodal, and non-symmetric bimodal.

    \medskip
    For the alternative scenarios (2)--(4), Group 1 is generated from a Dirichlet process with a Normal$(0,1)$ base distribution. The Group 2 data-generating distribution is chosen so that the resulting barycenter differs from the Group 1 barycenter in the following ways:
    \medskip

    \item \textit{Uniform differences}: The Group 2 barycenter differs from the Group 1 barycenter by an approximately constant shift across quantile levels. We examine three shift magnitudes, labeled ``Uniform shift = 0.1, 0.2, 0.3."

    \item \textit{Upper-tail differences}: The Group 2 barycenter differs from the Group 1 barycenter primarily in the upper tail. We examine upper-tail shifts of increasing magnitude, labeled ``Upper-tail shift = 1, 2, 3."

    \item \textit{Shape differences}: The Group 2 barycenter differs from the Group 1 barycenter in overall shape. These alternatives are generated using a bimodal base distribution for Group 2, with increasing component dispersion and consequently greater overlap between the two modes, labeled ``Bimodal = 0.2, 0.4, 0.6."
\end{enumerate}

The group-specific Wasserstein barycenters for all scenarios are shown visually in Supplementary Figure~\ref{fig:sim_test_scenarios}. To focus on the behavior of the test statistics apart from the low sequencing depth bias studied in Simulation Study 1, the main simulation is conducted under the high, balanced sequencing depth setting described in Table~\ref{tab:sim_seq_depth}. (Results for the other sequencing depth settings are provided in Appendix~\ref{sec:additional_sim2}.) For each scenario, we simulate 100 individuals per group and apply each proposed test statistic ($T_{\infty}$, $T_{\mathrm{int}}$, and $T_M$) to test the null hypothesis of equal Wasserstein barycenters at significance level $\alpha = 0.05$. We repeat this procedure over 200 simulated datasets and estimate the rejection probability as the proportion of datasets in which the null hypothesis is rejected. This rejection probability estimates the Type I error under the null scenarios and power under the alternative scenarios. Under the alternative scenarios, we compare the power of the proposed distributional tests with a mean-based analysis, in which the sequence features are averaged within each individual and the groups are compared using an unequal-variance two-sample $t$-test.

\subsubsection{Results}

Results are shown in Figure~\ref{fig:sim_hypothesis_tests}. Under the null settings, all barycenter estimators and all three test statistics had rejection probabilities close to the nominal 5\% level. 

For the uniform difference alternatives, the three test statistics performed similarly, with rejection probabilities increasing as the size of the mean shift increased. In these scenarios, the correctly specified and spline-based MCB estimators tended to have slightly higher power compared to the empirical barycenters. The distributional tests also  generally had similar power to the mean-based analysis. This is expected because the difference between the two group barycenters is approximately constant across quantile levels, so averaging the sequence features within each individual does not remove the main signal. For example, in the mean shift = 0.2 setting, the rejection probabilities for the supremum test were 70.0\%, 66.0\%, 75.5\%, and 78.0\% for the empirical barycenter, pre-smoothed empirical barycenter, correctly specified MCB estimator, and spline-based MCB estimator, respectively, compared with 68.0\% for the mean-based analysis.
 
For the upper-tail alternative scenarios, the supremum test statistic generally had the largest rejection probabilities among the three proposed test statistics, especially as the magnitude of the upper-tail shift increased. The correctly specified and spline-based MCB estimators tended to have the highest power, while the empirical and pre-smoothed empirical barycenters performed similarly. In this setting, the distributional tests had substantially higher power than the mean-based analysis. This is because the group difference is concentrated in the upper tail, and averaging the sequence features within each individual attenuates this localized signal. For example, in the upper-tail shift = 2 setting, the rejection probabilities for the supremum test were 45.0\%, 43.5\%, 61.0\%, and 59.0\% for the empirical barycenter, pre-smoothed empirical barycenter, correctly specified MCB estimator, and spline-based MCB estimator, respectively, compared with 23.0\% for the mean-based analysis.

Finally, for the shape difference alternatives, the Hotelling-type test and supremum test had the largest rejection probabilities, while the integrated squared test tended to have lower power. The Hotelling-type test was particularly powerful for the MCB estimators, consistent with the fact that this statistic accounts for covariance across quantile levels and can be more sensitive to structured departures from the null that occur across the quantile function. As in the previous settings, the correctly specified MCB estimator generally had the largest rejection probabilities. The mean-based analysis had little power in this setting because the two groups have similar average feature values despite having different barycenter shapes. For example, in the bimodal = 0.4 setting, the rejection probabilities for the supremum test were 72.5\%, 83.5\%, 85.0\%, and 72.0\% for the empirical barycenter, pre-smoothed empirical barycenter, correctly specified MCB estimator, and spline-based MCB estimator, respectively, compared with 3.5\% for the mean-based analysis.

\begin{figure}[!htpb]
    \centering
     \includegraphics[page=2, width=\textwidth]{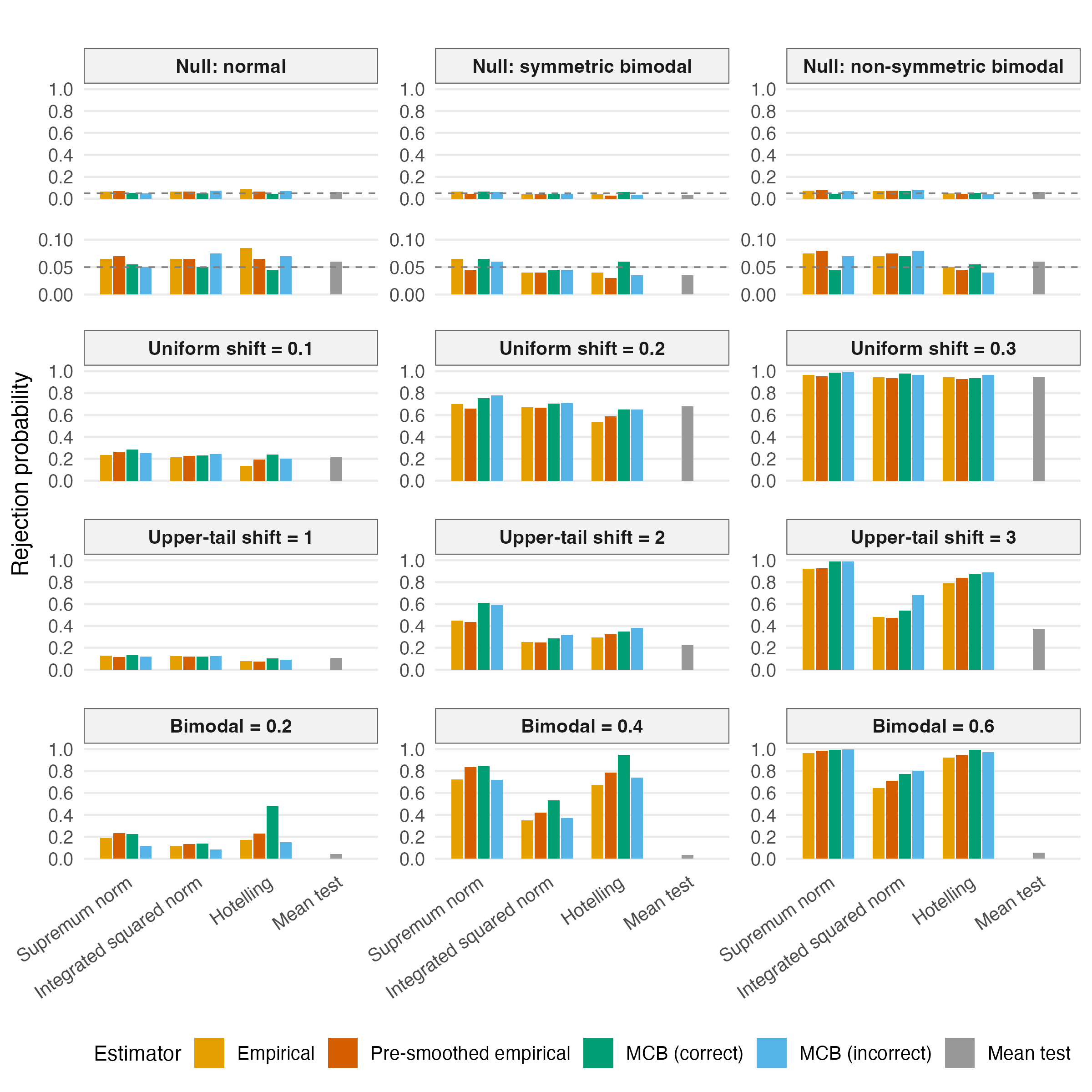}
    \caption{Rejection probabilities across the three null scenarios and nine alternative scenarios for each estimator and test statistic. The four barycenter estimators considered are the empirical barycenter (“Empirical”), the pre-smoothed empirical barycenter (“Pre-smoothed empirical”), the correctly specified MCB estimator (“MCB (correct)”), and the misspecified spline-based MCB estimator (“MCB (incorrect)”). The mean test is based on averaging the sequence features within each individual and applying Welch's two-sample t-test to the resulting individual-level averages. The second row presents enlarged views of the three null scenarios, with the y-axis restricted to 0–0.10. The dashed horizontal line indicates the nominal significance level of 0.05.}
    \label{fig:sim_hypothesis_tests}
\end{figure}

\subsection{Recommendations from simulation studies}

Simulation Study 1 provides guidance for choosing the barycenter estimator in these particular sequencing depth scenarios. In low sequencing depth settings, especially when sequencing depths differed between groups or tail behavior was of interest, the empirical barycenter performed poorly. In these settings, we recommend using an MCB estimator as the primary approach, provided that there are enough individuals to estimate the binomial mixtures. Although a correctly specified MCB estimator is not available in practice, the spline-configured MCB estimator substantially improved performance relative to the empirical barycenter and continued to perform well even with 20 individuals per group (Appendix \ref{sec:additional_sim1}). 

In high sequencing depth settings, the empirical barycenter performed well across both balanced and differential sequencing depth scenarios and may be preferable as the primary estimator because it is less model-dependent. Compared to the empirical barycenter, the spline-based MCB estimator had modestly lower bias particularly in the tails, but this advantage depends on how closely the mixture model specification matches the truth. We therefore recommend using the empirical barycenter as the primary analysis in high sequencing depth studies, with the spline-based MCB estimator potentially used in a sensitivity analysis. The pre-smoothed empirical barycenter showed mixed performance. Pre-smoothing sometimes reduced bias in low sequencing depth settings, but did not perform as well as the MCB estimators; in higher sequencing depth settings, it introduced additional bias. 


Simulation Study 2 provides guidance for choosing among the three global test statistics. The integrated squared statistic $T_{\mathrm{int}}$ performed similarly to the other statistics under uniform mean shifts, but had lower power for upper-tail and shape difference alternatives. The supremum statistic $T_{\infty}$ performed best for localized upper-tail differences, while the Hotelling-type statistic $T_M$ performed well for shape differences. When there is no strong reason to prioritize a specific alternative, we recommend the supremum statistic $T_{\infty}$ as the default as it  consistently performed well across the simulation settings.

Finally, Simulation Study 2 shows when the proposed distributional framework has the most power to detect group differences relative to a mean-based analysis. When group differences were approximately constant across quantile levels, the distributional tests had power similar to the mean-based test. In contrast, when differences were concentrated in the upper tail or reflected differences in barycenter shape, the distributional tests had significantly higher power because they retained information across the full quantile function.

\section{Data applications}\label{sec:application}

We apply our approach to the AMP HIV-1 prevention trials (HVTN 703/704) and a study on vaccine-elicited B cell receptor maturation. The goal of these applications is to illustrate how the proposed distributional analysis compares with a standard mean-based analysis and how the quantile difference map can be used to interpret differences between groups.

\subsection{AMP HIV-1 prevention trials (HVTN 703/704)}

The AMP trials tested whether the broadly neutralizing antibody VRC01 could prevent HIV-1 acquisition \citep{corey_two_2021}. In these trials, HIV-negative participants were randomized to receive a placebo, low-dose VRC01 (10 mg/kg), or high-dose VRC01 (30 mg/kg) every 8 weeks. Overall efficacy was modest, with an estimated efficacy of 29.0\% (95\% CI: -4.0 to 51.6) for the high-dose arm and 7.2\% (95\% CI: -33.3 to 35.3) for the low-dose arm, pooled across both AMP trials. Among participants who acquired HIV-1 during follow-up, viral samples were deep-sequenced using PacBio sequencing technology to characterize within-host diversity in the HIV-1 Env protein \citep{westfall_optimized_2024}. Among participants who acquired HIV-1 during follow-up, sequencing data were available for 67 in the placebo arm and  105 in the pooled VRC01 arms. Previous analyses found that VRC01 had an estimated efficacy of 75.4\% (95\% CI: 45.5\%, 88.9\%) against viruses that were highly sensitive to VRC01 neutralization \citep{corey_two_2021}.

For each viral sequence, we considered three numeric features related to VRC01 susceptibility as previously considered in \citet{juraska_prevention_2024}: (1) Hamming distance to a VRC01-sensitive reference sequence at pre-selected sites, (2) predicted IC80, and (3) epitope distance. Predicted IC80 is the sequence-predicted concentration of VRC01 required to inhibit viral infection by 80\%, with larger values indicating greater predicted resistance to VRC01 neutralization. Epitope distance is a structure-informed measure of how different the VRC01-binding epitope in a given viral sequence is from epitopes in highly VRC01-sensitive reference sequences, with larger values indicating greater divergence from VRC01-sensitive viruses. Our goal was to compare these feature distributions between the placebo arm and the VRC01 arm (pooling both low-dose and high-dose arms). Because this analysis conditioned on HIV-1 infection, the comparison between arms should not be interpreted causally, despite randomization in the original trial \citep{gilbert_sensitivity_2003}. As the primary analysis, we applied our framework using the empirical barycenter on the grid $0.01, 0.02, \dots, 0.99$. We conducted a sensitivity analysis using the spline-configured MCB estimator with 5 degrees of freedom augmented with two point mass basis functions at 0 and 1. We tested for equality of barycenters using the supremum test statistic $T_{\infty}$. For comparison, we also report a mean-based analysis, in which the features are first averaged within each individual and the groups are then compared using an unequal-variance two-sample $t$-test.

Results for the three features are shown in Figure~\ref{fig:data_analysis_amp}, and analogous results using the spline-configured MCB estimator are shown in Supplementary Figure~\ref{fig:data_analysis_mcbary}. In each row of Figure~\ref{fig:data_analysis_amp}, the left-hand panels show the treatment arm-specific mean summaries and estimated barycenter quantile functions, while the right-hand panels show the corresponding estimated mean differences and quantile difference maps. 

The three features show different patterns of separation between the placebo arm and the pooled VRC01 arm. For Hamming distance at the pre-selected sites, the estimated placebo and VRC01 empirical barycenters are nearly indistinguishable. The estimated quantile difference is close to zero across the full range of quantile levels, and there is no evidence of a difference between arms using either the mean-based analysis ($p=0.95$) or the empirical barycenter-based analysis ($p=0.93$). 

For predicted IC80, the VRC01 arm's estimated barycenter quantile function is larger than that of the placebo arm across much of the distribution, with the separation most apparent among higher quantiles. Neither the mean-based analysis ($p = 0.13$) nor the distributional analysis ($p = 0.12$) provide evidence of a difference between arms. Nevertheless, the quantile difference map suggests that the estimated difference between the two arms is not well summarized by a uniform shift: the estimated difference is positive over most quantile levels but becomes larger in the upper tail. The sensitivity analysis using the spline-configured MCB estimator shows a similar but more pronounced upper-tail difference ($p=0.04$).

Finally, for epitope distance, the VRC01 arms have higher values than the placebo arm across nearly all quantile levels. Both the mean-based analysis and the empirical barycenter-based analysis show strong evidence of a difference between arms ($p<0.001$ for both comparisons). The estimated quantile difference map shows a relatively stable difference across most quantile levels, although it decreases at the extreme upper tail, where estimates are typically less stable. Thus, for the epitope distance feature, the distributional analysis supports the same conclusion as the mean-based analysis, while also showing that the result generally reflects uniform difference between the placebo and VRC01 barycenter distributions rather than a difference restricted to a specific portion of the distribution.

\begin{figure}[!htbp]
    \centering
     \includegraphics[page=2, width=\textwidth]{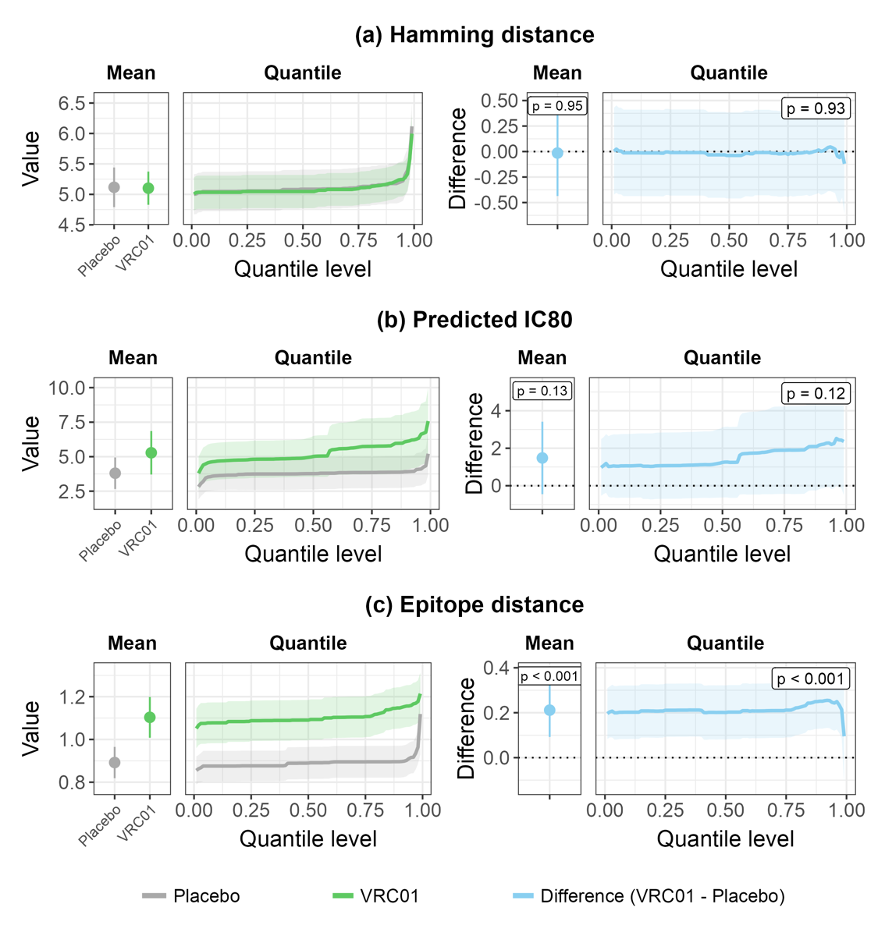}
    \caption{Using the empirical Wasserstein barycenter estimator, analysis of three VRC01 susceptibility-associated features among AMP participants who acquired the primary endpoint of HIV-1 infection: (A) Hamming distance to a VRC01-sensitive reference sequence at pre-selected sites, (B) predicted IC80, and (C) epitope distance. For each feature, the left panels show the arm-specific mean summaries and estimated empirical barycenter quantile functions for the placebo arm and the pooled VRC01 arms. The right panels show the corresponding VRC01-minus-placebo estimated mean difference and quantile difference map. Shaded bands indicate pointwise 95\% confidence intervals. Reported $p$-values correspond to the mean-based unequal-variance two-sample $t$-test for the mean panels and the supremum test for equality of barycenters for the quantile panels.}
    \label{fig:data_analysis_amp}
\end{figure}

\subsection{Vaccine-elicited B cell receptor maturation}

\citet{willis_vaccination_2025} studied a germline-targeting HIV vaccine strategy designed to guide rare B cells toward making VRC01-class broadly neutralizing antibodies. This strategy aimed to induce the accumulation of somatic mutations in these B cells, promoting their maturation toward broadly neutralizing antibodies. The first vaccine, eOD-GT8 60mer (eOD), served as a germline-targeting prime to activate rare VRC01-class precursor B cells, while the second vaccine, core-g28v2 60mer (core), served as a heterologous boost intended to promote further maturation of those cells. 

The study included two phase 1 trials: IAVI G002 in North America, which evaluated both the priming and boosting strategy, and IAVI G003 in Rwanda and South Africa, which evaluated the priming strategy only. In IAVI G002, participants were assigned to different vaccination schedules involving eOD and/or core, while G003 participants received two doses of eOD at weeks 0 and 8, as shown in Figure 1 of \citet{willis_vaccination_2025}. Samples were collected at multiple time points from each participant. Antigen-specific B cells were sorted from PBMC samples and analyzed by B-cell receptor (BCR) sequencing to characterize VRC01-class responses at the antibody-sequence level, allowing the researchers to measure response frequencies, heavy- and light-chain somatic hypermutation, and the accumulation of mutations associated with more mature VRC01-class broadly neutralizing antibodies. \citet{willis_vaccination_2025} compared six vaccination schedule--timepoint combinations, with results shown in Figure 5 of their paper. Labels give the PBMC sampling week followed by the immunogens administered at week 0, at weeks 0 and 8, or at weeks 0, 8, and 16: wk~8 eOD, wk~16
eOD$\rightarrow$eOD, wk~16 eOD$\rightarrow$core, wk~24
eOD$\rightarrow$eOD, wk~24 eOD$\rightarrow$core, and wk~24
eOD$\rightarrow$eOD$\rightarrow$core. The descriptions of each group and their sample sizes are provided in Supplemental Table~\ref{tab:bcr_comparisons}.

We apply our framework to these data by analyzing the heavy-chain percent mutation feature for each sequence. For each group, we estimate the Wasserstein barycenter quantile function on the grid $0.01, \ldots, 0.99$. Because most groups have small participant sample sizes ($<20$) but high sequencing depths, we use the empirical barycenter. We conduct one representative two-sample comparison: wk24 eOD$\rightarrow$core versus wk24 eOD$\rightarrow$eOD$\rightarrow$core.  We test the equality of barycenters using the supremum test statistic $T_{\infty}$. 

The results are shown in Figure~\ref{fig:data_analysis_bcr}. Panel~(a) displays the distributional and mean-based summaries for all six groups. Estimated heavy-chain mutation levels increase over time across nearly all quantiles with the eOD$\rightarrow$core vaccine schedule showing the highest levels at both weeks 16 and 24 compared to the other vaccination regimens. The mean-based summaries show the same ordering across time and regimen. Panel~(b) focuses on the comparison between wk24 eOD$\rightarrow$core and wk24 eOD$\rightarrow$eOD$\rightarrow$core. The estimated barycenter difference, defined as wk24 eOD$\rightarrow$core
minus wk24 eOD$\rightarrow$eOD$\rightarrow$core, is $0.205\%$ at the
0.01 quantile (95\% CI: $-0.696\%$, $1.105\%$), $1.034\%$ at the
median (95\% CI: $0.225\%$, $1.842\%$), and $1.484\%$ at the
0.99 quantile (95\% CI: $0.565\%$, $2.403\%$). Thus, the estimated group difference is not uniform across the mutation distribution; instead, it becomes more pronounced among sequences with higher mutation levels. The supremum test rejects the null hypothesis of equal barycenters ($p=0.001$), providing strong evidence that the two groups differ in their mutation barycenter distributions. The corresponding mean-based analysis estimates an average difference of $0.837\%$ (95\% CI: $0.085\%$, $1.588\%$, $p=0.03$). In the original analysis reported by \citet{willis_vaccination_2025}, which summarized each participant's distribution using the participant-specific median, the difference between these groups was not found to be statistically significant. 

\begin{figure}[!htbp]
    \centering
     \includegraphics[page=2, width=\textwidth]{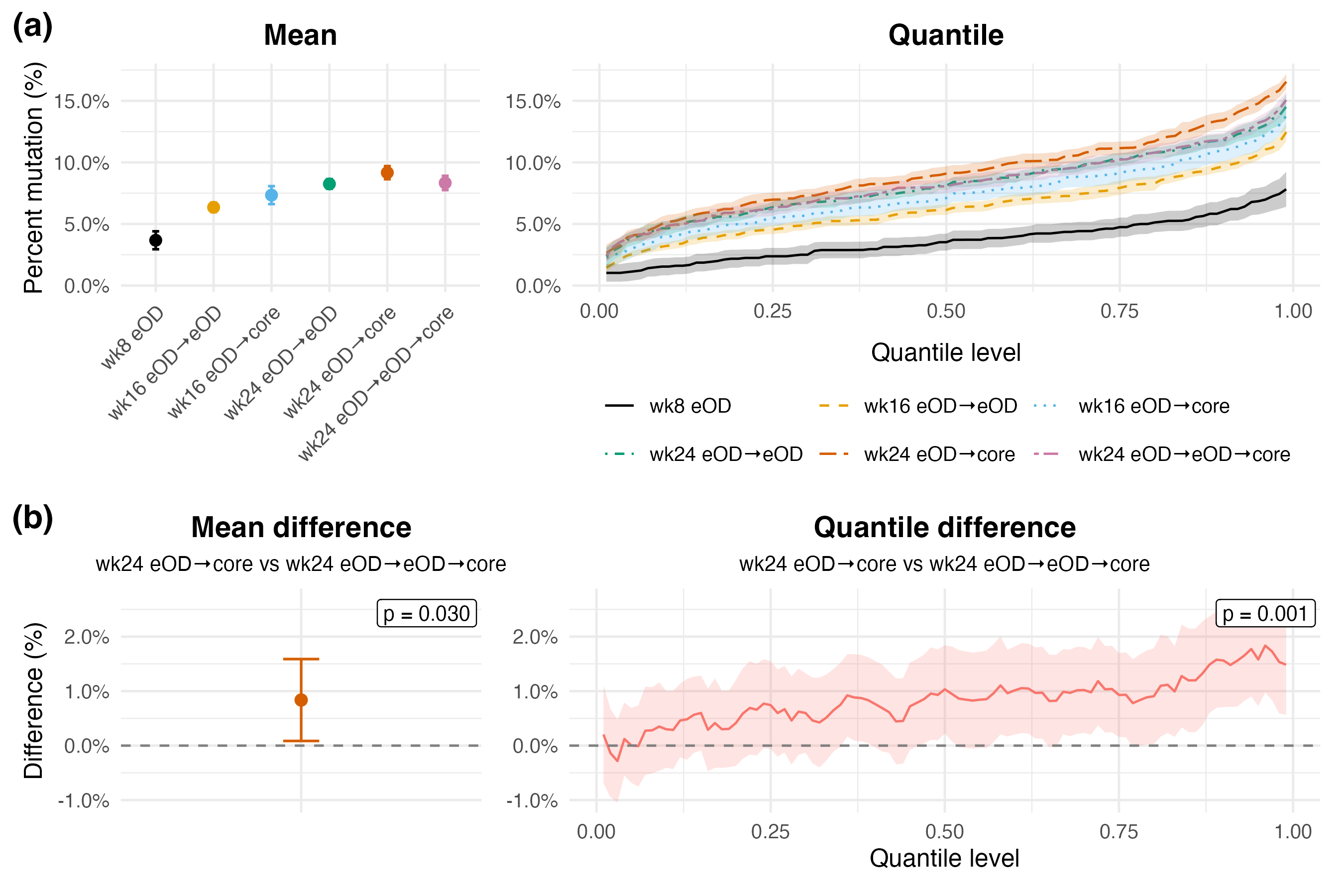}
\caption{
BCR sequencing analysis of vaccine-elicited VRC01-class responses using heavy-chain percent mutation as the sequence-level feature. 
(a) Estimated group summaries for the six vaccination schedule--timepoint groups. The left panel shows the group mean of the participant-specific mean percent mutation values, and the right panel shows the corresponding empirical Wasserstein barycenter quantile functions. 
(b) Two-sample comparison of week~24 eOD$\rightarrow$core versus week~24 eOD$\rightarrow$eOD$\rightarrow$core. The left panel shows the estimated difference in participant-level mean percent mutation, and the right panel shows the estimated quantile difference map, defined as the week~24 eOD$\rightarrow$core barycenter quantile function minus the week~24 eOD$\rightarrow$eOD$\rightarrow$core barycenter quantile function. Vertical bars and shaded bands indicate pointwise 95\% confidence intervals, and the dashed horizontal line marks zero difference. Reported $p$-values correspond to the unequal-variance two-sample $t$-test for the mean comparison and the supremum test for equality of Wasserstein barycenters for the quantile comparison.
}    \label{fig:data_analysis_bcr}
\end{figure}

\section{Discussion}\label{sec:discussion}

We proposed a framework for two-sample comparisons of sequence-derived numeric feature distributions from deep sequencing data. Rather than reducing each individual's sequence set to a scalar summary, the approach treats the sequence-derived feature values for each individual as a distribution-valued outcome. Groups are summarized using Wasserstein barycenters, and group differences are described through quantile difference maps. This allows researchers to compare groups across the full within-host feature distribution and to distinguish uniform shifts from localized tail differences or differences in distributional shape. In contrast, mean-based analyses can identify average differences but may obscure where in the distribution those differences arise.

A key practical consideration is sequencing depth. When sequencing depth is low, an individual's empirical distribution may poorly approximate the corresponding within-host feature distribution, which can bias estimation of the group barycenter, particularly in the tails. In our simulations, the empirical barycenter performed poorly in low sequencing settings, especially when sequencing depths differed between groups. In such settings, the MCB estimator is preferable, provided that there are enough individuals to estimate the binomial mixture components. In higher sequencing depth settings, the empirical barycenter performed well, even when some individuals had low sequencing depths. This is consistent with the theoretical results of \citet{zhou_wasserstein_2024}, which allow some units to have few observations when others have many. Thus, we recommend the empirical barycenter for high sequencing depth applications and the spline-configured MCB estimator for sparse settings. For settings that differ from those studied in this paper, simulations can be used to guide the choice of estimator.

The quantile grid in our implementation should also be chosen with the observed sequencing depths and scientific target of interest. Our simulations considered quantiles from 0.01 to 0.99, but less extreme grids, such as 0.2 to 0.8, may be reliable at lower sequencing depths. Conversely, inference farther into the tails requires greater sequencing depth and is more sensitive to estimator choice. Developing data-adaptive tools for selecting the quantile grid represents an interesting area for future research that could improve the usability of our framework and reduce reliance on manual choices.

Our presentation focused on the two independent sample setting. In many settings, the analysis might seek to adjust for covariates. Existing approaches for causal inference with distributional outcomes provide one path forward for estimating covariate-adjusted treatment effects on barycenter quantile functions \citep{lin_causal_2023}. The empirical barycenter and pre-smoothed empirical barycenter can be incorporated into these approaches directly, whereas extending the MCB estimator to covariate-adjusted settings requires additional work. With discrete covariates, one could estimate covariate-specific barycenters and then marginalize over a target covariate distribution to obtain an adjusted average treatment effect. For continuous covariates, inverse probability weighting, as in Example 7.2 of \citet{peng_estimating_2026}, may provide a useful direction.

Several methodological extensions are also natural. The present work considered one-dimensional sequence features, but some sequence-derived summaries are multivariate and would require alternatives to quantile function representations. Longitudinal extensions would allow investigators to study how within-host feature distributions evolve over time and whether those trajectories differ between groups. Relatedly, paired or repeated-measures versions of the proposed tests would be useful when the same individuals are observed under multiple conditions or across multiple visits.

\section{Acknowledgments}\label{sec:acknowledgments}

The authors thank the participants and investigators of HVTN 703/704 and IAVI G002/G003, and Craig Magaret for constructing the HVTN 703/704 trial datasets. The authors also thank Pamela Shaw, Thomas Richardson, Aaron Hudson, Lillian Cohn, Paul Edlefsen, Raabya Rossenkhan, and James Ludwig for their helpful discussions and insights. This research was supported by the National Institute of Allergy and Infectious Diseases of the National Institutes of Health under grants UM1 AI068635 and R37AI054165. The content is the sole responsibility of the authors and does not necessarily represent the official views of the National Institute of Allergy and Infectious Diseases and the National Institutes of Health.

\section{Data availability }\label{sec:data_availability}

The HVTN 703/704 data will be posted on an HVTN public repository at Harvard Dataverse. The IAVI G002/G003 data are available at \href{https://github.com/SchiefLab/G002-and-G003}{https://github.com/SchiefLab/G002-and-G003}.

\newpage

\printbibliography

\newpage

\appendix

\numberwithin{figure}{section}
\numberwithin{table}{section}

\renewcommand{\thefigure}{\thesection.\arabic{figure}}
\renewcommand{\thetable}{\thesection.\arabic{table}}

\section*{Appendices}
\startcontents[appendices]
\printcontents[appendices]{}{1}{}

\newpage

\section{\texttt{mcbarycenter} R package}\label{sec:vignette}

In this vignette, we describe the \texttt{mcbarycenter} R package, which can be used to implement the three tasks presented in the main body of this paper. The R package is available \href{https://github.com/jpspeng/mcbarycenter}{here}.

\begin{lstlisting}[language=R]
devtools::install_github("jpspeng/mcbarycenter")
\end{lstlisting}

We discuss usage of the package in a general two-sample distributional setting. Specifically, suppose we have $n$ independent units, indexed by $i = 1, \ldots, n$, each with an associated underlying distribution $Y_i$. We do not observe each $Y_i$ directly. Instead, for each unit $i$, we observe $m_i$ independent samples from $Y_i$:
\[
Y_{ij} \sim Y_i, \qquad j = 1, \ldots, m_i.
\]
Additionally, each unit $i$ may have an associated group label $G_i \in \{1,2\}$, and we may want to compare units across groups.

In this vignette, we use the sparse dataset example \texttt{sparse\_dist\_data} data frame available in the package. We also use the \texttt{dplyr} package for data manipulation. 

\begin{lstlisting}[language=R]
library(mcbarycenter) 
library(dplyr)
\end{lstlisting}

\subsection{Preliminaries}

The dataset should include two columns: an ID column, which contains the individual ID, and a value column, which contains samples from the individual's underlying distribution. In many functions, we specify the names of these columns using the arguments \texttt{id\_col} and \texttt{val\_col}.

The sample data also has a group column, which takes the values 1 or 2 and describes the groups that we wish to compare.

\begin{lstlisting}[language=R]
head(sparse_dist_data)
\end{lstlisting}

\begin{verbatim}
      group id value
1     1  1  0.20
2     1  1  0.20
3     1  1  0.20
4     1  2 -0.35
5     1  2 -0.12
6     1  2  0.08
\end{verbatim}

We subset the data into the two groups:

\begin{lstlisting}[language=R]
sample_data1 <- sparse_dist_data %>% filter(group == 1)
sample_data2 <- sparse_dist_data %>% filter(group == 2)
\end{lstlisting}

We will also need to specify a grid to do our analysis. The functions default to the grid 0.01, ..., 0.99, but a more resolved grid could be of interest if the number of samples per distribution is higher.

\subsection{Exploratory analysis}

To begin, we may wish to visualize each individual's empirical quantile function. This allows us to see general patterns in the raw data and identify individuals whose data look different from others.

To visualize the empirical quantile curve for each individual distribution, we can use the \texttt{graph\_individual\_quantiles()} function. This function computes empirical quantiles separately for each ID and then plots the resulting curves. The code below displays the empirical quantile functions for the first three ids.  

\begin{lstlisting}[language=R]
graph_individual_quantiles(sample_data1 %>% filter(id %in% c(1, 2, 3)), 
                           id_col = "id", 
                           val_col = "value")
\end{lstlisting}

When \texttt{interactive = TRUE}, which is the default, the function returns an interactive \texttt{plotly} widget. The curves are visible by default, and clicking a line highlights the selected individual. Double-clicking resets the plot.

\begin{figure}[H]
    \centering
     \includegraphics[width=\textwidth]{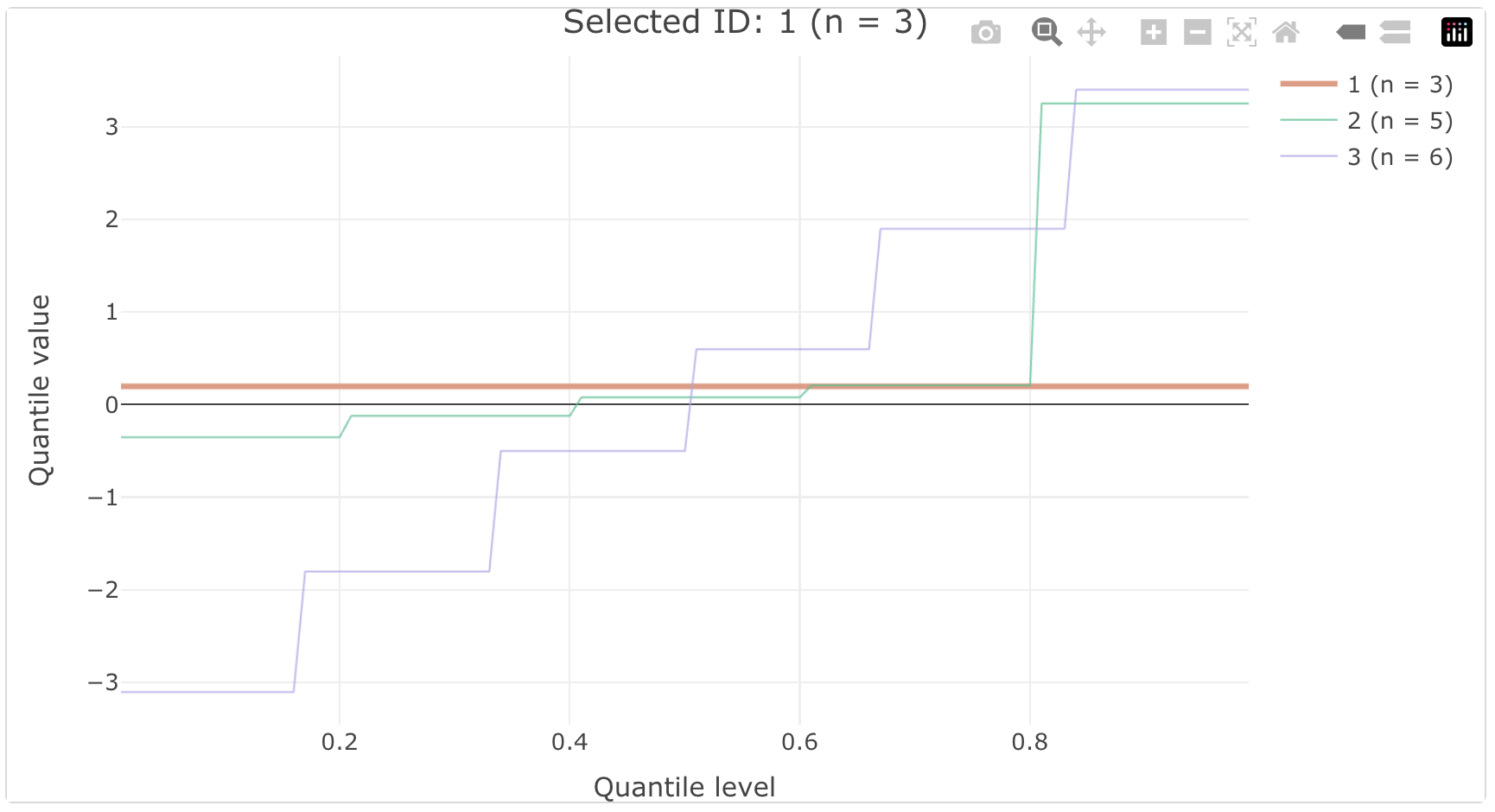}
    \caption{Output of the \texttt{graph\_individual\_quantiles()} function.}
    \label{fig:mcbary_ind_quantiles}
\end{figure}

In the figure above, we can see the individual quantile functions for the first three ids. As shown, individual 1 (sample size $m_1 = 3$) has a homogeneous sample, indicated by a flat empirical quantile function. Individual 2 (sample size $m_2 = 5$) has an extended right tail; that is, their empirical distribution has a small proportion of high values. Individual 3 (sample size $m_3 = 6$) has wide variability in their observed empirical distribution.

\subsection{Task 1: Wasserstein barycenter estimation}

We begin with the first analytic task: estimating the Wasserstein barycenter's quantile function, equivalently the mean quantile function, for each group. We denote the Wasserstein barycenter quantile function for Group 1 as $\mu_1(t)$ and that for Group 2 as $\mu_2(t)$, for $t \in (0,1)$.

\paragraph{Empirical Wasserstein barycenter.}
To estimate the empirical Wasserstein barycenter's quantile function, we can use the \texttt{empbary()} function. This function computes empirical quantiles for each subject and then averages those quantiles pointwise. The main arguments are:
\begin{itemize}
    \item \texttt{df}: the input data frame containing the observed samples.
    \item \texttt{id\_col}: the name of the column identifying the individual or unit.
    \item \texttt{val\_col}: the name of the numeric column containing observations from each individual's underlying distribution.
    \item \texttt{weight\_col}: an optional column containing id-level weights. If \texttt{NULL}, the empirical barycenter is unweighted.
    \item \texttt{quantile\_grid}: the grid of quantile levels at which the barycenter quantile function is estimated. The default is \texttt{seq(0.01, 0.99, 0.01)}.
    \item \texttt{presmooth}: whether to pre-smooth each
individual's empirical distribution before estimating
quantiles. If \texttt{TRUE}, a Gaussian kernel density
estimate is used to estimate each individual's quantile
function. The default is \texttt{FALSE}.
    \item \texttt{quantile\_type}: the interpolation rule
passed to \texttt{stats::quantile()} through its
\texttt{type} argument. The default is \texttt{1}.
This argument is ignored when \texttt{presmooth = TRUE}.
\end{itemize}

\begin{lstlisting}[language=R]
emp_res_grp1 <- empbary(
  df = sample_data1,
  id_col = "id",
  val_col = "value", 
  quantile_grid = seq(0.01, 0.99, 0.01), # optional, defaulted to 0.01, 0.02, ..., 0.99
  quantile_type = 1 # optional, defaulted to quantile_type = 1 
)

emp_res_grp2 <- empbary(
  df = sample_data2,
  id_col = "id",
  val_col = "value"
)
\end{lstlisting}

This function call returns an \texttt{empbary\_result} object containing the estimated quantile curve in \texttt{res}. The \texttt{res} object contains estimates, standard error estimates, and 95\% pointwise confidence intervals for each quantile level. The returned object also includes an estimated covariance matrix in \texttt{cov} and the standardized input data in \texttt{data}. The covariance matrix is estimated as the sample covariance matrix of the individual-level empirical quantile functions.

\begin{lstlisting}[language=R]
head(emp_res_grp1$res)
\end{lstlisting}

\begin{verbatim}
  quantile  estimate         se     ci_lo     ci_hi
1     0.01 -1.233814 0.05900223 -1.349456 -1.118172
2     0.02 -1.233814 0.05900223 -1.349456 -1.118172
3     0.03 -1.233814 0.05900223 -1.349456 -1.118172
4     0.04 -1.233814 0.05900223 -1.349456 -1.118172
5     0.05 -1.233814 0.05900223 -1.349456 -1.118172
6     0.06 -1.233814 0.05900223 -1.349456 -1.118172
\end{verbatim}

\paragraph{Marginal-constructed barycenter.} To estimate the marginal-constructed barycenter estimator, we can use the \texttt{mcbary()} function, which implements the procedure discussed in \citet{peng_estimating_2026}. Briefly, the method transforms the problem of estimating the Wasserstein barycenter's quantile function into the problem of estimating the mixing distributions of a sequence of binomial mixture data.

The main arguments include:
\begin{itemize}
    \item \texttt{df}: the input data frame containing the observed samples.
    \item \texttt{id\_col}: the name of the column identifying the individual or unit.
    \item \texttt{val\_col}: the name of the numeric column containing observations from each individual's underlying distribution.
    \item \texttt{method}: the mixture estimation method. Options include \texttt{"spline"}, \texttt{"npmle"}, and \texttt{"beta"}.
    \item \texttt{x\_grid}: a user-specified threshold grid used to construct the binomial mixture problems.
    \item \texttt{cutpoints}: an optional number of evenly spaced thresholds. Either \texttt{x\_grid} or \texttt{cutpoints} must be supplied.
    \item \texttt{bootstrap\_samples}: the number of bootstrap resamples used for uncertainty estimation.
    \item \texttt{quantile\_grid}: the grid of quantile levels at which the barycenter's quantile function is estimated.
    \item \texttt{use\_midpoint}: whether to use interval midpoints between adjacent thresholds when constructing the estimated distribution of the \(\alpha\)-quantile. Defaulted to $\texttt{FALSE}$.
    \item \texttt{estimate\_endpoints}: whether to estimate endpoint intervals when constructing the estimated distribution of the \(\alpha\)-quantile. Defaulted to $\texttt{FALSE}$.
    \item \texttt{use\_isotonic\_dist}: whether to enforce monotonicity in the cumulative mixture curves across threshold values. Defaulted to $\texttt{FALSE}$.
    \item \texttt{use\_isotonic\_output}: if \texttt{TRUE}, whether to enforce monotonicity in the final quantile function estimate. Defaulted to $\texttt{FALSE}$.
\end{itemize}

For the mixing distribution specification \texttt{method = "spline"}, we can additionally specify the arguments \texttt{pDegree}, which sets the degrees of freedom of the natural spline basis used for estimation, and \texttt{c0}, which sets the penalty constant in the spline deconvolution procedure. We may also specify \texttt{mass\_at\_endpoints}, which, when set to \texttt{TRUE}, includes point masses at $0$ and $1$ in the spline specification. For the nonparametric maximum likelihood specification \texttt{method = "npmle"}, we can additionally specify the argument \texttt{backend}, which can be either \texttt{"mixsqp"} (the default) or \texttt{"REBayes"}; the latter is typically faster, but requires a working installation of Mosek.

In the example below, we use the Beta mixture specification with 100 cutpoints and 20 bootstrap samples. The number of bootstrap samples is kept small for illustration.

\begin{lstlisting}[language=R]
mcb_res_grp1 <- mcbary(
  df = sample_data1,
  id_col = "id",
  val_col = "value",
  method = "beta",
  cutpoints = 100,
  use_midpoint = TRUE,
  estimate_endpoints = TRUE,
  bootstrap_samples = 20
)

mcb_res_grp2 <- mcbary(
  df = sample_data2,
  id_col = "id",
  val_col = "value",
  method = "beta",
  cutpoints = 100,
  use_midpoint = TRUE,
  estimate_endpoints = TRUE,
  bootstrap_samples = 20
)
\end{lstlisting}

The function returns an \texttt{mcbary\_result} object. As with \texttt{empbary()}, the estimated quantile curve and uncertainty summaries are stored in \texttt{res}. The returned object also includes a bootstrap covariance matrix in \texttt{cov}, the estimated mixing distributions in \texttt{mixtures}, the chosen estimation method in \texttt{method}, and the standardized input data in \texttt{data}.

\begin{lstlisting}[language=R]
head(mcb_res_grp1$res)
\end{lstlisting}

\begin{verbatim}
  quantile  estimate estimate_bs         se     ci_lo     ci_hi pct_ci_lo pct_ci_hi
1     0.01 -2.425729   -2.548356 0.23763321 -2.891481 -1.959976 -2.949559 -2.189461
2     0.02 -2.118623   -2.159557 0.19366656 -2.498203 -1.739044 -2.559116 -1.883249
3     0.03 -1.959824   -1.994593 0.15997799 -2.273375 -1.646273 -2.325571 -1.783354
4     0.04 -1.836431   -1.867426 0.14085874 -2.112509 -1.560353 -2.145595 -1.643759
5     0.05 -1.734236   -1.750998 0.11389409 -1.957464 -1.511008 -2.012996 -1.570377
6     0.06 -1.645746   -1.654283 0.09297791 -1.827980 -1.463513 -1.902211 -1.505443
\end{verbatim}

\paragraph{Graphing.}
To visualize these quantile functions, we can use the \texttt{graph\_quantiles()} function. This function takes as input \texttt{empbary\_result} objects, \texttt{mcbary\_result} objects, or any data frame with the columns \texttt{quantile} and \texttt{estimate}. If confidence intervals are available, they can also be displayed using the columns \texttt{ci\_lo} and \texttt{ci\_hi}. We can include multiple objects in the function call, and the function will graph all of the results on the same plot.

The code below graphs the empirical barycenter and MCB estimator for each group.

\begin{lstlisting}[language=R]
graph_quantiles(
  emp_grp1 = emp_res_grp1,
  emp_grp2 = emp_res_grp2,
  mcb_grp1 = mcb_res_grp1,
  mcb_grp2 = mcb_res_grp2, 
  show_ci = TRUE # optional, defaults as TRUE
)
\end{lstlisting}

\begin{figure}[H]
    \centering
    \includegraphics[width=\textwidth]{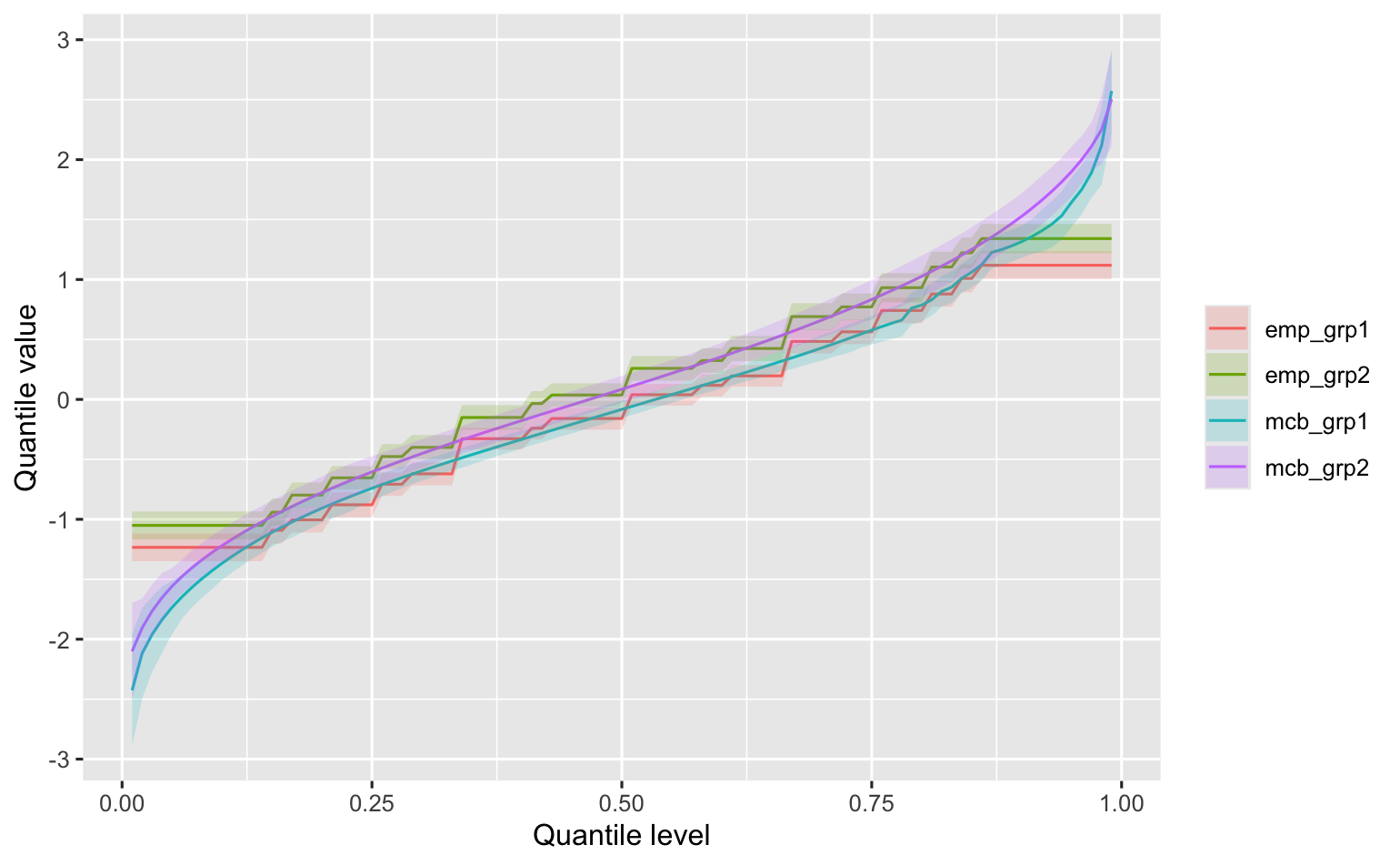}
    \caption{Empirical and marginal-constructed barycenter estimates for Groups 1 and 2.}
    \label{fig:mcbary_barycenter_quantiles}
\end{figure}

We can also visualize the intermediate estimated mixing distributions using \texttt{graph\_mixtures()}. This can be useful for checking the intermediate objects used to construct the MCB estimator.

\begin{lstlisting}[language=R]
graph_mixtures(mcb_res_grp1)
\end{lstlisting}

\begin{figure}[H]
    \centering
    \includegraphics[width=\textwidth]{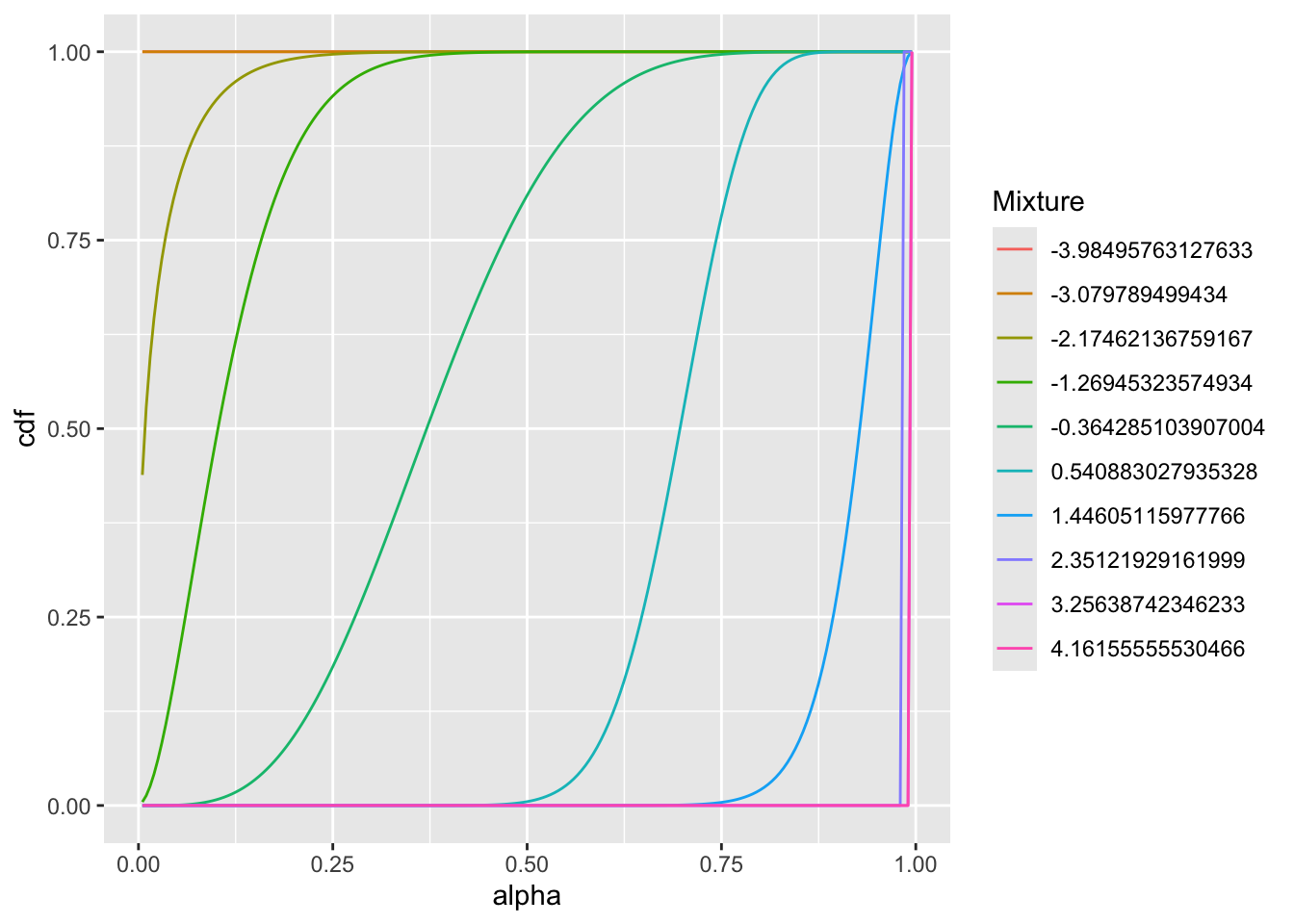}
    \caption{Estimated mixing distributions from the \texttt{mcbary()} procedure for Group 1.}
    \label{fig:mcbary_mixtures_grp1}
\end{figure}

\subsection{Task 2: Difference in quantiles}

The second analytic task is to estimate the difference between the barycenter quantile functions for the two groups. We define the difference in quantiles as
\[
\Delta(t) = \mu_2(t) - \mu_1(t), \qquad t \in (0,1).
\]
Our goal is to estimate $\Delta(t)$ over a grid of quantile levels.

To compute this difference, we can use the \texttt{quantile\_diff()} function. This function takes two estimated barycenter objects and returns the pointwise difference between their estimated quantile functions, along with their estimated standard errors and 95\% confidence intervals. In the example below, we compute the difference between the two MCB estimates.

\begin{lstlisting}[language=R]
quantile_diff_res <- quantile_diff(mcb_res_grp2, mcb_res_grp1)

head(quantile_diff_res)
\end{lstlisting}

\begin{verbatim}
  quantile  estimate        se       ci_lo     ci_hi
1     0.01 0.3247018 0.3151857 -0.29305083 0.9424544
2     0.02 0.2118083 0.2306834 -0.24032298 0.6639395
3     0.03 0.1946393 0.1952820 -0.18810627 0.5773850
4     0.04 0.1840011 0.1755934 -0.16015565 0.5281578
5     0.05 0.1762177 0.1376481 -0.09356763 0.4460030
6     0.06 0.1695939 0.1168332 -0.05939498 0.3985828
\end{verbatim}

We can graph the estimated difference curve using \texttt{graph\_quantiles()}.

\begin{lstlisting}[language=R]
graph_quantiles(mcb_difference = quantile_diff_res)
\end{lstlisting}

\begin{figure}[H]
    \centering
    \includegraphics[width=\textwidth]{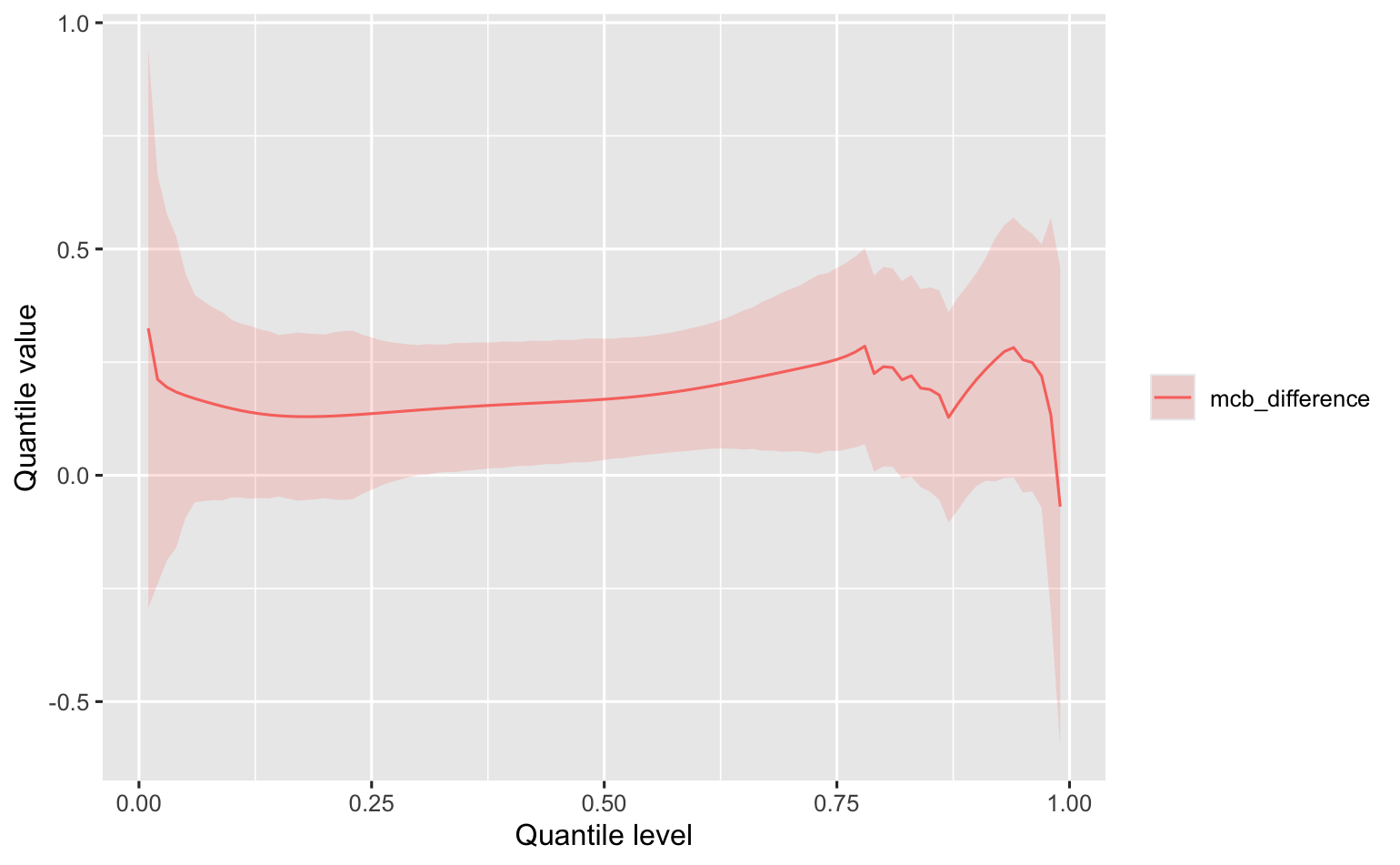}
    \caption{Estimated difference between the MCB barycenter quantile functions for Group 2 versus Group 1.}
    \label{fig:mcbary_quantile_difference}
\end{figure}

This plot shows how the estimated group difference varies across the distribution. Positive values indicate quantile levels at which the estimated Group 2 barycenter is larger than the estimated Group 1 barycenter, while negative values indicate quantile levels at which the estimated Group 1 barycenter is larger.

\subsection{Task 3: Hypothesis testing}

The third analytic task is to test whether the barycenter quantile functions differ between groups. In this setting, the null hypothesis is that the two groups have the same Wasserstein barycenter over the unit interval:
\[
H_0: \mu_1(t) = \mu_2(t) \quad \text{for all } t \in (0,1).
\]
Equivalently, in terms of the difference in quantiles,
\[
H_0: \Delta(t) = 0 \quad \text{for all } t \in (0,1).
\]

To carry out this test, we can use the \texttt{equalbary\_test()} function. The function takes two estimated barycenter objects as input. These objects should be estimated on the same quantile grid; the hypothesis test is performed on this grid. The argument \texttt{B} specifies the number of resamples used to approximate the null distribution, and \texttt{seed} can be specified for reproducibility. For the Hotelling's $T^2$-type test, we can specify the proportion of total variance to retain through the \texttt{variance\_explained} argument, and the test uses the first $K$ empirical functional principal components needed to explain at least this proportion of the variance; the default is 0.95.

\begin{lstlisting}[language=R]
comparison <- equalbary_test(
  mcb_res_grp1,
  mcb_res_grp2,
  B = 1000,
  seed = 1
)
\end{lstlisting}

The output includes results for three test statistics: the supremum norm (in sublist \texttt{\$sup\_norm}), the integrated squared norm (in sublist \texttt{\$int\_sq\_norm}), and the Hotelling's $T^2$-type test statistic (in sublist \texttt{\$hotelling}). Each of these test statistics are described in Table \ref{tab:hypothesis_testing}. Each sublist contains the two-sided p-value for the test (\texttt{pval}), the observed test statistic (\texttt{T\_obs}), and the simulated null (\texttt{T\_sim}) test statistics. In the code below, we extract the p-values from using each of the three test statistics.
\begin{lstlisting}[language=R]
comparison$sup_norm$pval
comparison$int_sq_norm$pval
comparison$hotelling$pval
\end{lstlisting}

\begin{verbatim}
[1] 0.07
[1] 0.005
[1] 0.1442255
\end{verbatim}

\newpage

\section{Supplemental tables and figures}

\begin{table}[H]
\centering
\singlespacing
\caption{Example HIV-1 prevention trials with viral sequencing data, with number of study participants, diagnosed HIV-1 infections, and HIV+ individuals with sequencing data.}
\label{tab:hiv_studies}

\small
\renewcommand{\arraystretch}{1.25}
\begin{tabular}{
  >{\raggedright\arraybackslash}m{2.2cm}
  >{\raggedright\arraybackslash}m{2.0cm}
  >{\raggedright\arraybackslash}m{3.0cm}
  >{\raggedright\arraybackslash}m{3.3cm}
  >{\centering\arraybackslash}m{0.9cm}
  >{\centering\arraybackslash}m{1.4cm}
  >{\centering\arraybackslash}m{2.0cm}
}
\toprule
\textbf{Study (Years)} &
\textbf{Region} &
\textbf{Intervention} &
\textbf{Study Population} &
\textbf{N} &
\makecell{\textbf{HIV+}} &
\makecell{\textbf{HIV+ with}\\\textbf{sequence data}\\\textbf{available}} \\
\midrule
Phambili HVTN 503 (2007--2012) &
South Africa &
MRKAd5 HIV-1 gag/pol/nef subtype B vaccine &
Sexually active HIV-1 seronegative men and women &
801 &
62 &
43 \\
AMP HVTN 703/704 (2016--2022) &
Sub-Saharan Africa (HVTN 703); Americas, Switzerland (HVTN 704) &
VRC01 broadly neutralizing monoclonal antibody &
Women age 18--35 (HVTN 703); MSM and transgender persons age 18--50 (HVTN 704) &
4,623 &
174 &
172 \\
Imbokodo HVTN 705 (2017--2022) &
Sub-Saharan Africa &
Ad26.Mos4.HIV plus clade C gp140 mosaic vaccine regimen &
Women age 18--35 &
2,637 &
119 &
116 \\
\bottomrule
\end{tabular}
\renewcommand{\arraystretch}{1}
\end{table}

\begin{table}[H]
\centering
\singlespacing
\small
\caption{Base measure configurations for the Dirichlet processes used in Simulation Study 2. All Dirichlet processes were configured with a concentration parameter of 2.}
\label{tab:sim2_scenarios}
\begin{tabularx}{\textwidth}{@{}lXX@{}}
\toprule
\textbf{Scenario name} & \textbf{Group 1 configuration} & \textbf{Group 2 configuration} \\
\midrule

Null: normal
& $N(0,1)$
& $N(0,1)$ \\

Null: symmetric bimodal
& $50\%\,N(-1.5,0.5) + 50\%\,N(1.5,0.5)$
& $50\%\,N(-1.5,0.5) + 50\%\,N(1.5,0.5)$ \\

Null: non-symmetric bimodal
& $70\%\,N(-1,0.5) + 30\%\,N(2,0.5)$
& $70\%\,N(-1,0.5) + 30\%\,N(2,0.5)$ \\

\midrule

Mean shift = 0.1
& $N(0,1)$
& $N(0.1,1)$ \\

Mean shift = 0.2
& $N(0,1)$
& $N(0.2,1)$ \\

Mean shift = 0.3
& $N(0,1)$
& $N(0.3,1)$ \\

\midrule

Upper-tail shift = 1
& $N(0,1)$
& $95\%\,N(0,1) + 5\%\,N(1,0.25)$ \\

Upper-tail shift = 2
& $N(0,1)$
& $95\%\,N(0,1) + 5\%\,N(2,0.25)$ \\

Upper-tail shift = 3
& $N(0,1)$
& $95\%\,N(0,1) + 5\%\,N(3,0.25)$ \\

\midrule

Bimodal: 0.2
& $N(0,1)$
& $50\%\,N(-1,0.2) + 50\%\,N(1,0.2)$ \\

Bimodal: 0.4
& $N(0,1)$
& $50\%\,N(-1,0.4) + 50\%\,N(1,0.4)$ \\

Bimodal: 0.6
& $N(0,1)$
& $50\%\,N(-1,0.6) + 50\%\,N(1,0.6)$ \\

\bottomrule
\end{tabularx}
\end{table}

\newpage

\begin{figure}[H]
    \centering
     \includegraphics[page=2, width=\textwidth]{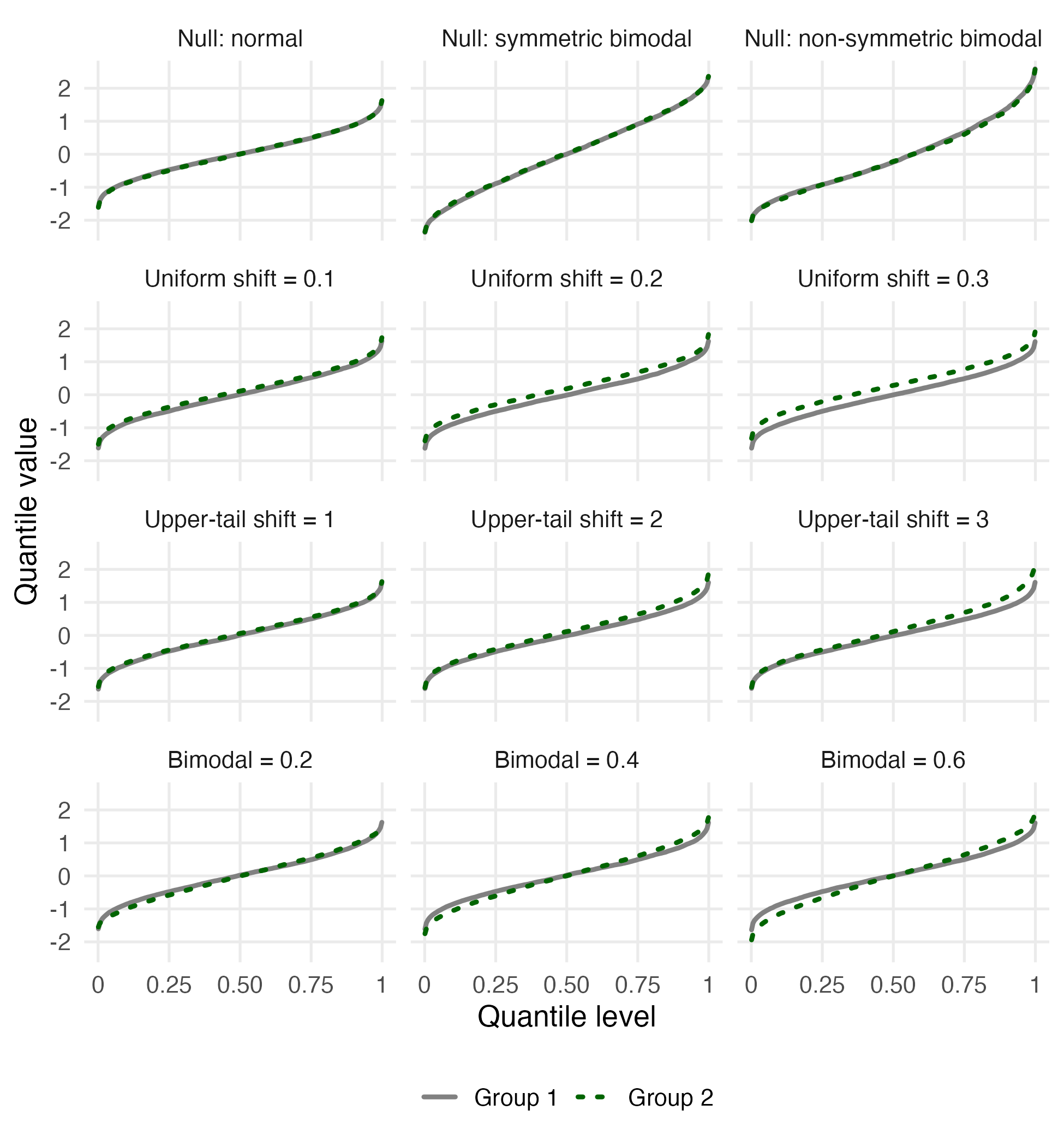}
    \caption{Wasserstein barycenter quantile functions for Group 1 (solid gray) and Group 2 (dotted green) for the conditions studied in Simulation Study 2.}
    \label{fig:sim_test_scenarios}
\end{figure}

\begin{figure}[H]
    \centering
     \includegraphics[width=\textwidth]{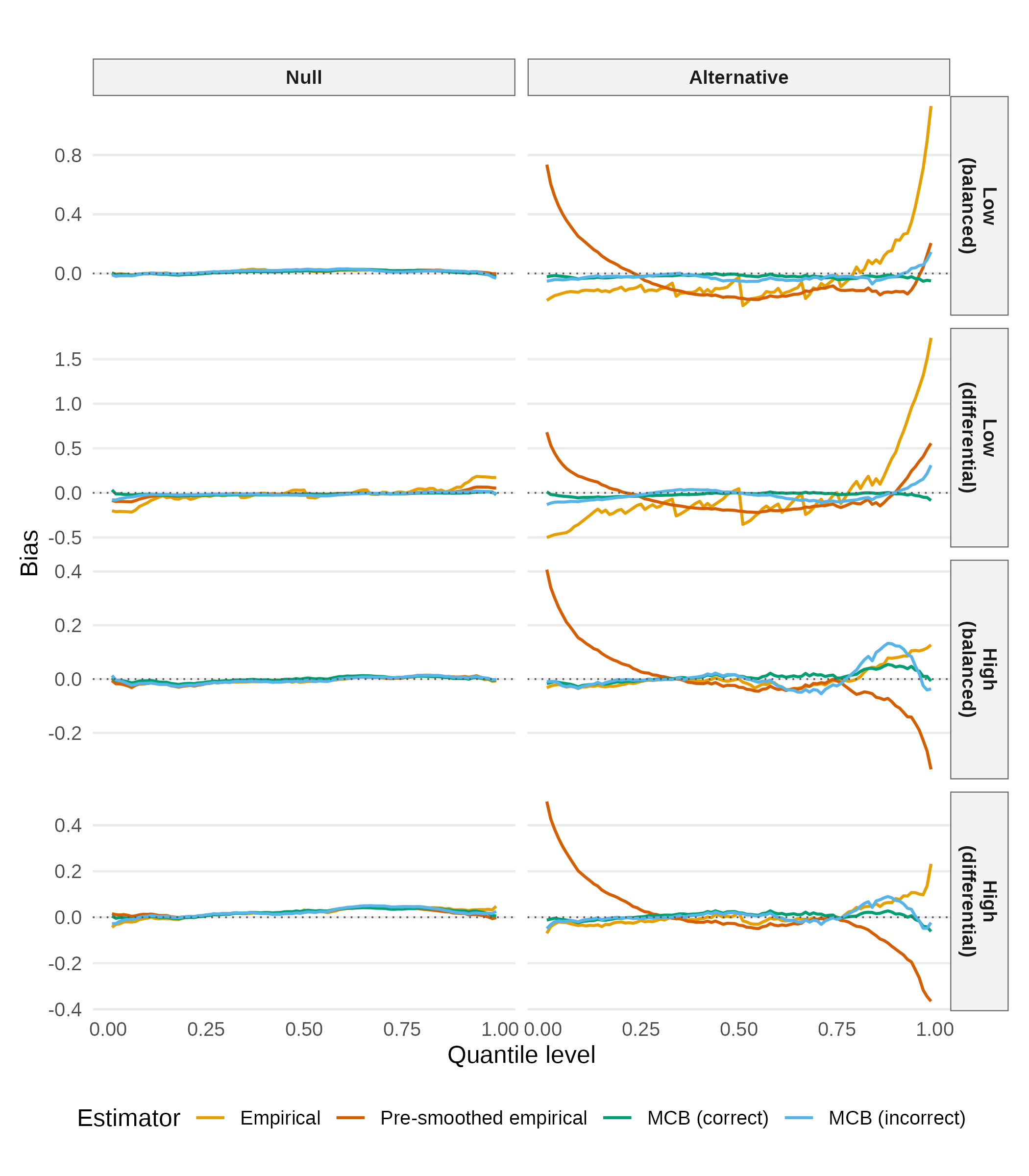}
\caption{Bias of the estimated quantile difference map $\widehat{\Delta}(t)$ in Simulation Study 1. Results are shown under the null and alternative sequence feature scenarios, across sequencing depth settings: low balanced, low differential, high balanced, and high differential. Each curve shows the bias of $\widehat{\Delta}(t)$ across quantile levels $t \in {0.01,0.02,\ldots,0.99}$. Estimators include the empirical Wasserstein barycenter, the pre-smoothed empirical Wasserstein barycenter, the correctly specified MCB estimator, and the misspecified spline-based MCB estimator.}
\label{fig:diff_bias}
\end{figure}

\begin{figure}[H]
    \centering
     \includegraphics[width=\textwidth]{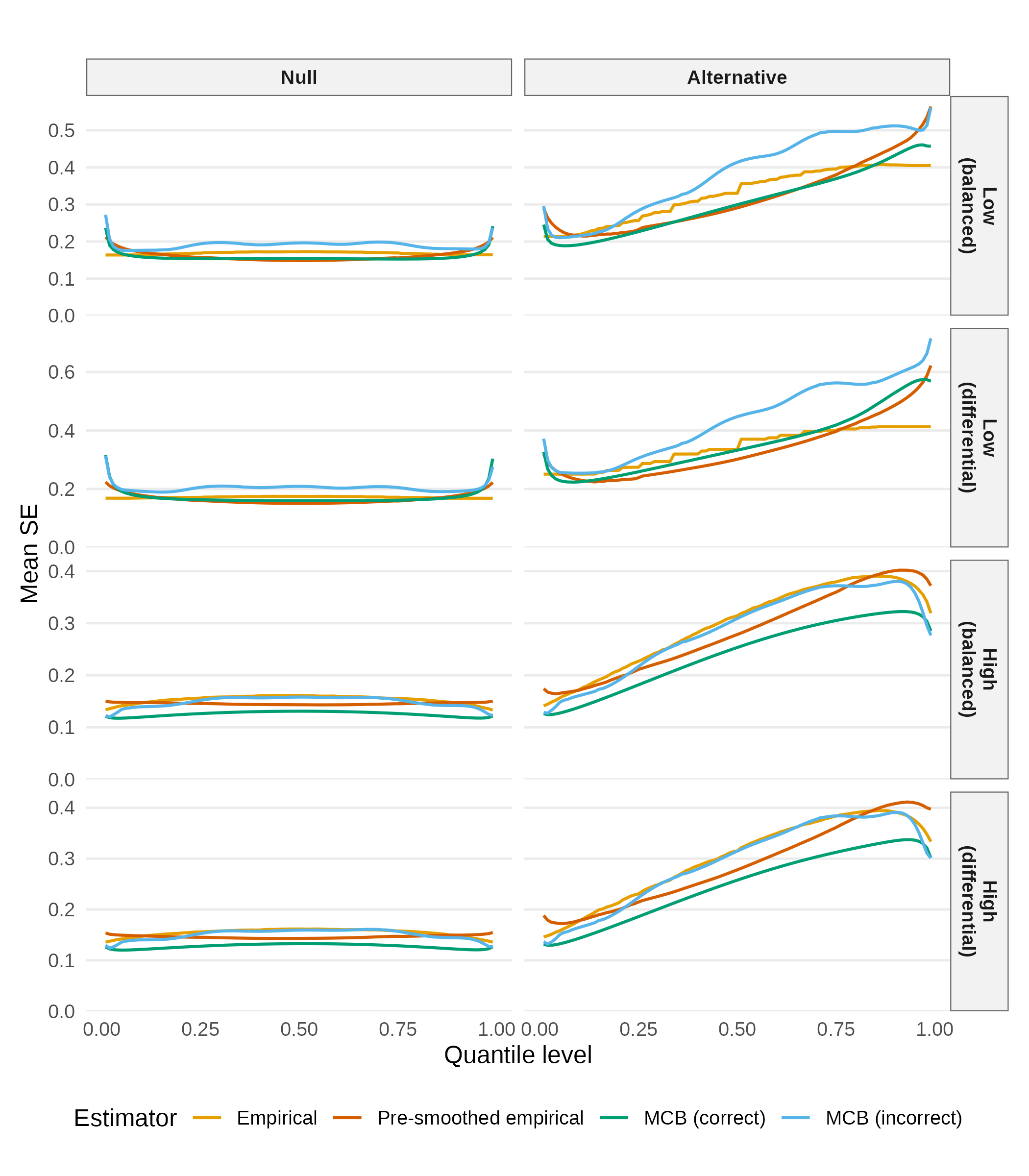}
\caption{Mean estimated standard error of the estimated quantile difference map $\widehat{\Delta}(t)$ in Simulation Study 1. Results are shown under the null and alternative sequence feature scenarios, across sequencing depth settings: low balanced, low differential, high balanced, and high differential. Each curve shows the mean estimated standard error of $\widehat{\Delta}(t)$ across quantile levels $t \in {0.01,0.02,\ldots,0.99}$. Estimators include the empirical Wasserstein barycenter, the pre-smoothed empirical Wasserstein barycenter, the correctly specified MCB estimator, and the misspecified spline-based MCB estimator.}
    \label{fig:delta_se}
\end{figure}

\begin{figure}[H]
    \centering
     \includegraphics[width=\textwidth]{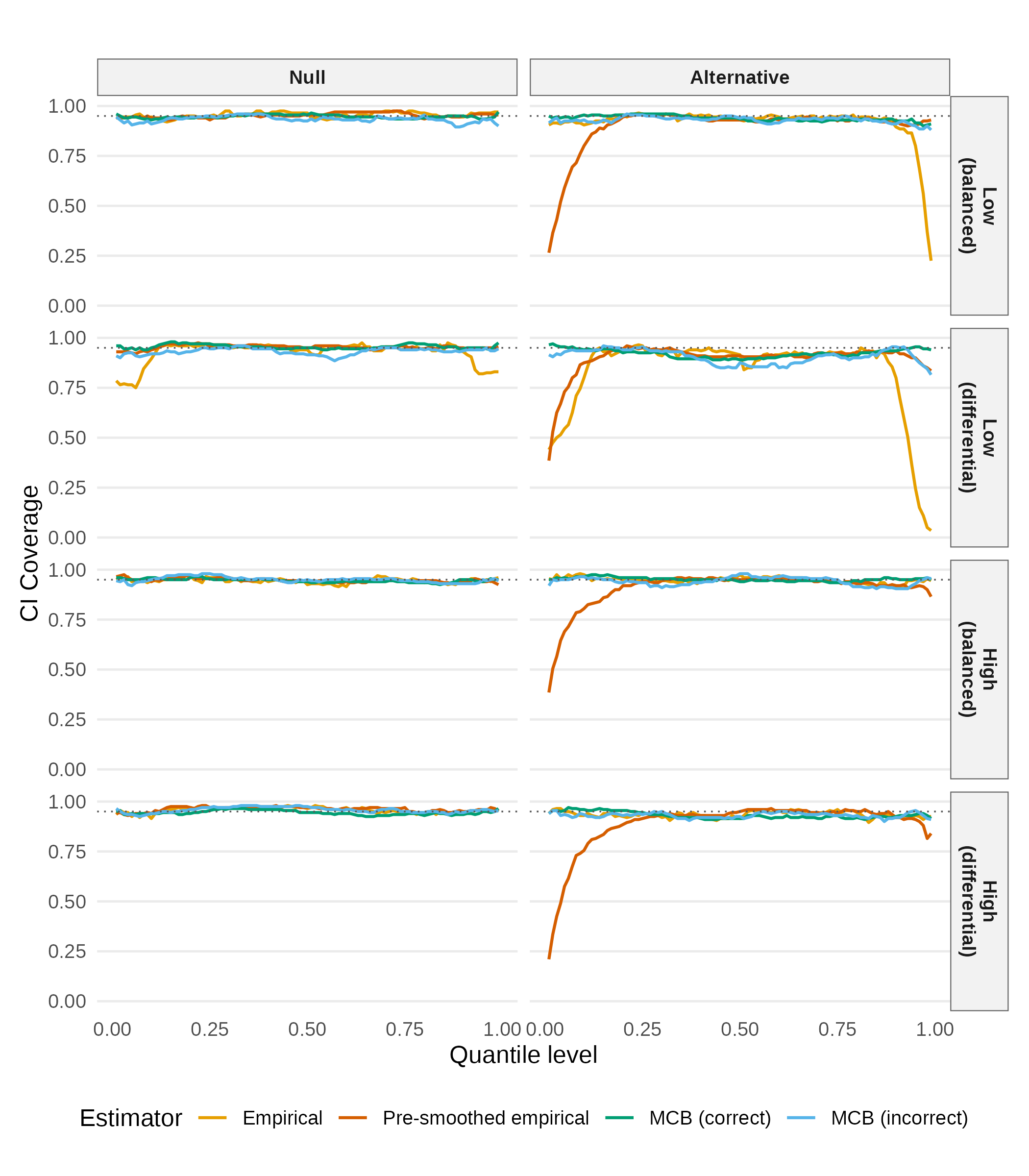}
\caption{95\% confidence interval coverage of the estimated quantile difference map $\widehat{\Delta}(t)$ in Simulation Study 1. Results are shown under the null and alternative sequence feature scenarios, across sequencing depth settings: low balanced, low differential, high balanced, and high differential. Each curve shows the 95\% confidence interval coverage of $\widehat{\Delta}(t)$ across quantile levels $t \in \{0.01,0.02,\ldots,0.99\}$. Estimators include the empirical Wasserstein barycenter, the pre-smoothed empirical Wasserstein barycenter, the correctly specified MCB estimator, and the misspecified spline-based MCB estimator.}
\label{fig:delta_ci}
\end{figure}

\begin{figure}[H]
    \centering
     \includegraphics[width=\textwidth]{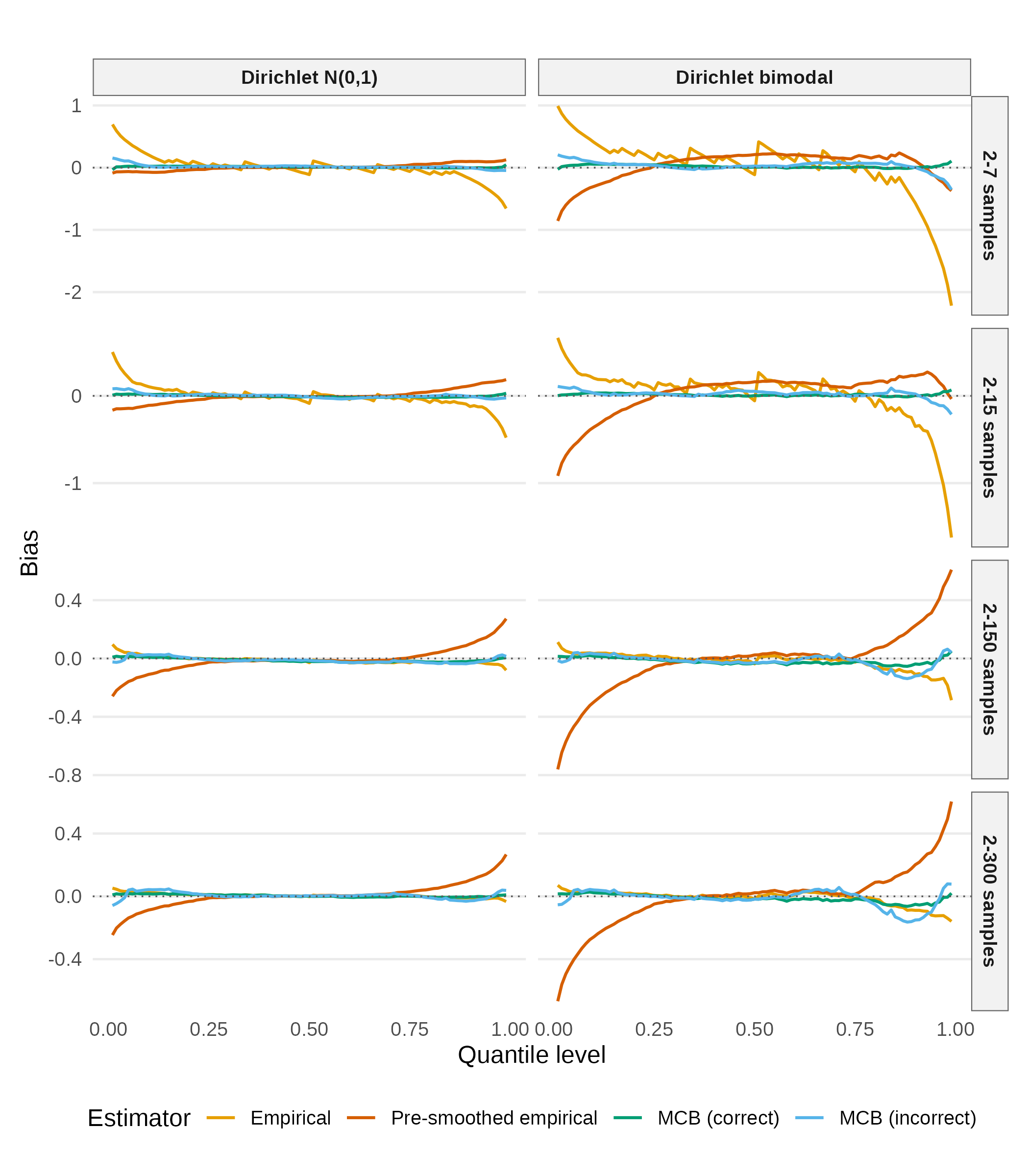}
\caption{Bias of the estimated group-specific Wasserstein barycenter quantile functions in Simulation Study 1. Results are shown for two data-generating settings: individual-level distributions generated from a Dirichlet process with a Normal$(0,1)$ base measure, and individual-level distributions generated from a Dirichlet process with a bimodal Normal mixture base measure. Results are shown across sequencing depths of 2--7, 2--15, 2--150, and 2--300. Each curve shows the bias of the estimated barycenter quantile function across quantile levels $t \in \{0.01,0.02,\ldots,0.99\}$. Estimators include the empirical Wasserstein barycenter, the pre-smoothed empirical Wasserstein barycenter, the correctly specified MCB estimator, and the misspecified spline-based MCB estimator.}
\label{fig:bary_bias}
\end{figure}

\begin{figure}[H]
    \centering
     \includegraphics[width=\textwidth]{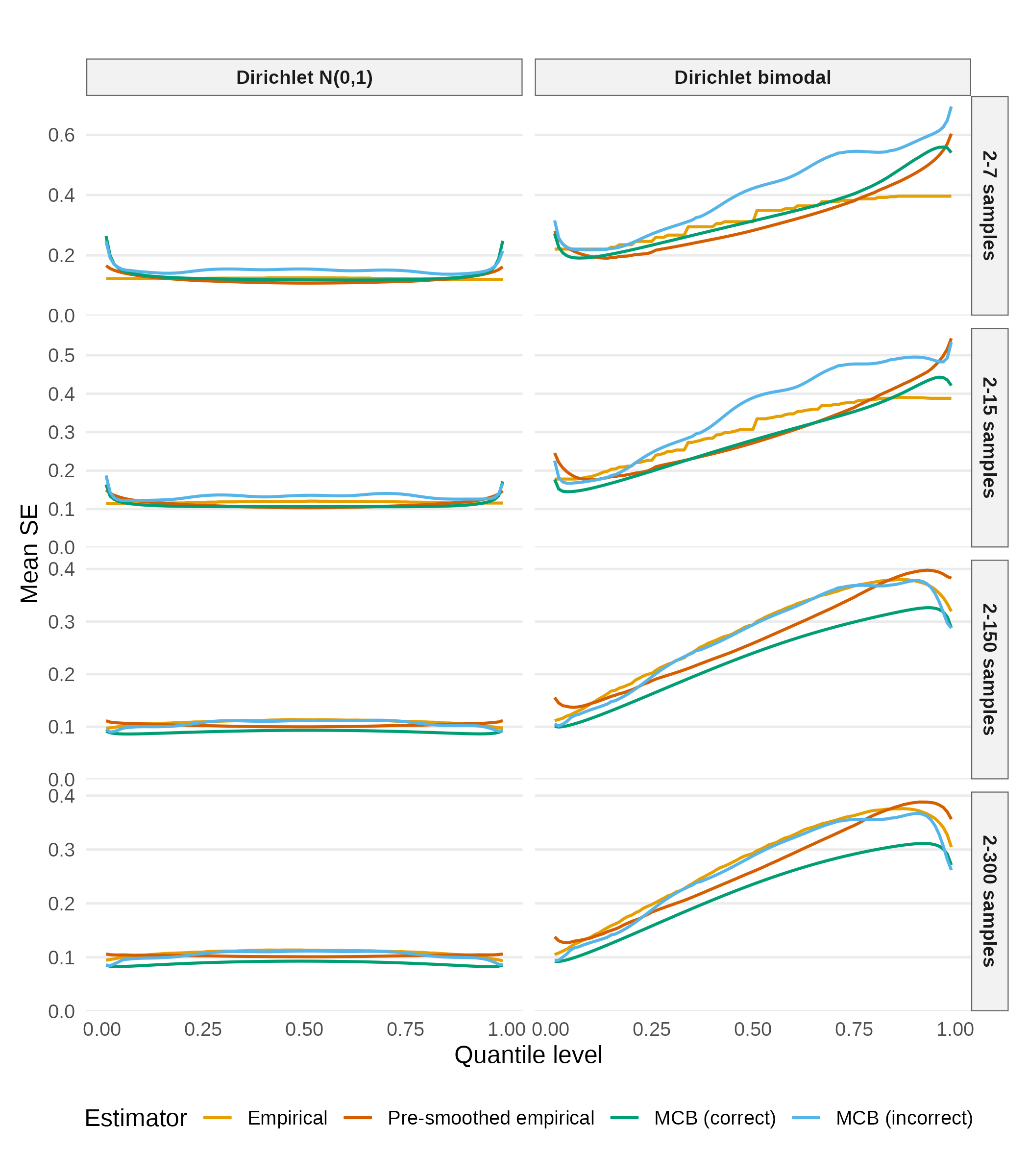}
\caption{Mean estimated standard error of the estimated group-specific Wasserstein barycenter quantile functions in Simulation Study 1. Results are shown for two data-generating settings: individual-level distributions generated from a Dirichlet process with a Normal$(0,1)$ base measure, and individual-level distributions generated from a Dirichlet process with a bimodal Normal mixture base measure. Results are shown across sequencing depths of 2--7, 2--15, 2--150, and 2--300. Each curve shows the mean estimated standard error of the estimated barycenter quantile function across quantile levels $t \in \{0.01,0.02,\ldots,0.99\}$. Estimators include the empirical Wasserstein barycenter, the pre-smoothed empirical Wasserstein barycenter, the correctly specified MCB estimator, and the misspecified spline-based MCB estimator.}
\label{fig:bary_se}
\end{figure}

\begin{figure}[H]
    \centering
     \includegraphics[width=\textwidth]{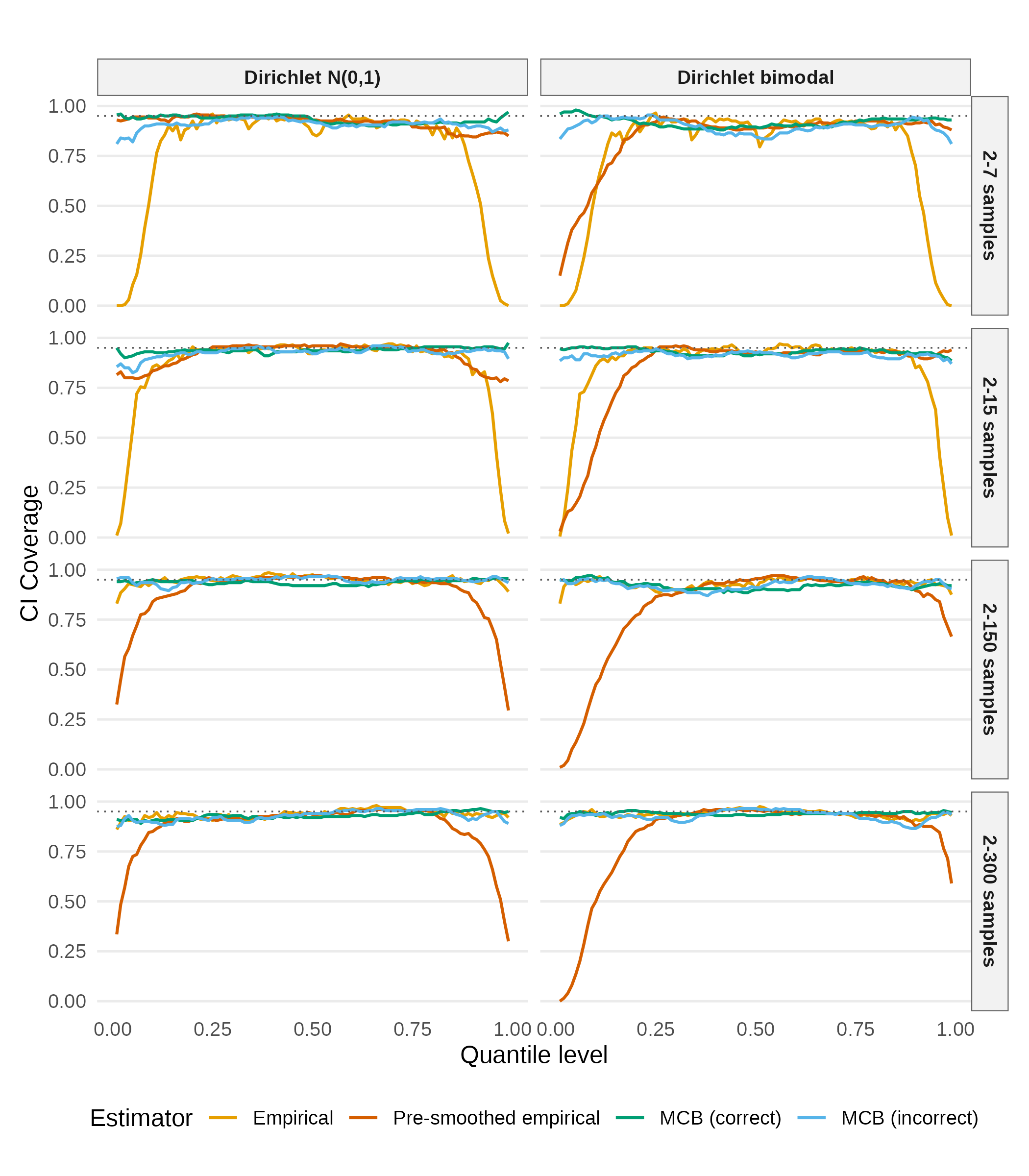}
\caption{95\% confidence interval coverage of the estimated group-specific Wasserstein barycenter quantile functions in Simulation Study 1. Results are shown for two data-generating settings: individual-level distributions generated from a Dirichlet process with a Normal$(0,1)$ base measure, and individual-level distributions generated from a Dirichlet process with a bimodal Normal mixture base measure. Results are shown across sequencing depths of 2--7, 2--15, 2--150, and 2--300. Each curve shows the 95\% confidence interval coverage of the estimated barycenter quantile function across quantile levels $t \in \{0.01,0.02,\ldots,0.99\}$. Estimators include the empirical Wasserstein barycenter, the pre-smoothed empirical Wasserstein barycenter, the correctly specified MCB estimator, and the misspecified spline-based MCB estimator.}
\label{fig:bary_ci}
\end{figure}

\begin{figure}[H]
    \centering
     \includegraphics[width=\textwidth]{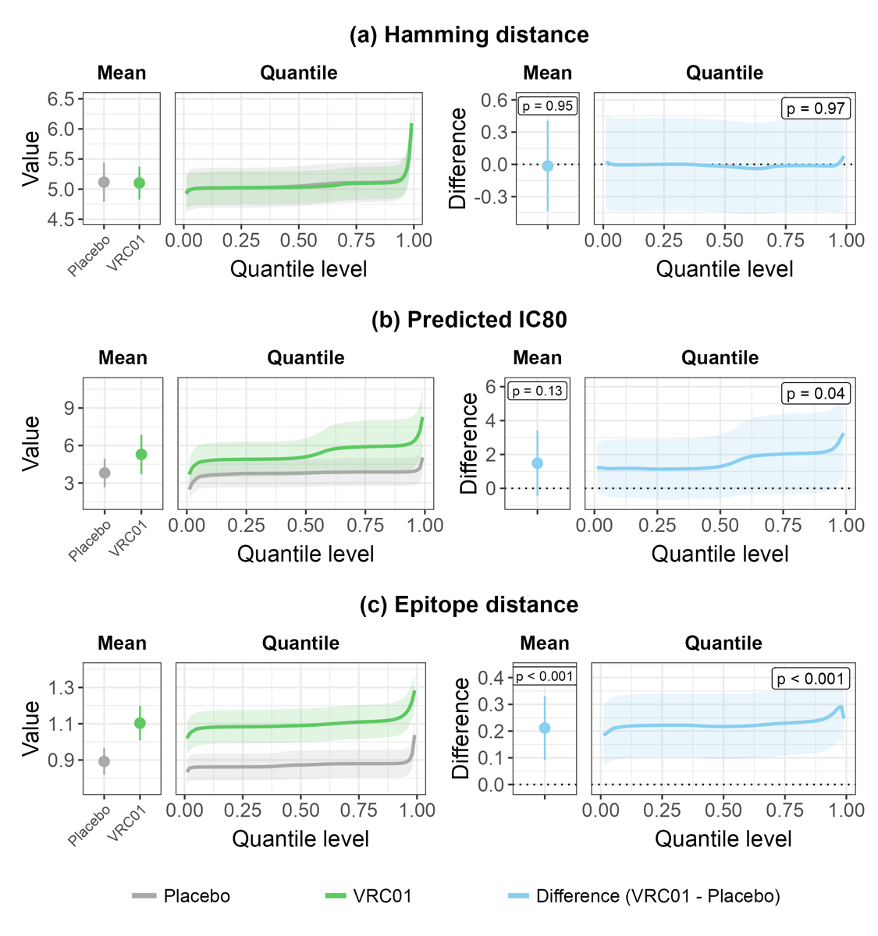}
    \caption{Using the spline-configured MCB estimator, analysis of three VRC01 susceptibility-associated features among AMP breakthrough infections: (A) Hamming distance to a VRC01-sensitive reference sequence at pre-selected sites, (B) predicted IC80, and (C) epitope distance. For each feature, the left panels show the arm-specific mean summaries and estimated barycenter quantile functions for the placebo arm and the pooled VRC01 arms. The right panels show the corresponding VRC01-minus-placebo estimated mean difference and quantile difference map. Shaded bands indicate pointwise 95\% confidence intervals. Reported $p$-values correspond to the mean-based unequal-variance two-sample $t$-test for the mean panels and the supremum test for equality of barycenters for the quantile panels.}    \label{fig:data_analysis_mcbary}
\end{figure}

\begin{table}[H]
\singlespacing
\centering
\caption{BCR data application vaccination schedules, PBMC sampling timepoints,
number of participants, and median (IQR) number of sequences per participant.}
\label{tab:bcr_comparisons}
\begin{tabular}{llccc}
\hline
\makecell[l]{Analysis\\label} &
\makecell[l]{Vaccination\\schedule} &
\makecell[c]{PBMC\\sample} &
\makecell[c]{Unique\\participants} &
\makecell[c]{Median sequences\\(IQR)} \\
\hline
wk8 eOD
& \makecell[l]{eOD at\\week~0}
& Week~8
& 16
& 34 (21.8, 61.2) \\[15pt]

wk16 eOD$\rightarrow$eOD
& \makecell[l]{eOD at week~0;\\eOD at week~8}
& Week~16
& 24
& 101.5 (53, 262.8)  \\[15pt]

wk16 eOD$\rightarrow$core
& \makecell[l]{eOD at week~0;\\core at week~8}
& Week~16
& 16
& 121 (62.8, 318.5)  \\[15pt]

wk24 eOD$\rightarrow$eOD
& \makecell[l]{eOD at week~0;\\eOD at week~8}
& Week~24
& 13
& 160 (53, 187)  \\[15pt]

wk24 eOD$\rightarrow$core
& \makecell[l]{eOD at week~0;\\core at week~8}
& Week~24
& 17
& 265 (36, 494)  \\[15pt]

\makecell[l]{wk24 eOD$\rightarrow$eOD$\rightarrow$core}
& \makecell[l]{eOD at week~0;\\eOD at week~8;\\core at week~16}
& Week~24
& 13
& 217 (143, 593) \\
\hline
\end{tabular}
\end{table}

\newpage

\section{Conditions for pointwise and uniform convergence of the estimated barycenter quantile function}\label{sec:asymptotic_conditions}

In the following, without loss of generality, suppose that we have unobserved realizations of distribution-valued random object $Y$ denoted as $Y_1, ..., Y_n$. Let $\widehat{Y}^{\mathrm{emp}}_i$ denote the observed empirical distribution for individual $i$ and $\widehat{Y}^{\mathrm{sm}}_i$ denote the pre-smoothed empirical distribution. Suppose that our target is the Wasserstein barycenter $\mathbb{B}[Y]$, as its quantile function $F^{-}_{\mathbb{B}[Y]} (t)$. In the following, we establish conditions for pointwise asymptotic normality and weak convergence to a Gaussian process for the quantile functions of the empirical barycenter $\widehat{\mu}^{\mathrm{emp}}(t) := \frac{1}{n} \sum_{i=1}^{n} F^{-}_{\hat{Y}_i^{\mathrm{emp}}}(t)$, the pre-smoothed empirical barycenter $\widehat{\mu}^{\mathrm{sm}}(t) := \frac{1}{n} \sum_{i=1}^{n} F^{-}_{\hat{Y}_i^{\mathrm{sm}}}(t)$, and the marginal-constructed barycenter $\widehat{\mu}^{\mathrm{mcb}}(t)$ \citep{peng_estimating_2026}.

\subsection{Empirical and pre-smoothed empirical barycenters}
\label{sec:emp_sm_asymptotics}

We first establish high-level conditions for pointwise asymptotic normality and weak convergence to a Gaussian process of the empirical and pre-smoothed empirical Wasserstein barycenters. The two estimators can be handled using the same argument, because both are obtained by first estimating each individual-level quantile function and then averaging these estimated quantile functions across individuals.

Let
\begin{equation*}
    Q_i(t) := F^{-}_{Y_i}(t),
    \qquad
    \mu(t) := F^{-}_{\mathbb{B}[Y]}(t)
    =
    \mathbb{E}\{Q_i(t)\},
    \qquad t \in (0,1),
\end{equation*}
where the last equality follows from the one-dimensional representation of the Wasserstein barycenter. For $a \in \{\mathrm{emp},\mathrm{sm}\}$, define
\begin{equation*}
    \widehat Q_i^{a}(t)
    :=
    F^{-}_{\widehat Y_i^{a}}(t),
    \qquad
    \widehat \mu^{a}(t)
    :=
    \frac{1}{n}\sum_{i=1}^n \widehat Q_i^{a}(t),
    \qquad t\in(0,1),
\end{equation*}
where $a=\mathrm{emp}$ corresponds to the empirical distribution $\widehat Y_i^{\mathrm{emp}}$, and $a=\mathrm{sm}$ corresponds to the pre-smoothed empirical distribution $\widehat Y_i^{\mathrm{sm}}$. Define the average first-stage quantile estimation error
\begin{equation*}
    R_n^{a}(t)
    :=
    \frac{1}{n}
    \sum_{i=1}^n
    \left\{
    \widehat Q_i^{a}(t)-Q_i(t)
    \right\}.
\end{equation*}

The role of the following conditions is to require that the first-stage error from estimating the individual-level quantile functions is asymptotically negligible relative to the $n^{-1/2}$ variation from averaging the latent quantile functions $Q_i$. These conditions may hold for either the empirical or pre-smoothed estimator when the within-individual sample sizes grow sufficiently rapidly as the number of subjects increases, and any smoothing bias vanishes sufficiently quickly. In practice, this means that, for a given dataset, the within-individual sample sizes must be sufficiently large relative to the number of subjects.

\begin{proposition}[Pointwise asymptotic normality of empirical and pre-smoothed empirical barycenters]
\label{prop:emp_sm_pointwise_an}
Fix $t\in(0,1)$ and let $a\in\{\mathrm{emp},\mathrm{sm}\}$. Suppose that the following conditions hold:
\begin{enumerate}
    \item The individual-level quantile value has finite, nonzero variance:
    \begin{equation*}
        0
        <
        \operatorname{Var}\{Q_i(t)\}
        <
        \infty.
    \end{equation*}

    \item The average first-stage quantile estimation error is negligible at the $n^{-1/2}$ scale:
    \begin{equation*}
        \sqrt n\, R_n^{a}(t)
        =
        \sqrt n
        \left[
        \frac{1}{n}
        \sum_{i=1}^n
        \left\{
        \widehat Q_i^{a}(t)-Q_i(t)
        \right\}
        \right]
        =
        o_p(1).
    \end{equation*}
\end{enumerate}
Then
\begin{equation*}
    \sqrt n
    \left\{
    \widehat \mu^{a}(t)-\mu(t)
    \right\}
    \rightsquigarrow
    N\left(
    0,
    \operatorname{Var}\{Q_i(t)\}
    \right).
\end{equation*}
Equivalently, both $\widehat \mu^{\mathrm{emp}}(t)$ and $\widehat \mu^{\mathrm{sm}}(t)$ are pointwise asymptotically normal whenever their corresponding first-stage estimation errors satisfy Condition 2.
\end{proposition}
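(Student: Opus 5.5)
The plan is to decompose the centered estimator into an oracle average of the latent quantile values plus the first-stage error term $R_n^a(t)$, and then apply the classical central limit theorem to the first piece and Slutsky's lemma to the second. Specifically, for either $a\in\{\mathrm{emp},\mathrm{sm}\}$ we write
\begin{equation*}
\sqrt n\left\{\widehat\mu^{a}(t)-\mu(t)\right\}
=
\sqrt n\left[\frac{1}{n}\sum_{i=1}^n\left\{Q_i(t)-\mu(t)\right\}\right]
+\sqrt n\,R_n^{a}(t),
\end{equation*}
which is an exact algebraic identity obtained by adding and subtracting $n^{-1}\sum_{i=1}^n Q_i(t)$. Nothing about the first-stage procedure enters the first term, so the empirical and pre-smoothed cases are handled identically.

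For the oracle term, the latent distributions $Y_1,\ldots,Y_n$ are i.i.d.\ realizations of the distribution-valued random object $Y$. Hence $Q_i(t)=F^{-}_{Y_i}(t)$, $i=1,\ldots,n$, are i.i.d.\ real random variables, since $Q_i(t)$ is a measurable functional of $Y_i$ for fixed $t$. By the one-dimensional barycenter representation \eqref{eq:barycenter_quantile_translation}, they have common mean $\mathbb{E}\{Q_i(t)\}=\mu(t)$. Condition 1 gives finite (and nonzero, so the limit is nondegenerate) variance. The Lindeberg--L\'evy central limit theorem then yields
\begin{equation*}
\sqrt n\,n^{-1}\sum_{i=1}^n\{Q_i(t)-\mu(t)\}\rightsquigarrow N(0,\operatorname{Var}\{Q_i(t)\}).
\end{equation*}
This also shows that $\widehat\mu^a(t)$ is asymptotically linear with influence function $Q_i(t)-\mu(t)$, matching \eqref{eq:mu_pointwise_clt}.

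For the remainder, Condition 2 states directly that $\sqrt n\,R_n^a(t)=o_p(1)$. Slutsky's lemma combines the two pieces and gives the stated limit. No independence between the first-stage errors and the $Q_i(t)$ is needed, because Slutsky only requires the remainder to vanish in probability.

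The only genuinely delicate point is measurability and the well-definedness of $\mu(t)$ as $\mathbb{E}\{Q_i(t)\}$. Finite variance implies finite mean, so the identity $\mu(t)=\mathbb{E}\{Q_i(t)\}$ from \eqref{eq:barycenter_quantile_translation} can be invoked. The substantive difficulty, verifying Condition 2 for particular sequencing-depth regimes, is deliberately assumed rather than proved here. I therefore expect no real obstacle beyond stating the decomposition cleanly.
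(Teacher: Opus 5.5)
Your proposal is correct and follows the paper's proof essentially step for step. Like the paper, you add and subtract $n^{-1}\sum_{i=1}^n Q_i(t)$, apply the classical central limit theorem to the oracle average using $\mu(t)=\mathbb{E}\{Q_i(t)\}$ and Condition 1, and absorb $\sqrt n\,R_n^{a}(t)=o_p(1)$ via Slutsky's theorem. Your added remarks are accurate and make explicit what the paper leaves implicit: that the $Q_i(t)$ are i.i.d., and that no independence between the first-stage errors and the latent quantiles is needed.
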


\begin{proof}
For either $a=\mathrm{emp}$ or $a=\mathrm{sm}$, add and subtract the average of the latent individual-level quantile functions:
\begin{equation*}
\begin{aligned}
    \widehat \mu^{a}(t)-\mu(t)
    &=
    \frac{1}{n}
    \sum_{i=1}^n
    \left\{
    Q_i(t)-\mu(t)
    \right\} \\
    &\quad+
    \frac{1}{n}
    \sum_{i=1}^n
    \left\{
    \widehat Q_i^{a}(t)-Q_i(t)
    \right\}.
\end{aligned}
\end{equation*}
Since $\mu(t)=\mathbb{E}\{Q_i(t)\}$, the first term satisfies
\begin{equation*}
    \sqrt n
    \left[
    \frac{1}{n}
    \sum_{i=1}^n
    \left\{
    Q_i(t)-\mu(t)
    \right\}
    \right]
    \rightsquigarrow
    N\left(
    0,
    \operatorname{Var}\{Q_i(t)\}
    \right)
\end{equation*}
by the central limit theorem. By Condition 2, the second term is $o_p(1)$ after multiplication by $\sqrt n$. The result follows by Slutsky's theorem.
\end{proof}

\begin{proposition}[Weak convergence of empirical and pre-smoothed empirical barycenters]
\label{prop:emp_sm_weak_convergence}
Let $\mathcal T=[\epsilon,1-\epsilon]$ for some $\epsilon\in(0,1/2)$, and let $a\in\{\mathrm{emp},\mathrm{sm}\}$. Define the latent empirical process
\begin{equation*}
    \mathbb Z_n(t)
    :=
    \frac{1}{\sqrt n}
    \sum_{i=1}^n
    \left\{
    Q_i(t)-\mathbb E Q_i(t)
    \right\},
    \qquad t\in\mathcal T.
\end{equation*}
Suppose that the following conditions hold:
\begin{enumerate}
    \item The latent quantile functions $Q_i$ are i.i.d. random elements of $\ell^\infty(\mathcal T)$ with finite second moment in sup-norm:
    \begin{equation*}
        \mathbb E
        \left[
        \|Q_i\|_{\infty,\mathcal T}^{2}
        \right]
        <
        \infty,
        \qquad
        \|Q_i\|_{\infty,\mathcal T}
        :=
        \sup_{t\in\mathcal T}|Q_i(t)|.
    \end{equation*}

    \item The latent empirical process $\mathbb Z_n$ is stochastically equicontinuous over $\mathcal T$. That is, for every $\varepsilon>0$,
    \begin{equation*}
        \lim_{\delta\downarrow 0}
        \limsup_{n\to\infty}
        P
        \left(
        \sup_{\substack{s,t\in\mathcal T\\ |s-t|<\delta}}
        |\mathbb Z_n(s)-\mathbb Z_n(t)|
        >
        \varepsilon
        \right)
        =
        0.
    \end{equation*}

    \item The average first-stage quantile estimation error is negligible uniformly over $\mathcal T$:
    \begin{equation*}
        \sqrt n
        \left\|
        R_n^{a}
        \right\|_{\infty,\mathcal T}
        =
        o_p(1).
    \end{equation*}
\end{enumerate}
Then
\begin{equation*}
    \sqrt n
    \left\{
    \widehat \mu^{a}-\mu
    \right\}
    \rightsquigarrow
    \mathbb G
    \qquad
    \text{in } \ell^\infty(\mathcal T),
\end{equation*}
where $\mathbb G$ is a tight mean-zero Gaussian process with covariance kernel
\begin{equation*}
    \Sigma(s,t)
    =
    \operatorname{Cov}\{Q_i(s),Q_i(t)\},
    \qquad s,t\in\mathcal T.
\end{equation*}
\end{proposition}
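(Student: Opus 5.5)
The plan mirrors the proof of Proposition~\ref{prop:emp_sm_pointwise_an}, lifted from a fixed $t$ to the whole index set $\mathcal T$. For either $a\in\{\mathrm{emp},\mathrm{sm}\}$ we add and subtract the latent quantile functions, which gives the exact decomposition
\begin{equation*}
\sqrt n\left\{\widehat\mu^{a}(t)-\mu(t)\right\}
=
\mathbb Z_n(t)+\sqrt n\,R_n^{a}(t),
\qquad t\in\mathcal T,
\end{equation*}
with equality as elements of $\ell^\infty(\mathcal T)$. It is finite because $\|Q_i\|_{\infty,\mathcal T}<\infty$ almost surely by Condition~1, and because the estimated quantile functions are bounded on $\mathcal T$. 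The argument then has two parts. First we show $\mathbb Z_n\rightsquigarrow\mathbb G$ in $\ell^\infty(\mathcal T)$. Second, Condition~3 gives $\|\sqrt n R_n^a\|_{\infty,\mathcal T}=o_p(1)$, so Slutsky's lemma in $\ell^\infty(\mathcal T)$ transfers the limit to $\sqrt n\{\widehat\mu^a-\mu\}$ (see, e.g., \citet{kosorok_introduction_2008}, Lemma~7.15). The second part is immediate once the first is established.

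For the weak convergence of $\mathbb Z_n$, we use the standard characterization in \citet{kosorok_introduction_2008}, Theorem~2.1. A sequence in $\ell^\infty(\mathcal T)$ converges weakly to a tight limit if (i) its finite-dimensional marginals converge and (ii) it is asymptotically uniformly equicontinuous in probability with respect to a semimetric under which $\mathcal T$ is totally bounded.

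For (i), fix $t_1,\dots,t_k\in\mathcal T$. The vectors $(Q_i(t_1),\dots,Q_i(t_k))$ are i.i.d. and have finite second moments, since $|Q_i(t_j)|\le\|Q_i\|_{\infty,\mathcal T}$ and Condition~1 holds. The multivariate CLT therefore gives convergence to $N(0,[\Sigma(t_j,t_l)]_{j,l})$ with $\Sigma(s,t)=\operatorname{Cov}\{Q_i(s),Q_i(t)\}$.

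For (ii), take the Euclidean metric $|s-t|$ on $\mathcal T=[\epsilon,1-\epsilon]$. The set $\mathcal T$ is compact, hence totally bounded, and Condition~2 is exactly asymptotic uniform equicontinuity in this metric. Therefore $\mathbb Z_n\rightsquigarrow\mathbb G$, where $\mathbb G$ is a tight Gaussian process. Its sample paths are almost surely uniformly $|\cdot|$-continuous, and its mean is zero and covariance is $\Sigma$ by the finite-dimensional limits.

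The main subtlety is measurability. $\mathbb Z_n$ need not be Borel measurable in the nonseparable space $\ell^\infty(\mathcal T)$, so we work in the Hoffmann--J{\o}rgensen sense with outer probabilities throughout. Read accordingly, Condition~2 is the statement the theorem requires. The one remaining check is that $\mathbb E\|Q_i\|_{\infty,\mathcal T}^2<\infty$ makes the covariance kernel well defined and finite.

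Combining the two parts, $\sqrt n\{\widehat\mu^a-\mu\}=\mathbb Z_n+o_p(1)$ uniformly on $\mathcal T$, and hence $\sqrt n\{\widehat\mu^a-\mu\}\rightsquigarrow\mathbb G$.
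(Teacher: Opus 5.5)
Your proposal is correct and matches the paper's proof step for step. Both use the same decomposition $\sqrt n\{\widehat\mu^a-\mu\}=\mathbb Z_n+\sqrt n R_n^a$, the multivariate CLT for finite-dimensional convergence, and Condition~2 to obtain $\mathbb Z_n\rightsquigarrow\mathbb G$, followed by Slutsky's lemma in $\ell^\infty(\mathcal T)$ to absorb the first-stage error. The only differences are cosmetic: you cite Kosorok's Theorem~2.1 and Lemma~7.15 where the paper cites van der Vaart's Theorems~18.14 and~18.10, and you spell out the outer-probability measurability caveat that the paper covers with ``the usual asymptotic measurability conditions.''
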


\begin{proof}
For either $a=\mathrm{emp}$ or $a=\mathrm{sm}$,
\begin{equation*}
\begin{aligned}
    \widehat \mu^{a}-\mu
    &=
    \frac{1}{n}
    \sum_{i=1}^n
    \left\{
    Q_i-\mu
    \right\}
    +
    R_n^{a} \\
    &=
    \frac{1}{n}
    \sum_{i=1}^n
    \left\{
    Q_i-\mathbb E(Q_i)
    \right\}
    +
    R_n^{a}.
\end{aligned}
\end{equation*}
Multiplying by $\sqrt n$ gives
\begin{equation*}
    \sqrt n
    \left\{
    \widehat \mu^{a}-\mu
    \right\}
    =
    \mathbb Z_n
    +
    \sqrt n R_n^{a}.
\end{equation*}

For any finite collection $t_1,\ldots,t_k\in\mathcal T$, the random vectors
\begin{equation*}
    \left(
    Q_i(t_1)-\mathbb E Q_i(t_1),
    \ldots,
    Q_i(t_k)-\mathbb E Q_i(t_k)
    \right)^\top
\end{equation*}
are i.i.d. mean-zero random vectors with finite second moments by Condition 1. Therefore, by the multivariate central limit theorem,
\begin{equation*}
    \left(
    \mathbb Z_n(t_1),
    \ldots,
    \mathbb Z_n(t_k)
    \right)^\top
    \rightsquigarrow
    N_k(0,\boldsymbol{\Sigma}),
\end{equation*}
where
\begin{equation*}
    \boldsymbol{\Sigma}_{j\ell}
    =
    \operatorname{Cov}\{Q_i(t_j),Q_i(t_\ell)\},
    \qquad 1\leq j,\ell\leq k.
\end{equation*}
For each fixed collection $t_1,\ldots,t_k\in\mathcal T$, 
$(\mathbb Z_n(t_1),\ldots,\mathbb Z_n(t_k))^\top$ converges to a mean-zero multivariate Gaussian vector with covariance matrix 
$\{\Sigma(t_j,t_\ell)\}_{j,\ell=1}^k$.

Condition 2 gives stochastic equicontinuity of $\mathbb Z_n$ over $\mathcal T$. Under the usual asymptotic measurability conditions for empirical processes, finite-dimensional convergence together with stochastic equicontinuity yields
\begin{equation*}
    \mathbb Z_n
    \rightsquigarrow
    \mathbb G
    \qquad
    \text{in } \ell^\infty(\mathcal T),
\end{equation*}
where $\mathbb G$ is a tight mean-zero Gaussian process with covariance kernel $\Sigma$  \citep[Theorem~18.14]{van2000asymptotic} .

By Condition 3,
\begin{equation*}
    \sqrt n
    \left\|
    R_n^{a}
    \right\|_{\infty,\mathcal T}
    =
    o_p(1).
\end{equation*}
Therefore,
\begin{equation*}
    \sqrt n
    \left\{
    \widehat \mu^{a}-\mu
    \right\}
    =
    \mathbb Z_n
    +
    o_p(1)
\end{equation*}
in $\ell^\infty(\mathcal T)$. The result follows by Slutsky's theorem in $\ell^\infty(\mathcal T)$ \citep[Theorem~18.10]{van2000asymptotic}.
\end{proof}

\begin{remark}[Interpretation of the first-stage negligibility conditions]
The empirical and pre-smoothed empirical barycenters differ only through the first-stage estimate $\widehat Q_i^{a}$ of the latent quantile function $Q_i$. The propositions above therefore separate the asymptotic variation into two parts: the between-individual variation in the latent quantile functions, and the within-individual first-stage estimation error. For the latter to be asymptotically negligible at the $n^{-1/2}$ scale, the within-individual sample sizes must increase sufficiently rapidly with the number of subjects, with sufficiently small smoothing bias if applicable. Under these conditions, the first-stage estimation error does not contribute to the first-order limiting distribution. In practice, these conditions require sufficiently large within-individual sample sizes relative to the number of subjects and may therefore be unrealistic in sparse sequencing settings. When these conditions are not satisfied, particularly in the tails, the first-stage estimation error may remain non-negligible, potentially introducing bias and preventing root-$n$ consistent estimation of the target barycenter quantile function.
\end{remark}

\subsection{Marginal-constructed barycenter}

We now state high-level conditions for pointwise asymptotic normality and weak convergence of the marginal-constructed barycenter estimator. The key idea is to estimate the distribution of the latent individual-level quantiles indirectly, through binomial mixture models for latent CDF values.

For each individual, let $Q_i(t)=F_{Y_i}^{-}(t)$ denote the latent individual-level quantile function. Fix $y\in\mathcal I$, where $\mathcal I\subseteq\mathbb R$ denotes the support of the sequence-derived feature. The latent CDF value at this threshold is $U_i(y)=F_{Y_i}(y)$, which is the probability that a draw from the $i$th latent distribution is less than or equal to $y$. Although $U_i(y)$ is not observed directly, the observed sample $Y_{i1},\ldots,Y_{im_i}$ provides the count $C_i(y)=\sum_{j=1}^{m_i}\mathbf 1\{Y_{ij}\le y\}$. Conditional on $m_i$ and $U_i(y)$,
\begin{equation*}
    C_i(y)\mid m_i,U_i(y)
    \sim
    \operatorname{Bin}\{m_i,U_i(y)\}.
\end{equation*}
Thus, for each fixed threshold $y$, the observed counts follow a binomial mixture model whose mixing distribution is the distribution of $U_i(y)$ across individuals. We denote this mixing distribution by
\begin{equation*}
    G(y,u)
    :=
    P\{U_i(y)\le u\}
    =
    P\{F_{Y_i}(y)\le u\},
    \qquad u\in[0,1].
\end{equation*}

The connection to barycenter estimation comes from translating statements about latent quantiles into statements about latent CDF values. For a fixed quantile level $t\in(0,1)$, let $H_t(y)=P\{Q_i(t)\le y\}$ denote the distribution function of the latent $t$-quantiles across individuals. Since the event $Q_i(t)\le y$ is equivalent to the event $F_{Y_i}(y)\ge t$, we have $H_t(y)=1-G(y,t-)$, where $G(y,t-):=\lim_{u\uparrow t}G(y,u)$. Therefore, the barycenter quantile function satisfies
\begin{equation*}
    \mu(t)
    =
    \mathbb E\{Q_i(t)\}
    =
    \int_{\mathcal I} y\,dH_t(y) = - \int_{\mathcal I} y\,dG(y, t-) 
\end{equation*}

The marginal-constructed barycenter estimator replaces $G(y,\cdot)$ by an estimator $\widehat G_n(y,\cdot)$ obtained from the binomial mixture model. Define $\widehat H_{n,t}(y):=1-\widehat G_n(y,t-)$ and
\begin{equation*}
    \widehat \mu^{\mathrm{mcb}}(t)
    :=
    \int_{\mathcal I} y\,d\widehat H_{n,t}(y).
\end{equation*}
In the following, $O_i=(m_i,Y_{i1},\ldots,Y_{im_i})$ denotes the observed data for individual $i$. The following propositions state high-level conditions under which $\widehat \mu^{\mathrm{mcb}}$ is asymptotically normal pointwise and converges weakly as a process. The proofs follow from arguments in  \citet{peng_estimating_2026}.

\begin{proposition}[Pointwise asymptotic normality of the marginal-constructed barycenter]
\label{prop:mcb_pointwise_an}
Fix $t\in(0,1)$. Suppose that the following conditions hold:
\begin{enumerate}
    \item The feature support is bounded, $\mathcal I=[a,b]\subset\mathbb R$ with $|a|,|b|<\infty$. With probability tending to one, $y\mapsto \widehat H_{n,t}(y)$ is a cumulative distribution function on $\mathcal I$.

    \item The estimator $\widehat H_{n,t}$ admits the integrated asymptotic linear expansion
    \begin{equation*}
        \sqrt n
        \left\{
        \widehat H_{n,t}(y)-H_t(y)
        \right\}
        =
        \frac{1}{\sqrt n}
        \sum_{i=1}^n
        \psi_t(O_i;y)
        +
        r_{n,t}(y),
        \qquad y\in\mathcal I,
    \end{equation*}
    where $\mathbb E\{\psi_t(O_i;y)\}=0$ for each fixed $y$,
    \begin{equation*}
        \mathbb E
        \left[
        \int_{\mathcal I}
        |\psi_t(O_i;y)|\,dy
        \right]
        <
        \infty,
    \end{equation*}
    and
    \begin{equation*}
        \int_{\mathcal I}
        |r_{n,t}(y)|\,dy
        =
        o_p(1).
    \end{equation*}

    \item Define the integrated influence function
    \begin{equation*}
        \phi^{\mathrm{mcb}}(O_i;t)
        :=
        -\int_{\mathcal I}
        \psi_t(O_i;y)\,dy.
    \end{equation*}
    Assume that
    \begin{equation*}
        0
        <
        \operatorname{Var}
        \left\{
        \phi^{\mathrm{mcb}}(O_i;t)
        \right\}
        =
        \sigma_{\mathrm{mcb}}^2(t)
        <
        \infty.
    \end{equation*}
\end{enumerate}
Then
\begin{equation*}
    \sqrt n
    \left\{
    \widehat \mu^{\mathrm{mcb}}(t)-\mu(t)
    \right\}
    =
    \frac{1}{\sqrt n}
    \sum_{i=1}^n
    \phi^{\mathrm{mcb}}(O_i;t)
    +
    o_p(1),
\end{equation*}
and hence
\begin{equation*}
    \sqrt n
    \left\{
    \widehat \mu^{\mathrm{mcb}}(t)-\mu(t)
    \right\}
    \rightsquigarrow
    N
    \left(
    0,
    \sigma_{\mathrm{mcb}}^2(t)
    \right).
\end{equation*}
\end{proposition}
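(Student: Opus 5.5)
The argument is a direct linearization: write $\mu(t)$ and $\widehat\mu^{\mathrm{mcb}}(t)$ as integrals of their respective distribution functions, subtract, and insert the expansion from Condition 2. The key identity for a CDF $F$ supported on a bounded interval $[a,b]$ is integration by parts, $\int_a^b y\,dF(y) = b - \int_a^b F(y)\,dy$. Since $H_t$ is the distribution function of $Q_i(t)$, and $Q_i(t)\in\mathcal I=[a,b]$ by Condition 1, we have $H_t(b)=1$ and $H_t(a-)=0$. This gives $\mu(t)=b-\int_{\mathcal I}H_t(y)\,dy$. On the event, of probability tending to one, that $\widehat H_{n,t}$ is a CDF on $\mathcal I$, the same identity gives $\widehat\mu^{\mathrm{mcb}}(t)=b-\int_{\mathcal I}\widehat H_{n,t}(y)\,dy$. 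A possible point mass at $a$ does not affect the identity, because the formula $\int y\,dF = b\,F(b) - a\,F(a-) - \int F$ holds with $F(a-)=0$.

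Subtracting and multiplying by $\sqrt n$, on that event we get
\begin{equation*}
\sqrt n\{\widehat\mu^{\mathrm{mcb}}(t)-\mu(t)\} = -\int_{\mathcal I}\sqrt n\{\widehat H_{n,t}(y)-H_t(y)\}\,dy = -\int_{\mathcal I}\frac{1}{\sqrt n}\sum_{i=1}^n\psi_t(O_i;y)\,dy-\int_{\mathcal I}r_{n,t}(y)\,dy.
\end{equation*}
The remainder term is bounded in absolute value by $\int_{\mathcal I}|r_{n,t}|\,dy=o_p(1)$ by Condition 2. In the leading term, I exchange the finite sum with the integral, which is legitimate because each $\int|\psi_t(O_i;y)|\,dy$ is finite almost surely by the integrability assumption. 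The leading term then becomes $n^{-1/2}\sum_i\phi^{\mathrm{mcb}}(O_i;t)$. Since the complementary event has vanishing probability, the difference on it is $o_p(1)$ as well, and this yields the asymptotically linear representation.

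For the CLT, I need $\mathbb E\,\phi^{\mathrm{mcb}}(O_i;t)=0$. This follows from Fubini's theorem, since $\mathbb E\int|\psi_t|\,dy<\infty$ and $\mathbb E\psi_t(O_i;y)=0$ for each $y$. The $O_i$ are i.i.d., and Condition 3 gives finite nonzero variance $\sigma^2_{\mathrm{mcb}}(t)$. The classical Lindeberg--L\'evy CLT combined with Slutsky's theorem then gives the normal limit.

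The main obstacle is the bookkeeping in the integration-by-parts step, specifically handling the left limit $G(y,t-)$ and possible atoms of $H_t$ and $\widehat H_{n,t}$ at the endpoints. I would resolve it by working directly with the CDFs $H_t$ and $\widehat H_{n,t}$ and the bounded support. The Fubini and measurability justification for exchanging the sum, the integral and the expectation is secondary.
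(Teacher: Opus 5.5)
Your argument is correct. The identity $\mu(t)=b-\int_{\mathcal I}H_t(y)\,dy$ and its analogue for $\widehat H_{n,t}$ on the high-probability event, followed by integrating the expansion of Condition 2, applying Fubini to get $\mathbb E\,\phi^{\mathrm{mcb}}(O_i;t)=0$, and using the CLT with Slutsky, gives exactly the stated result. The minus sign in $\phi^{\mathrm{mcb}}$ is the one this identity produces.

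The paper gives no in-text proof and defers to \citet{peng_estimating_2026}, so your write-up is a self-contained version of the argument the conditions are built for. It uses the implicit assumptions that the $O_i$ are i.i.d.\ and that $\psi_t$ is jointly measurable.
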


\begin{proof}
See \citet{peng_estimating_2026}.
\end{proof}

\begin{proposition}[Weak convergence of the marginal-constructed barycenter]
\label{prop:mcb_weak_convergence}
Let $\mathcal T=[\epsilon,1-\epsilon]$ for some $\epsilon\in(0,1/2)$. Suppose that the following conditions hold:
\begin{enumerate}
    \item The feature support is bounded, $\mathcal I=[a,b]\subset\mathbb R$ with $|a|,|b|<\infty$. With probability tending to one, $y\mapsto \widehat H_{n,t}(y)$ is a cumulative distribution function on $\mathcal I$ uniformly over $t\in\mathcal T$.

    \item Uniformly over $t\in\mathcal T$, the estimator $\widehat H_{n,t}$ admits the asymptotic linear expansion
    \begin{equation*}
        \sqrt n
        \left\{
        \widehat H_{n,t}(y)-H_t(y)
        \right\}
        =
        \frac{1}{\sqrt n}
        \sum_{i=1}^n
        \psi_t(O_i;y)
        +
        r_{n,t}(y),
        \qquad y\in\mathcal I,
    \end{equation*}
    where $\mathbb E\{\psi_t(O_i;y)\}=0$ for all $t\in\mathcal T$ and $y\in\mathcal I$, and
    \begin{equation*}
        \sup_{t\in\mathcal T}
        \int_{\mathcal I}
        |r_{n,t}(y)|\,dy
        =
        o_p(1).
    \end{equation*}

    \item For each $t\in\mathcal T$, define
    \begin{equation*}
        \phi^{\mathrm{mcb}}(O_i;t)
        :=
        -\int_{\mathcal I}
        \psi_t(O_i;y)\,dy.
    \end{equation*}
    The class of functions $\mathcal F_{\mathrm{mcb}}
        :=
        \left\{
        O\mapsto \phi^{\mathrm{mcb}}(O;t)
        :
        t\in\mathcal T
        \right\}$
    is $P$-Donsker.
\end{enumerate}
Then
\begin{equation*}
    \sqrt n
    \left\{
    \widehat \mu^{\mathrm{mcb}}-\mu
    \right\}
    \rightsquigarrow
    \mathbb G_{\mathrm{mcb}}
    \qquad
    \text{in } \ell^\infty(\mathcal T),
\end{equation*}
where $\mathbb G_{\mathrm{mcb}}$ is a tight mean-zero Gaussian process with covariance kernel
\begin{equation*}
    \Sigma_{\mathrm{mcb}}(s,t)
    =
    \mathbb E
    \left[
    \phi^{\mathrm{mcb}}(O_i;s)
    \phi^{\mathrm{mcb}}(O_i;t)
    \right],
    \qquad s,t\in\mathcal T.
\end{equation*}
\end{proposition}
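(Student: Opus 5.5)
The plan is to reduce the process-level statement to an integrated linear representation that holds uniformly in $t$, mirroring the structure of the proof of Proposition~\ref{prop:emp_sm_weak_convergence}.

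\textbf{Step 1: pointwise representation.} Fix $t\in\mathcal T$. On the event (of probability tending to one, uniformly in $t$ by Condition 1) that $\widehat H_{n,t}$ and $H_t$ are distribution functions on $\mathcal I=[a,b]$, integration by parts gives
\begin{equation*}
    \int_{\mathcal I} y\,dF(y)=b-\int_a^b F(y)\,dy
\end{equation*}
for any CDF $F$ on $[a,b]$. Applying this to $\widehat H_{n,t}$ and $H_t$ and subtracting yields
\begin{equation*}
    \widehat\mu^{\mathrm{mcb}}(t)-\mu(t)=-\int_{\mathcal I}\{\widehat H_{n,t}(y)-H_t(y)\}\,dy .
\end{equation*}
Boundedness of $\mathcal I$ is what makes the boundary terms vanish and the integral finite. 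Multiplying by $\sqrt n$ and inserting the expansion in Condition 2, then exchanging the finite sum with the $y$-integral, gives
\begin{equation*}
    \sqrt n\{\widehat\mu^{\mathrm{mcb}}(t)-\mu(t)\}=\frac{1}{\sqrt n}\sum_{i=1}^n\phi^{\mathrm{mcb}}(O_i;t)-\int_{\mathcal I} r_{n,t}(y)\,dy .
\end{equation*}

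\textbf{Step 2: uniform control of the remainder.} The remainder satisfies
\begin{equation*}
    \sup_{t\in\mathcal T}\Bigl|\int_{\mathcal I} r_{n,t}(y)\,dy\Bigr|\le\sup_{t\in\mathcal T}\int_{\mathcal I}|r_{n,t}(y)|\,dy=o_p(1).
\end{equation*}
The event on which the integration by parts fails has vanishing probability uniformly in $t$, so it contributes $o_p(1)$ in $\ell^\infty(\mathcal T)$. Hence
\begin{equation*}
    \sqrt n\{\widehat\mu^{\mathrm{mcb}}-\mu\}=\mathbb G_n\phi^{\mathrm{mcb}}(\cdot;t)+o_p(1)\quad\text{in }\ell^\infty(\mathcal T),
\end{equation*}
where $\mathbb G_n$ denotes the empirical process indexed by $\mathcal F_{\mathrm{mcb}}$. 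The mean-zero property $\mathbb E\,\phi^{\mathrm{mcb}}(O_i;t)=0$ follows from $\mathbb E\,\psi_t(O_i;y)=0$ by Fubini. I would add an integrability assumption such as $\mathbb E\int|\psi_t|\,dy<\infty$, which is implicit in the Donsker requirement of square-integrability.

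\textbf{Step 3: Donsker limit.} Since $\mathcal F_{\mathrm{mcb}}$ is $P$-Donsker (Condition 3), $\mathbb G_n\rightsquigarrow\mathbb G_{\mathrm{mcb}}$ in $\ell^\infty(\mathcal T)$. The limit is a tight mean-zero Gaussian process with covariance $\mathbb E[\phi^{\mathrm{mcb}}(O;s)\phi^{\mathrm{mcb}}(O;t)]$, using the mean-zero property from Step 2. Slutsky's theorem in $\ell^\infty(\mathcal T)$ \citep[Theorem~18.10]{van2000asymptotic} then completes the proof.

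\textbf{Main obstacle.} The delicate step is Step 1's uniformity. The integration-by-parts identity must hold simultaneously for all $t$ on a single high-probability event, and measurability of $\sup_t$ must be handled with outer probabilities. Condition 1's uniform CDF property supplies the first requirement, and the second I would treat with the standard outer-probability conventions.
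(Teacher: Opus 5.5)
Your proposal is correct and is essentially the intended argument. The paper itself only defers to \citet{peng_estimating_2026}, but its conditions are tailored to exactly your route: integrate by parts on the bounded support, integrate the linear expansion over $y$, bound the integrated remainder uniformly in $t$ on the single high-probability event from Condition~1, then apply the Donsker property and Slutsky, mirroring the proof of Proposition~\ref{prop:emp_sm_weak_convergence}.

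One side remark needs correcting. The Fubini condition $\mathbb E\int_{\mathcal I}|\psi_t(O;y)|\,dy<\infty$ that you add (it appears in Proposition~\ref{prop:mcb_pointwise_an} but is omitted here) is not implied by square-integrability of $\phi^{\mathrm{mcb}}(\cdot;t)$. Cancellation inside $\int_{\mathcal I}\psi_t(O;y)\,dy$ can produce a bounded $\phi^{\mathrm{mcb}}$ with nonzero mean even though $\mathbb E\{\psi_t(O;y)\}=0$ for every $y$. You are right to impose it, but it must be kept as an explicit assumption rather than derived from Condition~3.
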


\begin{proof}
See \citet{peng_estimating_2026}.
\end{proof}

\newpage

\section{Additional simulation results}\label{sec:additional_sims}

\subsection{Simulation Study 1}\label{sec:additional_sim1}

In addition to 50 individuals per group, we conducted Simulation Study 1 with 20 and 100 individuals per group. Results are shown below. 

\subsubsection{20 individuals per group}

\paragraph{Null scenario summary results}\leavevmode\par\vspace{0.5em}

\begin{center}
    \includegraphics[width=\textwidth,height=0.75\textheight,keepaspectratio]{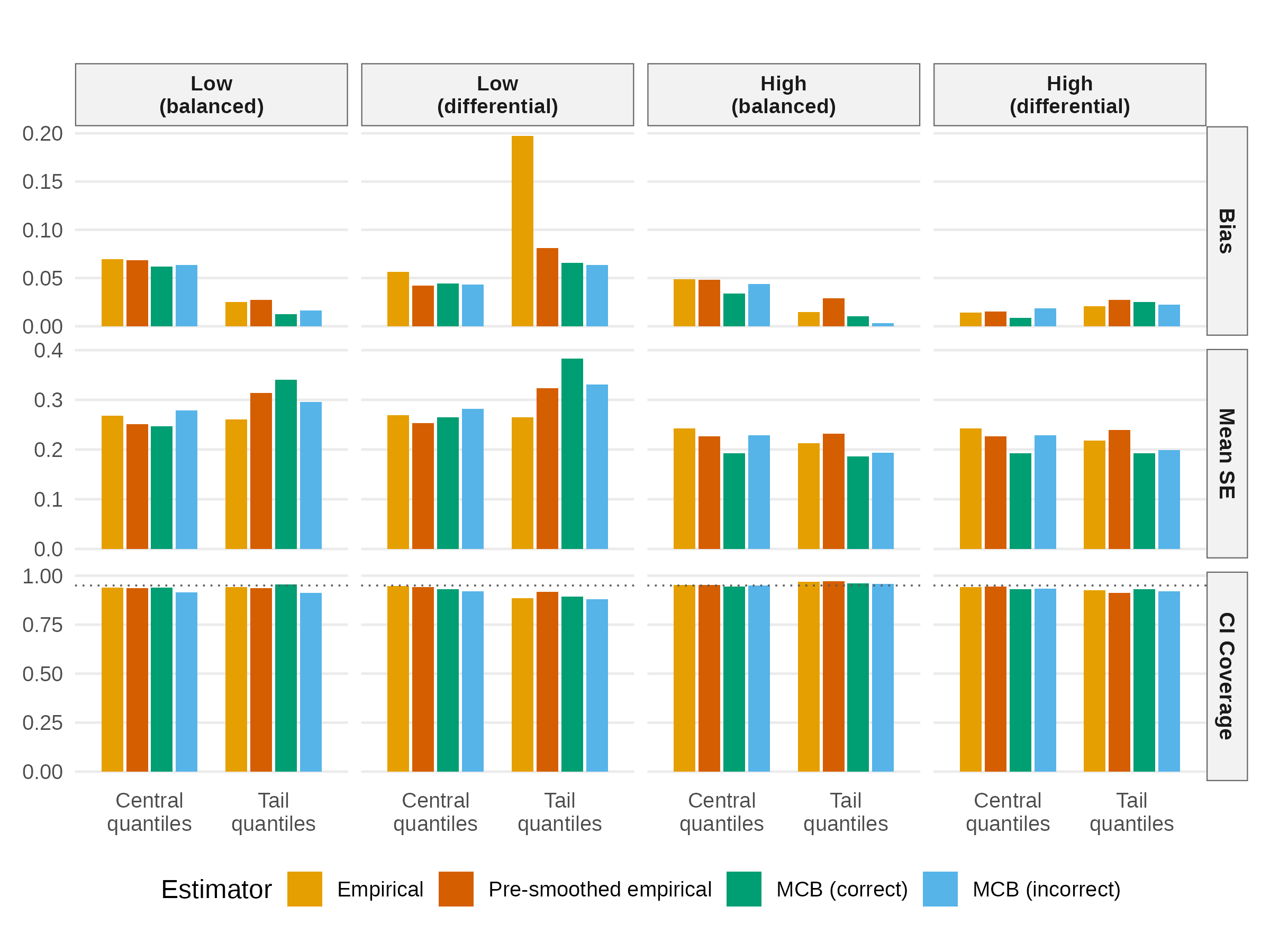}
\captionof{figure}{Summary of estimation of the difference in quantiles $\Delta(t)$ under the null scenario for Simulation Study 1 with 20 individuals per group. Results are shown across four sequencing depth settings: low balanced, low differential, high balanced, and high differential sequencing depth. For each setting, we compare the empirical Wasserstein barycenter, pre-smoothed empirical barycenter, correctly specified MCB estimator, and misspecified spline-based MCB estimator. Performance is summarized by average absolute bias, mean estimated standard error, and pointwise confidence interval coverage for $\Delta(t)$, averaged separately over the central quantiles ($0.06$--$0.94$) and tail quantiles ($0.01$--$0.05$ and $0.95$--$0.99$). The dotted horizontal line in the coverage panels indicates the nominal 95\% coverage level.}    \label{fig:sim1_summary_diff_null_n20}
\end{center}

\clearpage
\paragraph{Alternative scenario summary results}\leavevmode\par\vspace{0.5em}

\begin{center}
    \includegraphics[width=\textwidth,height=0.75\textheight,keepaspectratio]{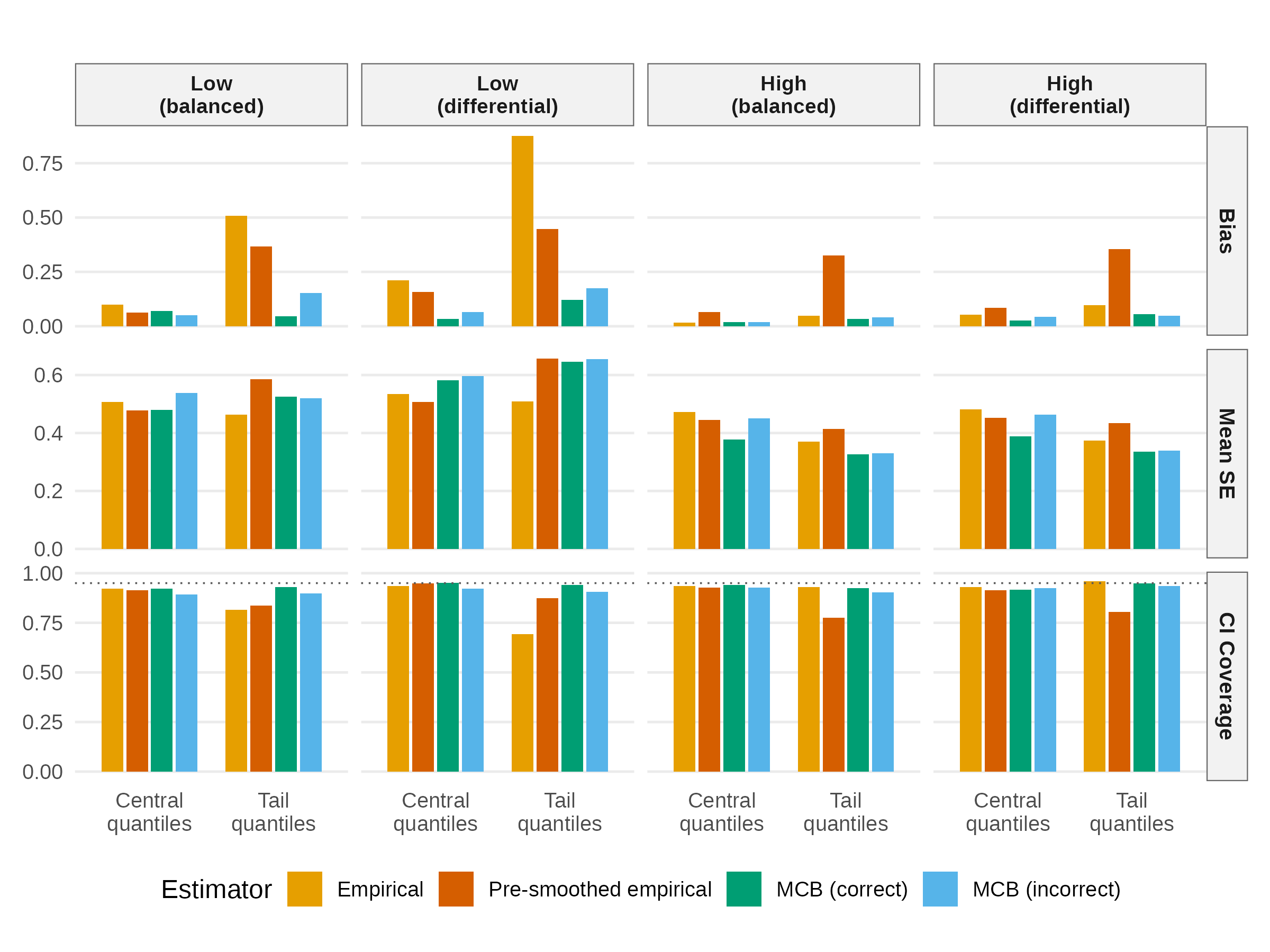}
\captionof{figure}{Summary of estimation of the difference in quantiles $\Delta(t)$ under the alternative scenario for Simulation Study 1 with 20 individuals per group. Results are shown across four sequencing depth settings: low balanced, low differential, high balanced, and high differential sequencing depth. For each setting, we compare the empirical Wasserstein barycenter, pre-smoothed empirical barycenter, correctly specified MCB estimator, and misspecified spline-based MCB estimator. Performance is summarized by average absolute bias, mean estimated standard error, and pointwise confidence interval coverage for $\Delta(t)$, averaged separately over the central quantiles ($0.06$--$0.94$) and tail quantiles ($0.01$--$0.05$ and $0.95$--$0.99$). The dotted horizontal line in the coverage panels indicates the nominal 95\% coverage level.}    \label{fig:sim1_summary_diff_alt_n20}
\end{center}

\clearpage
\paragraph{Quantile difference map $\Delta(t)$: pointwise bias}\leavevmode\par\vspace{0.5em}

\begin{center}    \includegraphics[width=\textwidth,height=0.75\textheight,keepaspectratio]{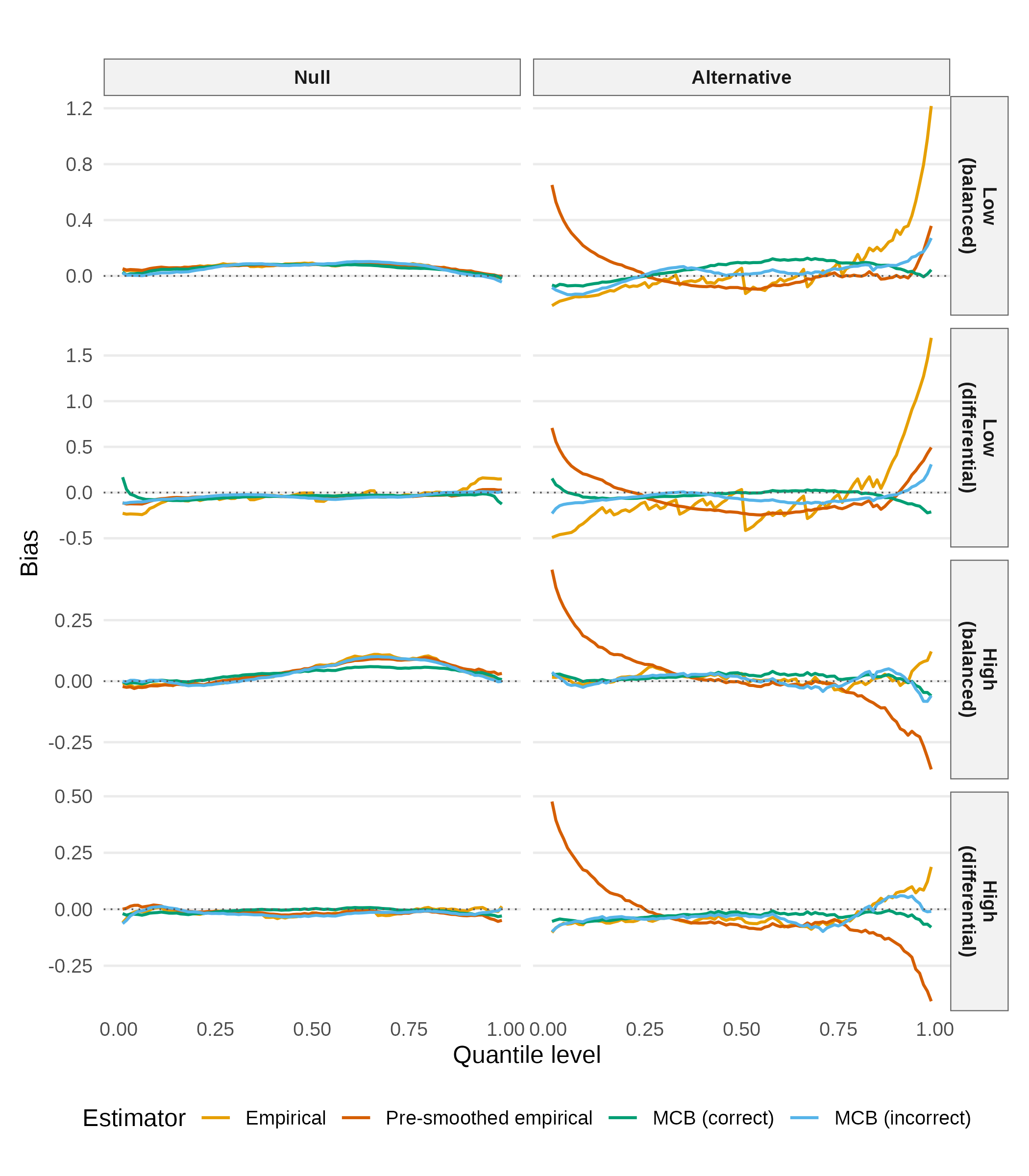}
    \captionof{figure}{Bias of the estimated quantile difference map $\widehat{\Delta}(t)$ in Simulation Study 1 with 20 individuals per group. Results are shown under the null and alternative sequence feature scenarios, across sequencing depth scenarios: low balanced, low differential, high balanced, and high differential. Each curve shows the bias of $\widehat{\Delta}(t)$ across quantile levels $t \in \{0.01,0.02,\ldots,0.99\}$. Estimators include the empirical Wasserstein barycenter, the pre-smoothed empirical Wasserstein barycenter, the correctly specified MCB estimator, and the misspecified spline-based MCB estimator.}
    \label{fig:sim1_diff_bias_n20}
\end{center}

\clearpage
\paragraph{Quantile difference map $\Delta(t)$: pointwise mean estimated standard error}\leavevmode\par\vspace{0.5em}

\begin{center}
    \includegraphics[width=\textwidth,height=0.75\textheight,keepaspectratio]{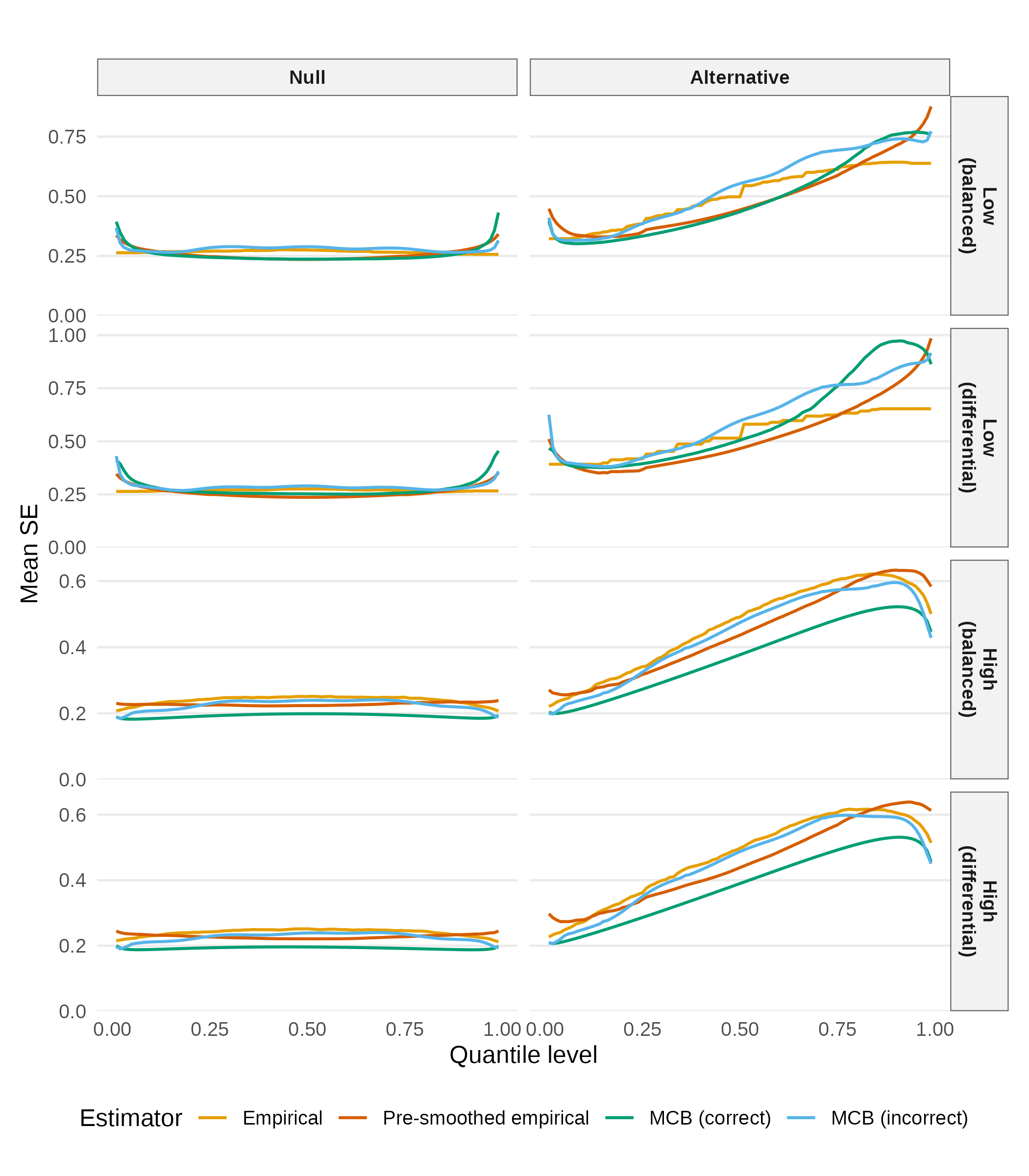}
    \captionof{figure}{Mean estimated standard error of the estimated quantile difference map $\widehat{\Delta}(t)$ in Simulation Study 1 with 20 individuals per group. Results are shown under the null and alternative sequence feature scenarios, across sequencing depth scenarios: low balanced, low differential, high balanced, and high differential. Each curve shows the mean estimated standard error of $\widehat{\Delta}(t)$ across quantile levels $t \in \{0.01,0.02,\ldots,0.99\}$. Estimators include the empirical Wasserstein barycenter, the pre-smoothed empirical Wasserstein barycenter, the correctly specified MCB estimator, and the misspecified spline-based MCB estimator.}
    \label{fig:sim1_diff_mean_se_n20}
\end{center}

\clearpage
\paragraph{Quantile difference map $\Delta(t)$: pointwise confidence coverage}\leavevmode\par\vspace{0.5em}

\begin{center}
    \includegraphics[width=\textwidth,height=0.75\textheight,keepaspectratio]{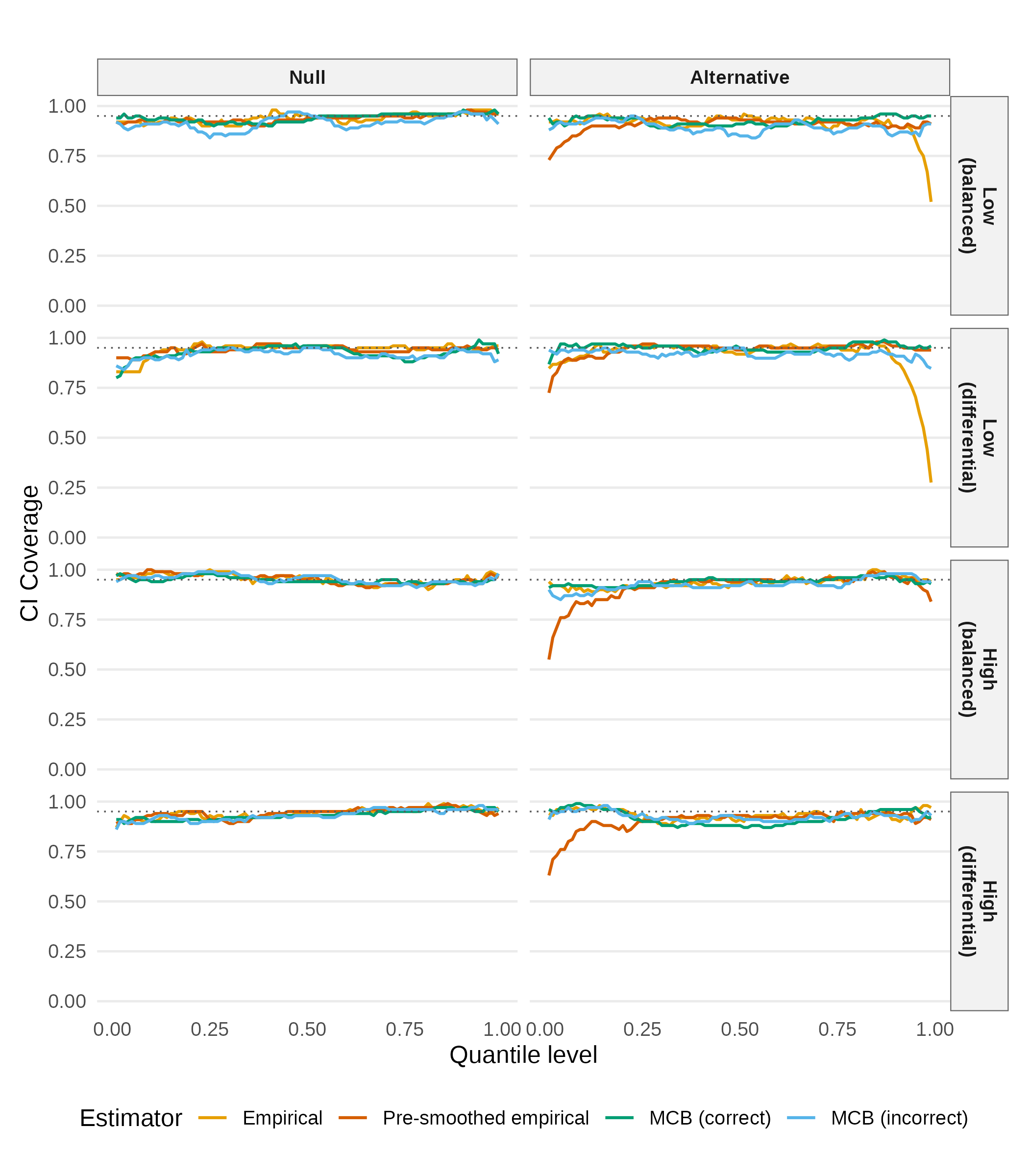}
    \captionof{figure}{95\% confidence interval coverage of the estimated quantile difference map $\widehat{\Delta}(t)$ in Simulation Study 1 with 20 individuals per group. Results are shown under the null and alternative sequence feature scenarios, across sequencing depth scenarios: low balanced, low differential, high balanced, and high differential. Each curve shows the 95\% confidence interval coverage of $\widehat{\Delta}(t)$ across quantile levels $t \in \{0.01,0.02,\ldots,0.99\}$. Estimators include the empirical Wasserstein barycenter, the pre-smoothed empirical Wasserstein barycenter, the correctly specified MCB estimator, and the misspecified spline-based MCB estimator.}
    \label{fig:sim1_diff_ci_coverage_n20}
\end{center}

\clearpage
\paragraph{Group-specific barycenter $\mu_g(t)$: pointwise bias}\leavevmode\par\vspace{0.5em}

\begin{center}
    \includegraphics[width=\textwidth,height=0.75\textheight,keepaspectratio]{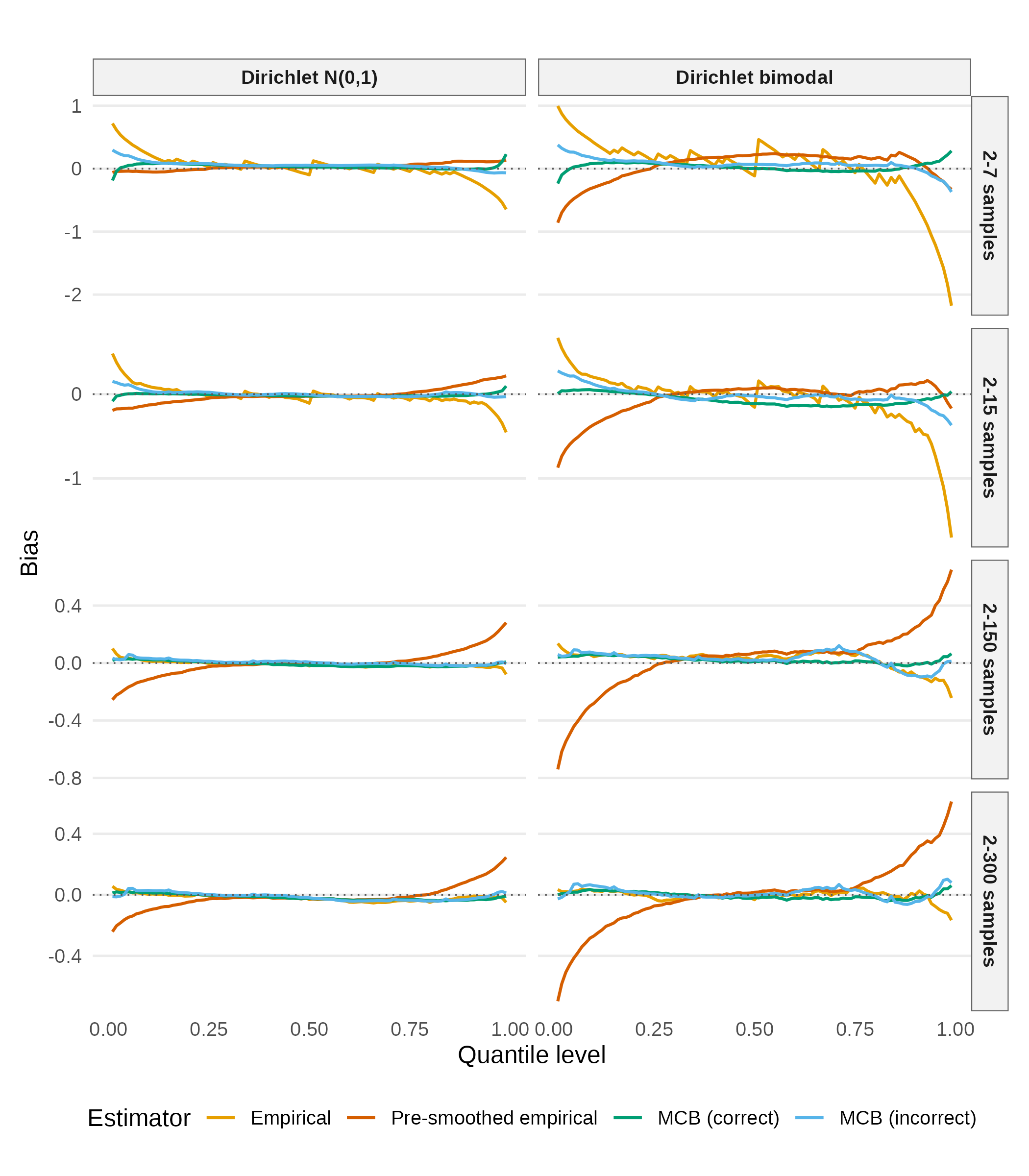}
    \captionof{figure}{Bias of the estimated group-specific Wasserstein barycenter quantile functions in Simulation Study 1 with 20 individuals per group. Results are shown for two data-generating settings: individual-level distributions generated from a Dirichlet process with a Normal$(0,1)$ base measure, and individual-level distributions generated from a Dirichlet process with a bimodal Normal mixture base measure. Results are shown across sequencing depths of 2--7, 2--15, 2--150, and 2--300. Each curve shows the bias of the estimated barycenter quantile function across quantile levels $t \in \{0.01,0.02,\ldots,0.99\}$. Estimators include the empirical Wasserstein barycenter, the pre-smoothed empirical Wasserstein barycenter, the correctly specified MCB estimator, and the misspecified spline-based MCB estimator.}
    \label{fig:sim1_bary_bias_n20}
\end{center}

\clearpage
\paragraph{Group-specific barycenter $\mu_g(t)$: pointwise mean estimated standard error}\leavevmode\par\vspace{0.5em}

\begin{center}
    \includegraphics[width=\textwidth,height=0.75\textheight,keepaspectratio]{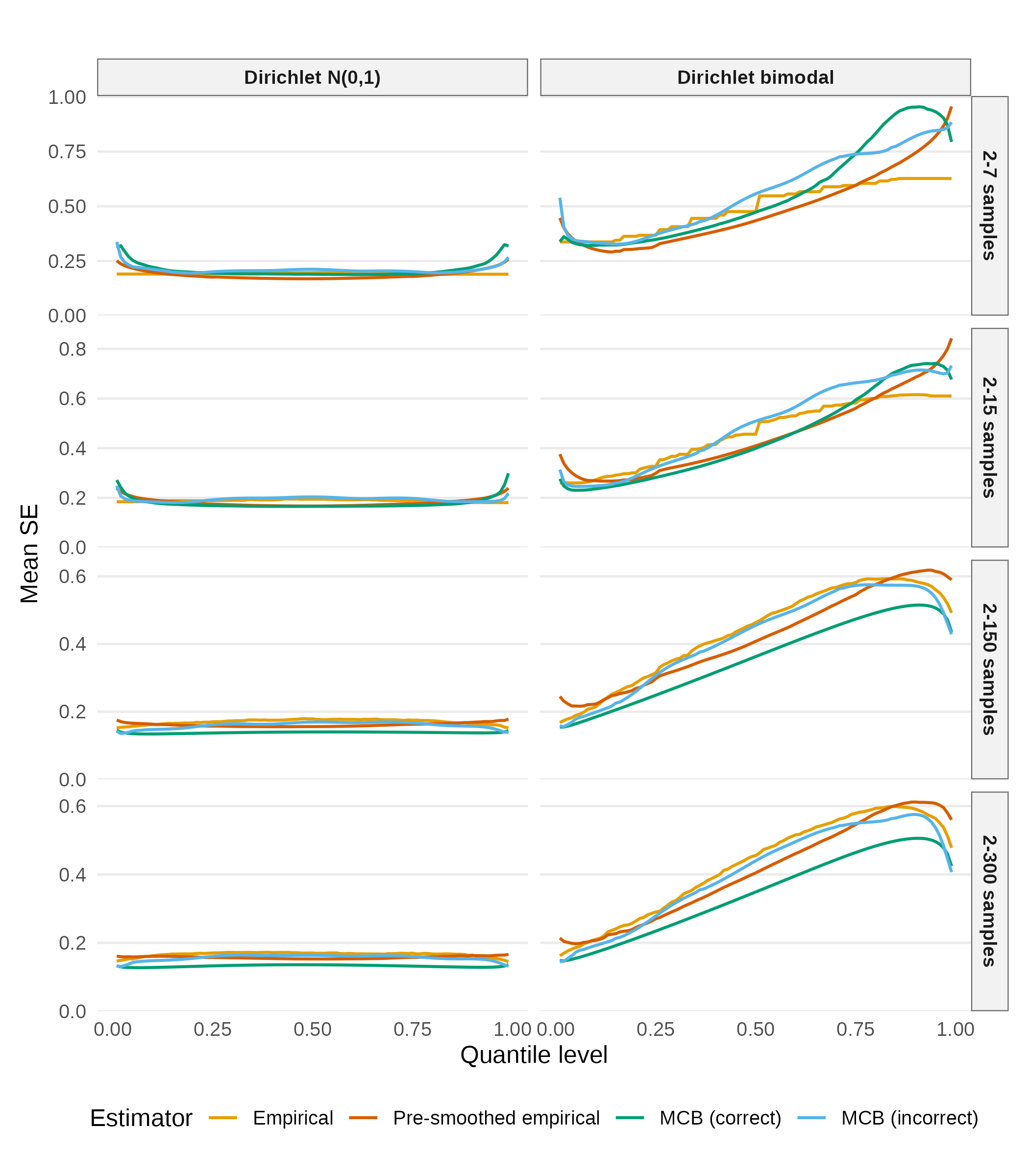}
    \captionof{figure}{Mean estimated standard error of the estimated group-specific Wasserstein barycenter quantile functions in Simulation Study 1 with 20 individuals per group. Results are shown for two data-generating settings: individual-level distributions generated from a Dirichlet process with a Normal$(0,1)$ base measure, and individual-level distributions generated from a Dirichlet process with a bimodal Normal mixture base measure. Results are shown across sequencing depths of 2--7, 2--15, 2--150, and 2--300. Each curve shows the mean estimated standard error of the estimated barycenter quantile function across quantile levels $t \in \{0.01,0.02,\ldots,0.99\}$. Estimators include the empirical Wasserstein barycenter, the pre-smoothed empirical Wasserstein barycenter, the correctly specified MCB estimator, and the misspecified spline-based MCB estimator.}
    \label{fig:sim1_bary_mean_se_n20}
\end{center}

\clearpage
\paragraph{Group-specific barycenter $\mu_g(t)$: pointwise confidence interval coverage}\leavevmode\par\vspace{0.5em}

\begin{center}
    \includegraphics[width=\textwidth,height=0.75\textheight,keepaspectratio]{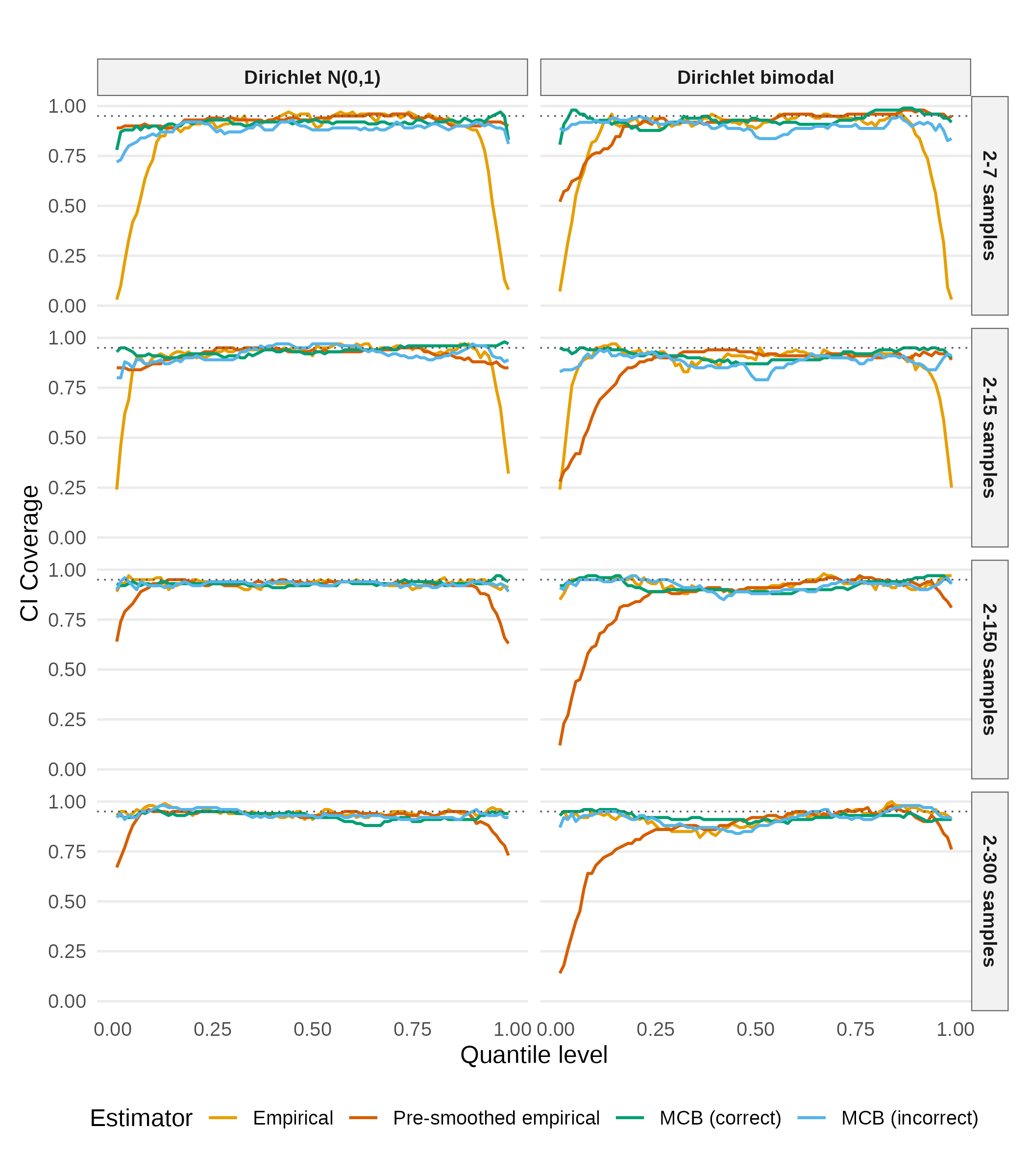}
    \captionof{figure}{95\% confidence interval coverage of the estimated group-specific Wasserstein barycenter quantile functions in Simulation Study 1 with 20 individuals per group. Results are shown for two data-generating settings: individual-level distributions generated from a Dirichlet process with a Normal$(0,1)$ base measure, and individual-level distributions generated from a Dirichlet process with a bimodal Normal mixture base measure. Results are shown across sequencing depths of 2--7, 2--15, 2--150, and 2--300. Each curve shows the 95\% confidence interval coverage of the estimated barycenter quantile function across quantile levels $t \in \{0.01,0.02,\ldots,0.99\}$. Estimators include the empirical Wasserstein barycenter, the pre-smoothed empirical Wasserstein barycenter, the correctly specified MCB estimator, and the misspecified spline-based MCB estimator.}
    \label{fig:sim1_bary_ci_coverage_n20}
\end{center}

\clearpage
\subsubsection{100 individuals per group}

\paragraph{Null scenario summary results}\leavevmode\par\vspace{0.5em}

\begin{center}
    \includegraphics[width=\textwidth,height=0.75\textheight,keepaspectratio]{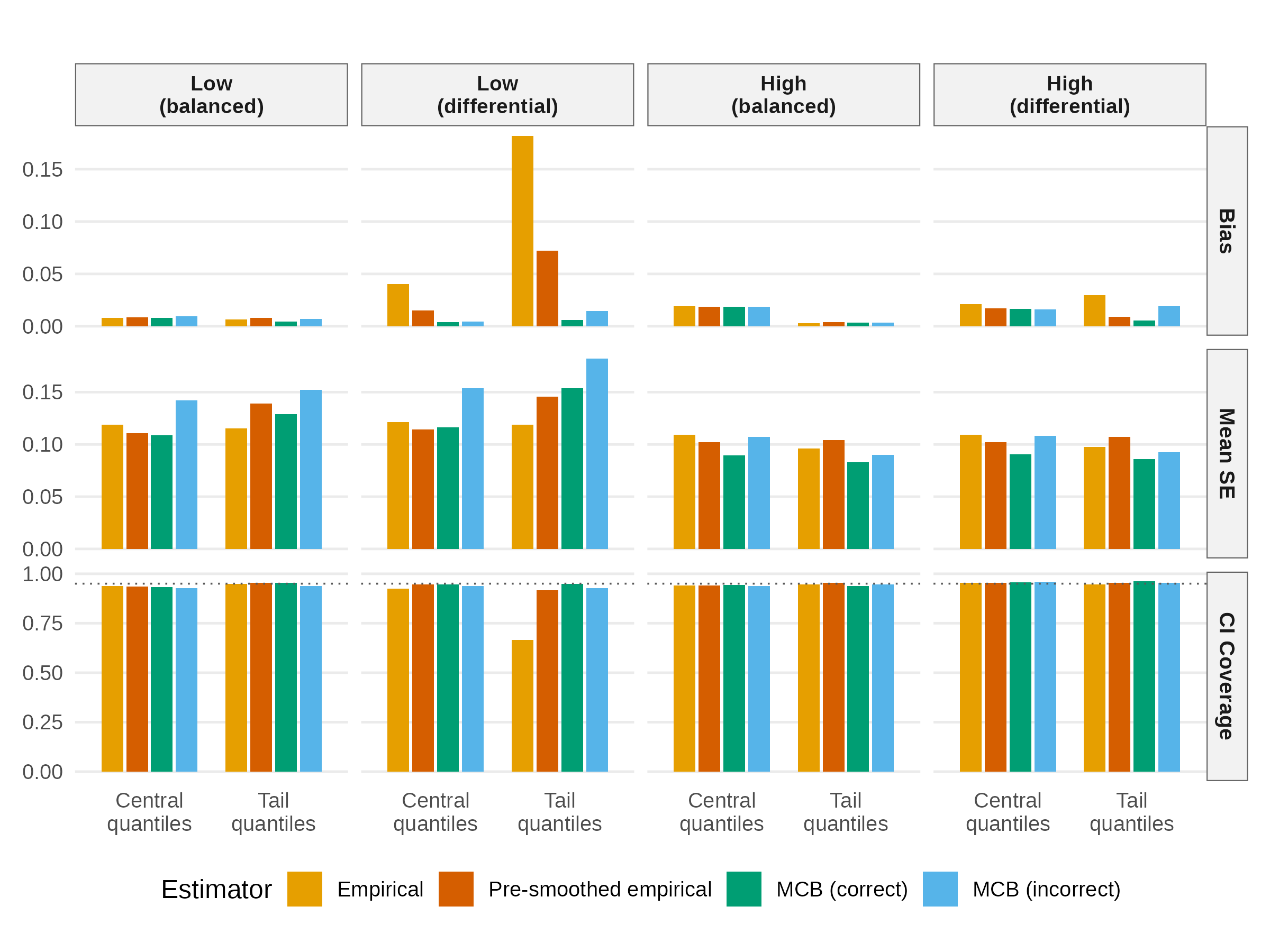}
\captionof{figure}{Summary of estimation of the difference in quantiles $\Delta(t)$ under the null scenario for Simulation Study 1 with 100 individuals per group. Results are shown across four sequencing depth settings: low balanced, low differential, high balanced, and high differential sequencing depth. For each setting, we compare the empirical Wasserstein barycenter, pre-smoothed empirical barycenter, correctly specified MCB estimator, and misspecified spline-based MCB estimator. Performance is summarized by average absolute bias, mean estimated standard error, and pointwise confidence interval coverage for $\Delta(t)$, averaged separately over the central quantiles ($0.06$--$0.94$) and tail quantiles ($0.01$--$0.05$ and $0.95$--$0.99$). The dotted horizontal line in the coverage panels indicates the nominal 95\% coverage level.}    \label{fig:sim1_summary_diff_null_n100}
\end{center}

\clearpage
\paragraph{Alternative scenario summary results}\leavevmode\par\vspace{0.5em}

\begin{center}
    \includegraphics[width=\textwidth,height=0.75\textheight,keepaspectratio]{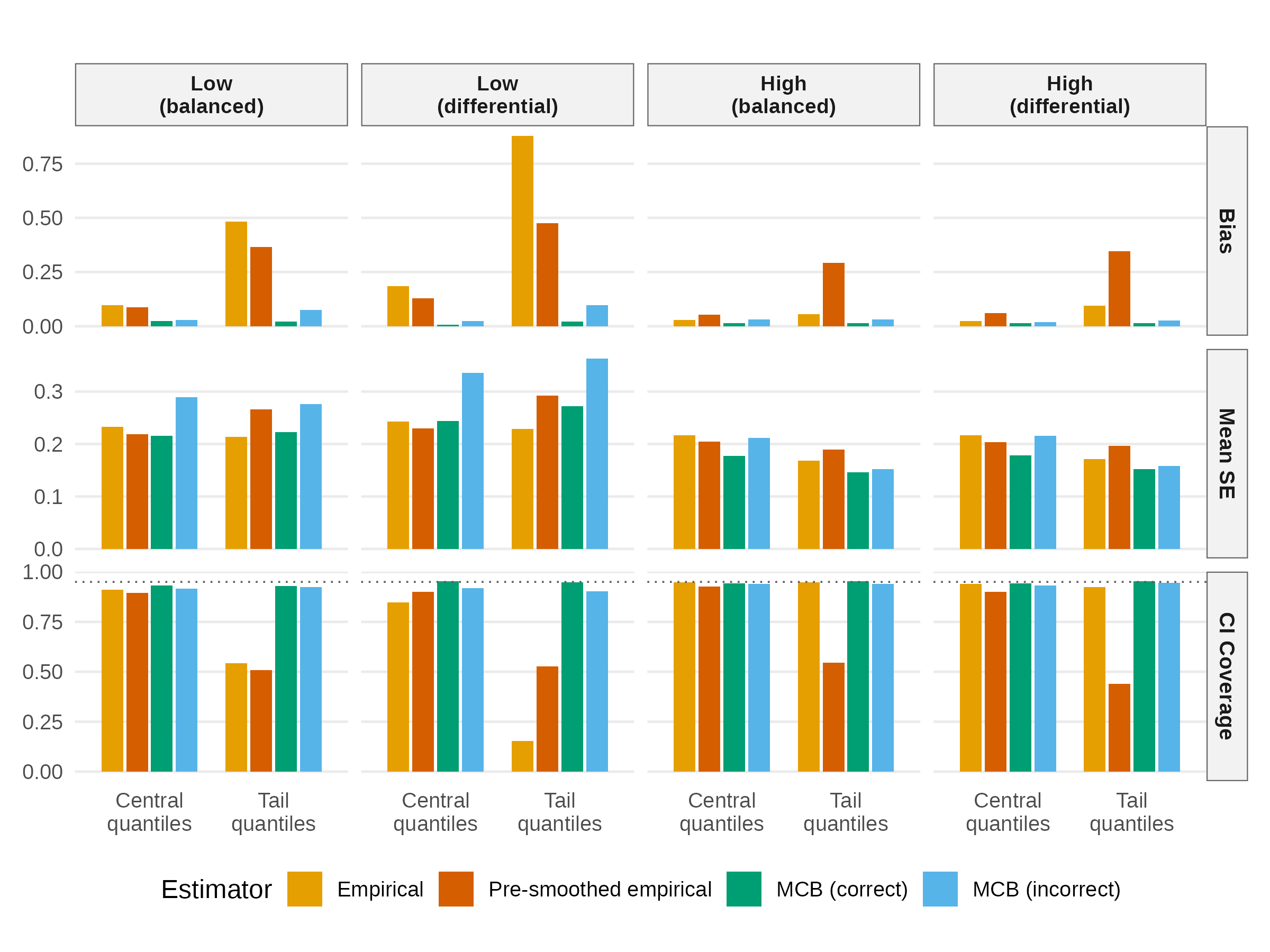}
\captionof{figure}{Summary of estimation of the difference in quantiles $\Delta(t)$ under the alternative scenario for Simulation Study 1 with 100 individuals per group. Results are shown across four sequencing depth settings: low balanced, low differential, high balanced, and high differential sequencing depth. For each setting, we compare the empirical Wasserstein barycenter, pre-smoothed empirical barycenter, correctly specified MCB estimator, and misspecified spline-based MCB estimator. Performance is summarized by average absolute bias, mean estimated standard error, and pointwise confidence interval coverage for $\Delta(t)$, averaged separately over the central quantiles ($0.06$--$0.94$) and tail quantiles ($0.01$--$0.05$ and $0.95$--$0.99$). The dotted horizontal line in the coverage panels indicates the nominal 95\% coverage level.}    \label{fig:sim1_summary_diff_alt_n100}
\end{center}

\clearpage
\paragraph{Quantile difference map $\Delta(t)$: pointwise bias}\leavevmode\par\vspace{0.5em}

\begin{center}
    \includegraphics[width=\textwidth,height=0.75\textheight,keepaspectratio]{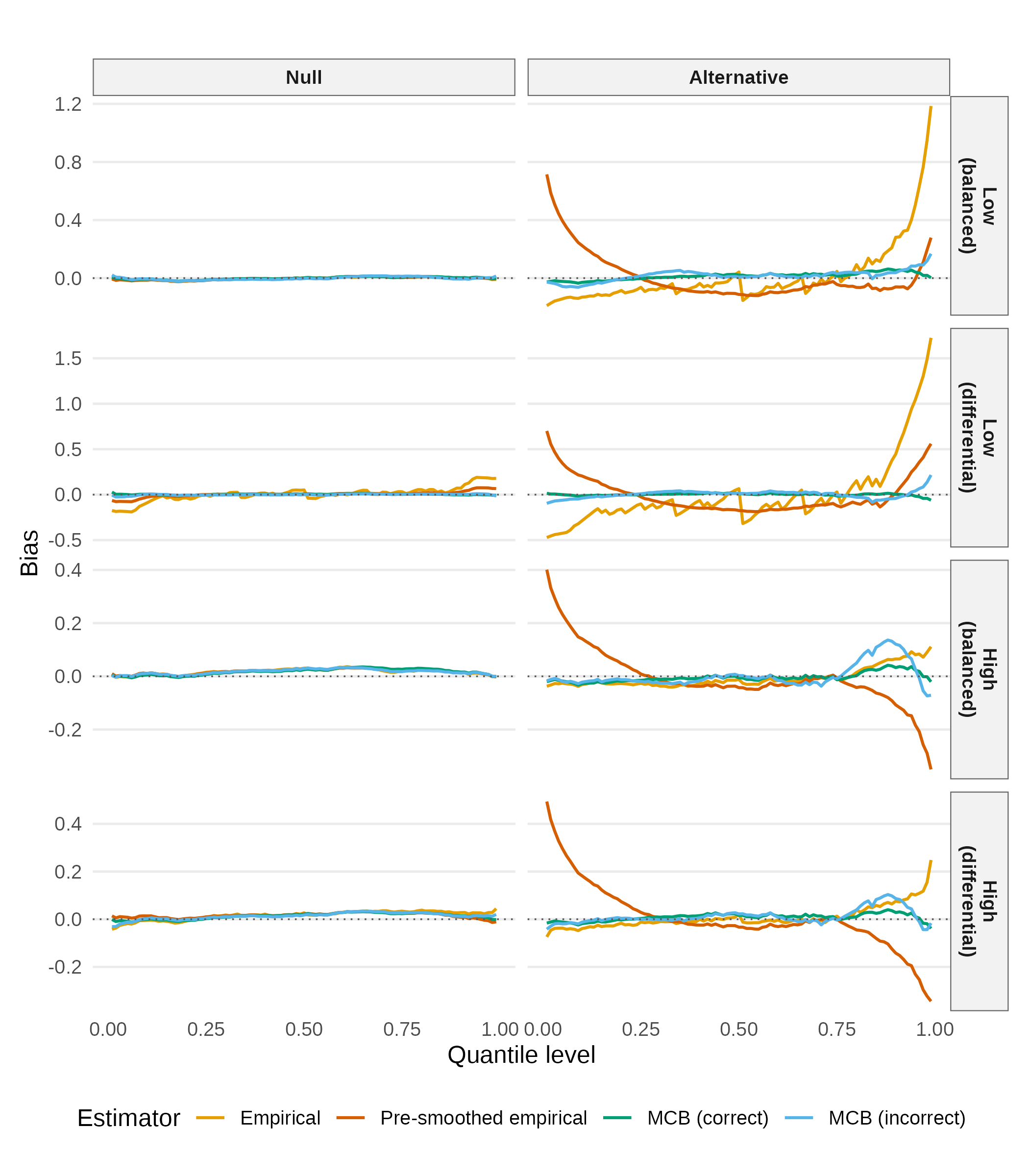}
    \captionof{figure}{Bias of the estimated quantile difference map $\widehat{\Delta}(t)$ in Simulation Study 1 with 100 individuals per group. Results are shown under the null and alternative sequence feature scenarios, across sequencing depth scenarios: low balanced, low differential, high balanced, and high differential. Each curve shows the bias of $\widehat{\Delta}(t)$ across quantile levels $t \in \{0.01,0.02,\ldots,0.99\}$. Estimators include the empirical Wasserstein barycenter, the pre-smoothed empirical Wasserstein barycenter, the correctly specified MCB estimator, and the misspecified spline-based MCB estimator.}
    \label{fig:sim1_diff_bias_n100}
\end{center}

\clearpage
\paragraph{Quantile difference map $\Delta(t)$: pointwise mean estimated standard error}\leavevmode\par\vspace{0.5em}

\begin{center}
    \includegraphics[width=\textwidth,height=0.75\textheight,keepaspectratio]{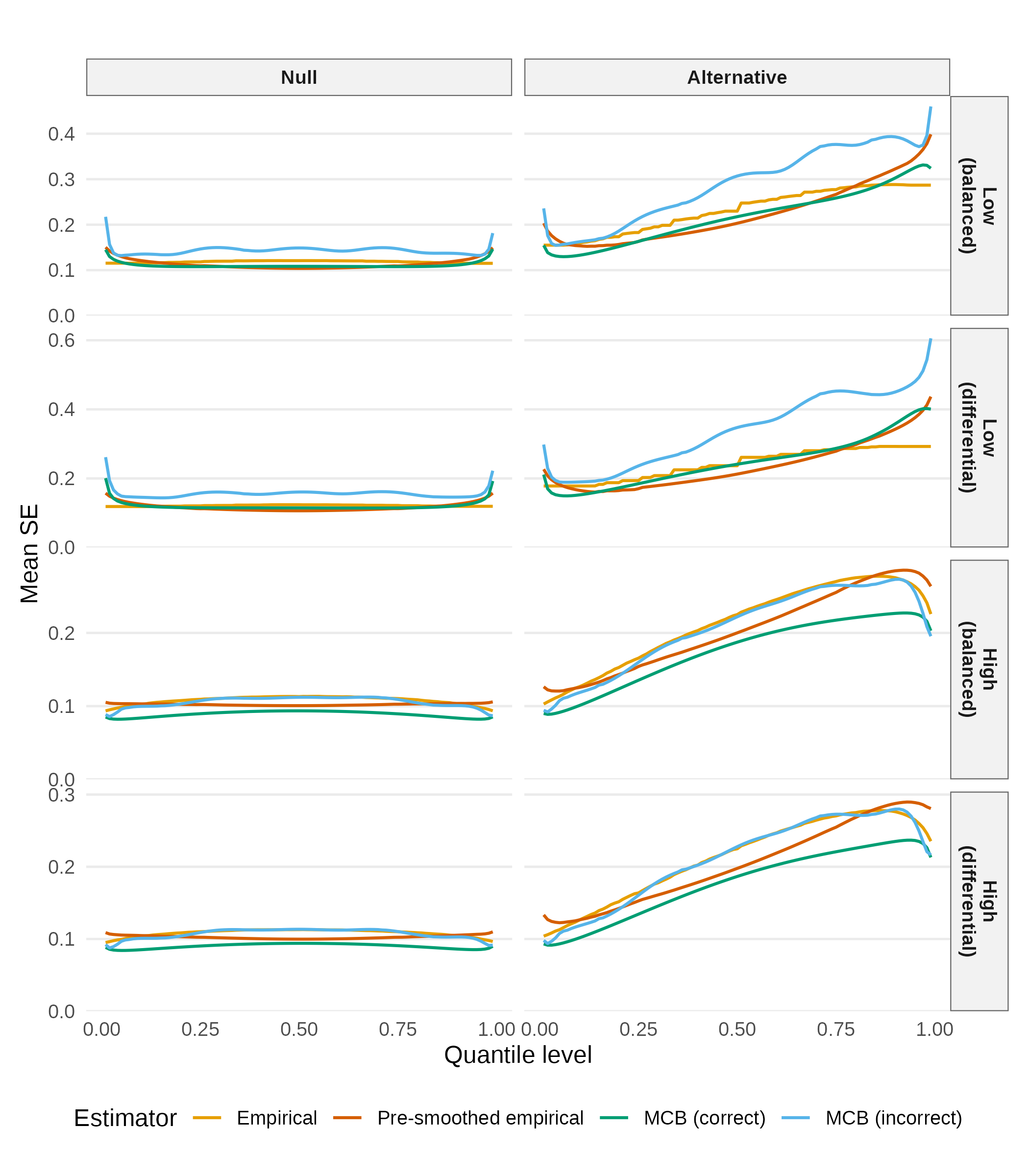}
    \captionof{figure}{Mean estimated standard error of the estimated quantile difference map $\widehat{\Delta}(t)$ in Simulation Study 1 with 100 individuals per group. Results are shown under the null and alternative sequence feature scenarios, across sequencing depth scenarios: low balanced, low differential, high balanced, and high differential. Each curve shows the mean estimated standard error of $\widehat{\Delta}(t)$ across quantile levels $t \in \{0.01,0.02,\ldots,0.99\}$. Estimators include the empirical Wasserstein barycenter, the pre-smoothed empirical Wasserstein barycenter, the correctly specified MCB estimator, and the misspecified spline-based MCB estimator.}
    \label{fig:sim1_diff_mean_se_n100}
\end{center}

\clearpage
\paragraph{Quantile difference map $\Delta(t)$: pointwise confidence coverage}\leavevmode\par\vspace{0.5em}

\begin{center}
    \includegraphics[width=\textwidth,height=0.75\textheight,keepaspectratio]{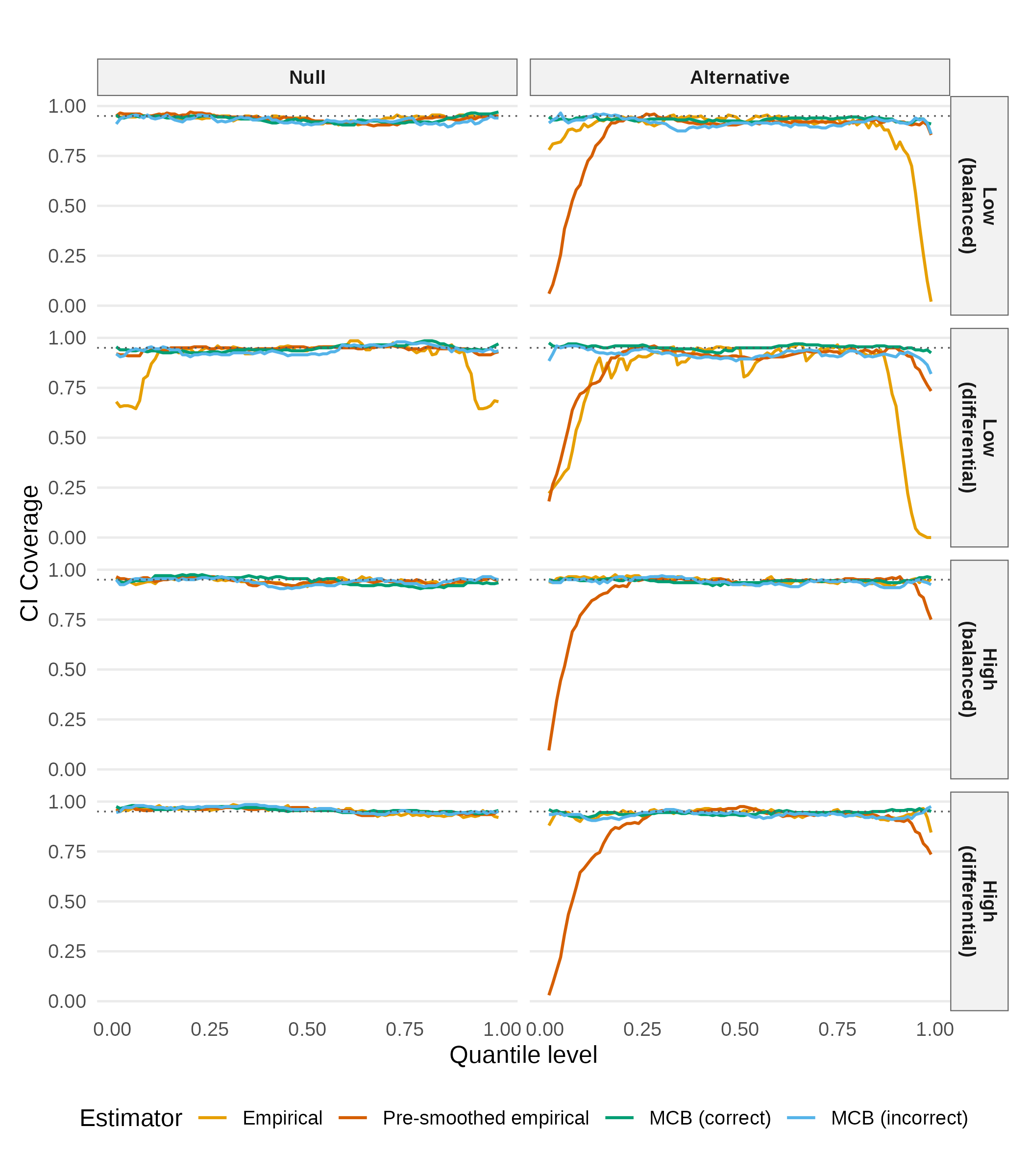}
    \captionof{figure}{95\% confidence interval coverage of the estimated quantile difference map $\widehat{\Delta}(t)$ in Simulation Study 1 with 100 individuals per group. Results are shown under the null and alternative sequence feature scenarios, across sequencing depth scenarios: low balanced, low differential, high balanced, and high differential. Each curve shows the 95\% confidence interval coverage of $\widehat{\Delta}(t)$ across quantile levels $t \in \{0.01,0.02,\ldots,0.99\}$. Estimators include the empirical Wasserstein barycenter, the pre-smoothed empirical Wasserstein barycenter, the correctly specified MCB estimator, and the misspecified spline-based MCB estimator.}
    \label{fig:sim1_diff_ci_coverage_n100}
\end{center}

\clearpage
\paragraph{Group-specific barycenter $\mu_g(t)$: pointwise bias}\leavevmode\par\vspace{0.5em}

\begin{center}
    \includegraphics[width=\textwidth,height=0.75\textheight,keepaspectratio]{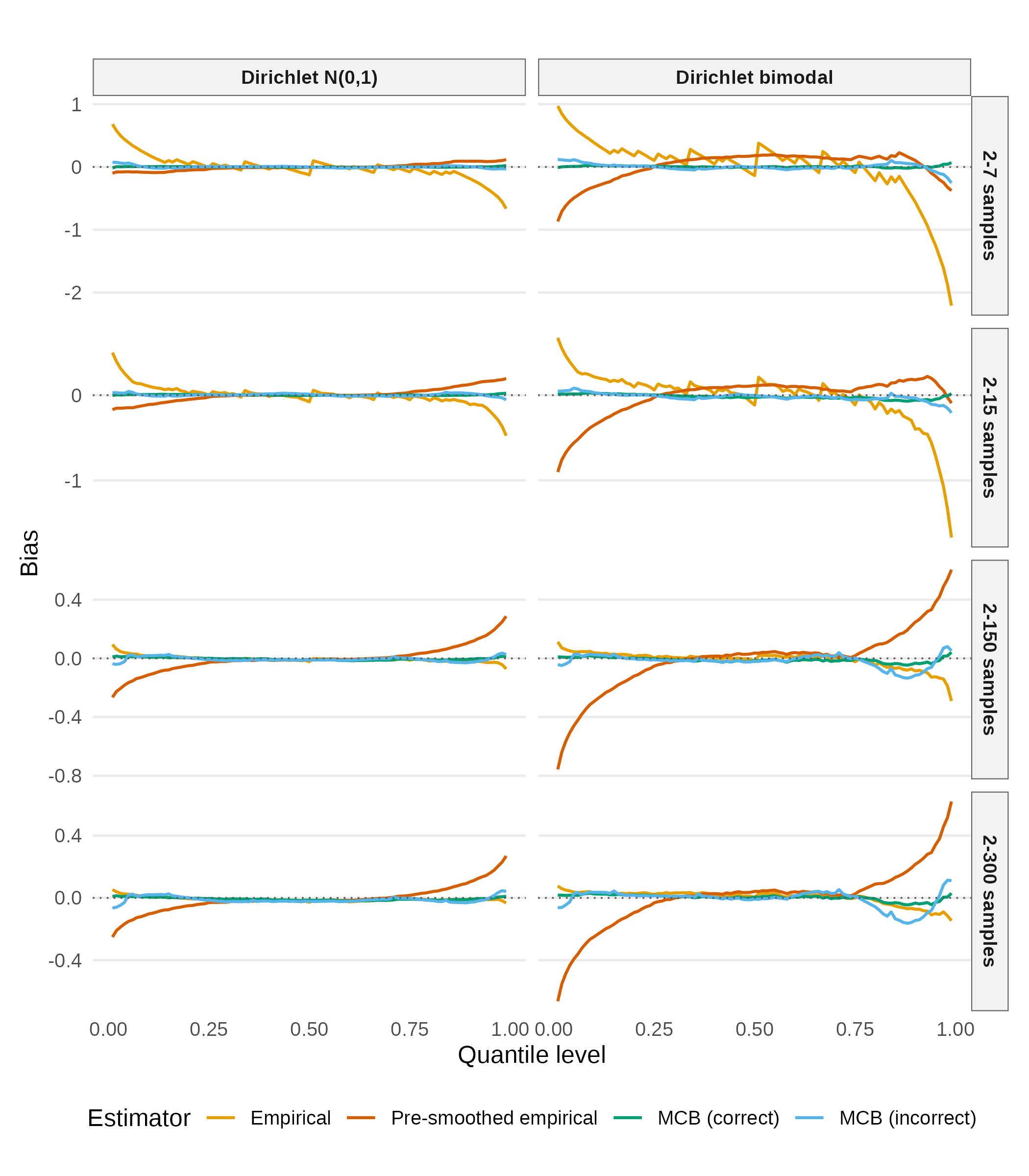}
    \captionof{figure}{Bias of the estimated group-specific Wasserstein barycenter quantile functions in Simulation Study 1 with 100 individuals per group. Results are shown for two data-generating settings: individual-level distributions generated from a Dirichlet process with a Normal$(0,1)$ base measure, and individual-level distributions generated from a Dirichlet process with a bimodal Normal mixture base measure. Results are shown across sequencing depths of 2--7, 2--15, 2--150, and 2--300. Each curve shows the bias of the estimated barycenter quantile function across quantile levels $t \in \{0.01,0.02,\ldots,0.99\}$. Estimators include the empirical Wasserstein barycenter, the pre-smoothed empirical Wasserstein barycenter, the correctly specified MCB estimator, and the misspecified spline-based MCB estimator.}
    \label{fig:sim1_bary_bias_n100}
\end{center}

\clearpage
\paragraph{Group-specific barycenter $\mu_g(t)$: pointwise mean estimated standard error}\leavevmode\par\vspace{0.5em}

\begin{center}
    \includegraphics[width=\textwidth,height=0.75\textheight,keepaspectratio]{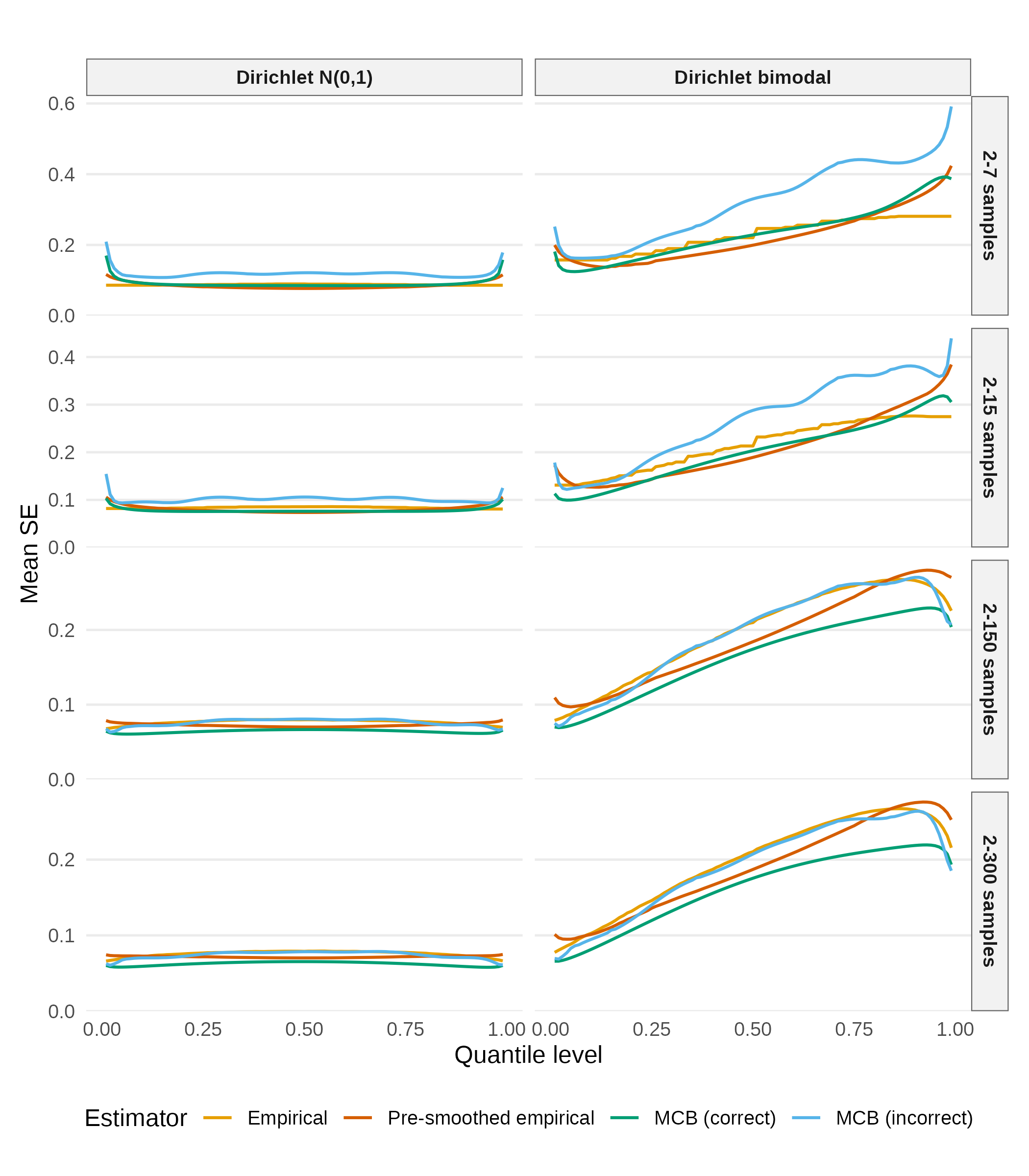}
    \captionof{figure}{Mean estimated standard error of the estimated group-specific Wasserstein barycenter quantile functions in Simulation Study 1 with 100 individuals per group. Results are shown for two data-generating settings: individual-level distributions generated from a Dirichlet process with a Normal$(0,1)$ base measure, and individual-level distributions generated from a Dirichlet process with a bimodal Normal mixture base measure. Results are shown across sequencing depths of 2--7, 2--15, 2--150, and 2--300. Each curve shows the mean estimated standard error of the estimated barycenter quantile function across quantile levels $t \in \{0.01,0.02,\ldots,0.99\}$. Estimators include the empirical Wasserstein barycenter, the pre-smoothed empirical Wasserstein barycenter, the correctly specified MCB estimator, and the misspecified spline-based MCB estimator.}
    \label{fig:sim1_bary_mean_se_n100}
\end{center}

\clearpage
\paragraph{Group-specific barycenter $\mu_g(t)$: pointwise confidence interval coverage}\leavevmode\par\vspace{0.5em}

\begin{center}
    \includegraphics[width=\textwidth,height=0.75\textheight,keepaspectratio]{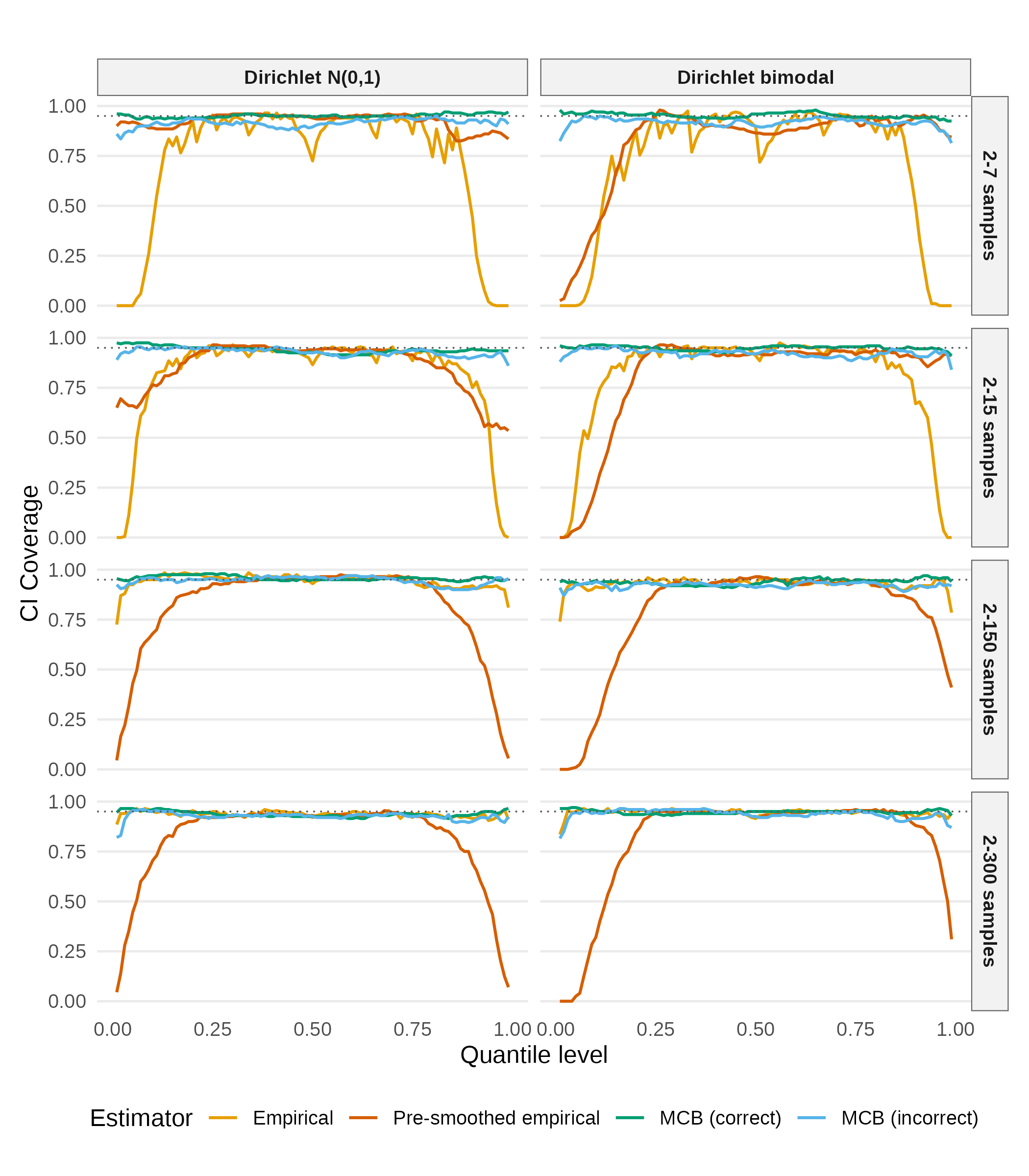}
    \captionof{figure}{95\% confidence interval coverage of the estimated group-specific Wasserstein barycenter quantile functions in Simulation Study 1 with 100 individuals per group. Results are shown for two data-generating settings: individual-level distributions generated from a Dirichlet process with a Normal$(0,1)$ base measure, and individual-level distributions generated from a Dirichlet process with a bimodal Normal mixture base measure. Results are shown across sequencing depths of 2--7, 2--15, 2--150, and 2--300. Each curve shows the 95\% confidence interval coverage of the estimated barycenter quantile function across quantile levels $t \in \{0.01,0.02,\ldots,0.99\}$. Estimators include the empirical Wasserstein barycenter, the pre-smoothed empirical Wasserstein barycenter, the correctly specified MCB estimator, and the misspecified spline-based MCB estimator.}
    \label{fig:sim1_bary_ci_coverage_n100}
\end{center}

\clearpage
\subsection{Simulation Study 2}\label{sec:additional_sim2}

In the main text, Simulation Study 2 was presented under the high balanced sequencing depth scenario. Here, we provide the corresponding results under the remaining sequencing depth scenarios: low balanced, low differential, and high differential. Results are shown below.

\newpage

\subsubsection{Low balanced}

\begin{center}
    \includegraphics[width=\textwidth,height=0.68\textheight,keepaspectratio]{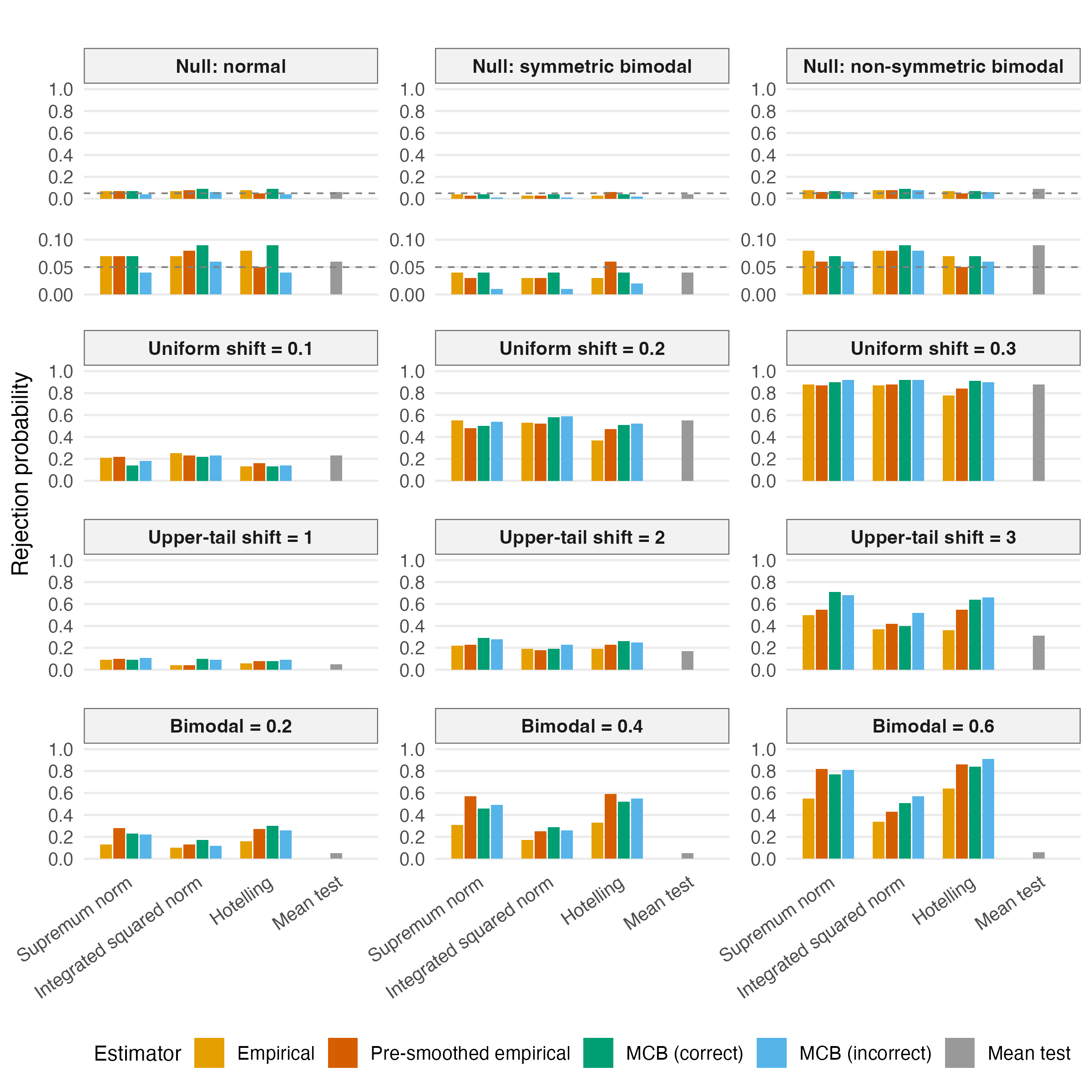}
    \captionof{figure}{Rejection probabilities across the three null scenarios and nine alternative scenarios for each estimator and test statistic, under low balanced sequencing depth. The four barycenter estimators considered are the empirical barycenter (“Empirical”), the pre-smoothed empirical barycenter (“Pre-smoothed empirical”), the correctly specified MCB estimator (“MCB (correct)”), and the misspecified spline-based MCB estimator (“MCB (incorrect)”). The mean test is based on averaging the sequence features within each individual and applying Welch's two-sample t-test to the resulting individual-level averages. The second row presents enlarged views of the three null scenarios, with the y-axis restricted to 0–0.10. The dashed horizontal line indicates the nominal significance level of 0.05.}
    \label{fig:sim2_power_low_balanced}
\end{center}

\clearpage

\subsubsection{Low differential}

\begin{center}
    \includegraphics[width=\textwidth,height=0.68\textheight,keepaspectratio]{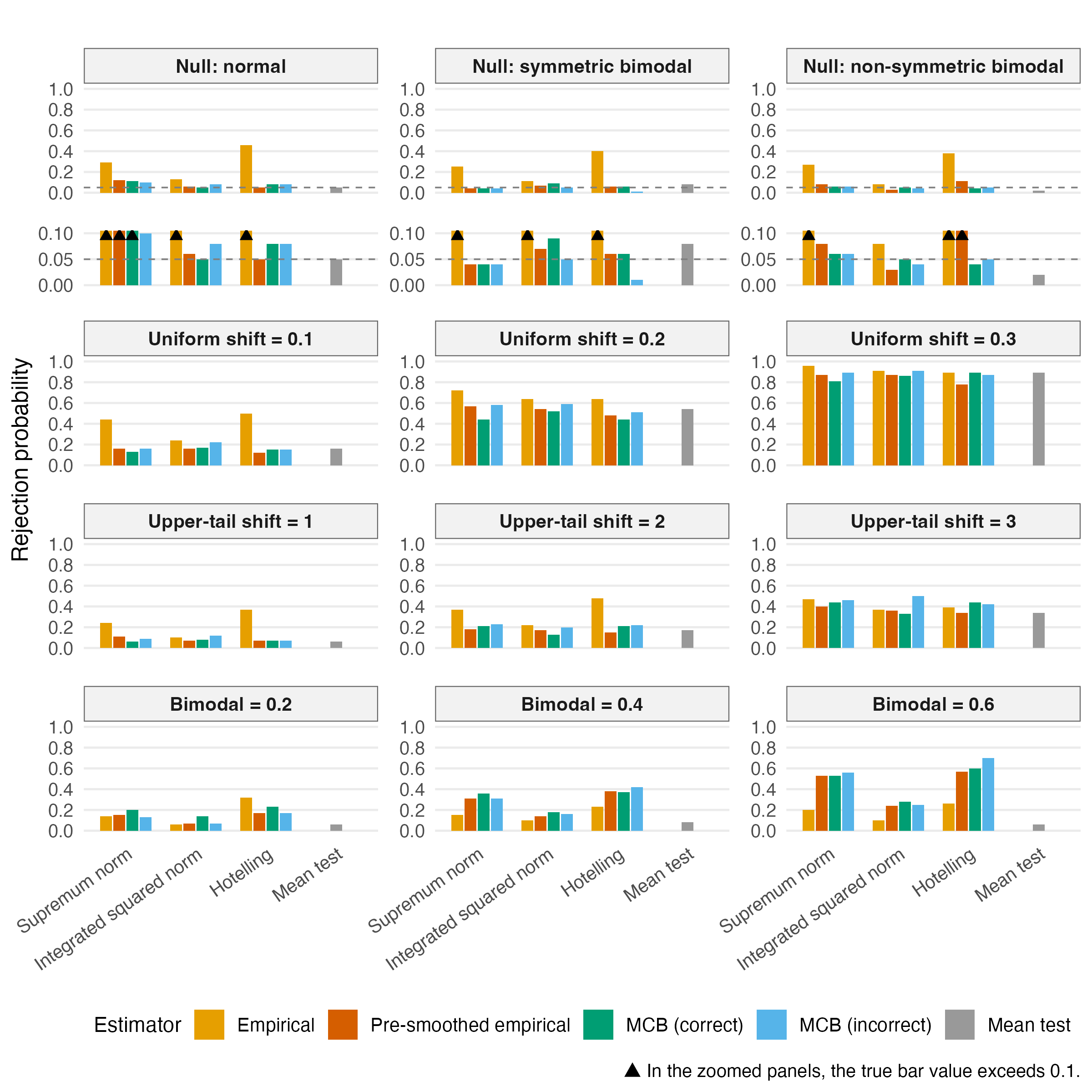}
    \captionof{figure}{Rejection probabilities across the three null scenarios and nine alternative scenarios for each estimator and test statistic, under low differential sequencing depth. The four barycenter estimators considered are the empirical barycenter (“Empirical”), the pre-smoothed empirical barycenter (“Pre-smoothed empirical”), the correctly specified MCB estimator (“MCB (correct)”), and the misspecified spline-based MCB estimator (“MCB (incorrect)”). The mean test is based on averaging the sequence features within each individual and applying Welch's two-sample t-test to the resulting individual-level averages. The second row presents enlarged views of the three null scenarios, with the y-axis restricted to 0–0.10. Black triangles in the zoomed-in null panels indicate rejection probabilities exceeding 0.10. The dashed horizontal line indicates the nominal significance level of 0.05.}
    \label{fig:sim2_power_low_differential}
\end{center}

\clearpage

\subsubsection{High differential}

\begin{center}
    \includegraphics[width=\textwidth,height=0.68\textheight,keepaspectratio]{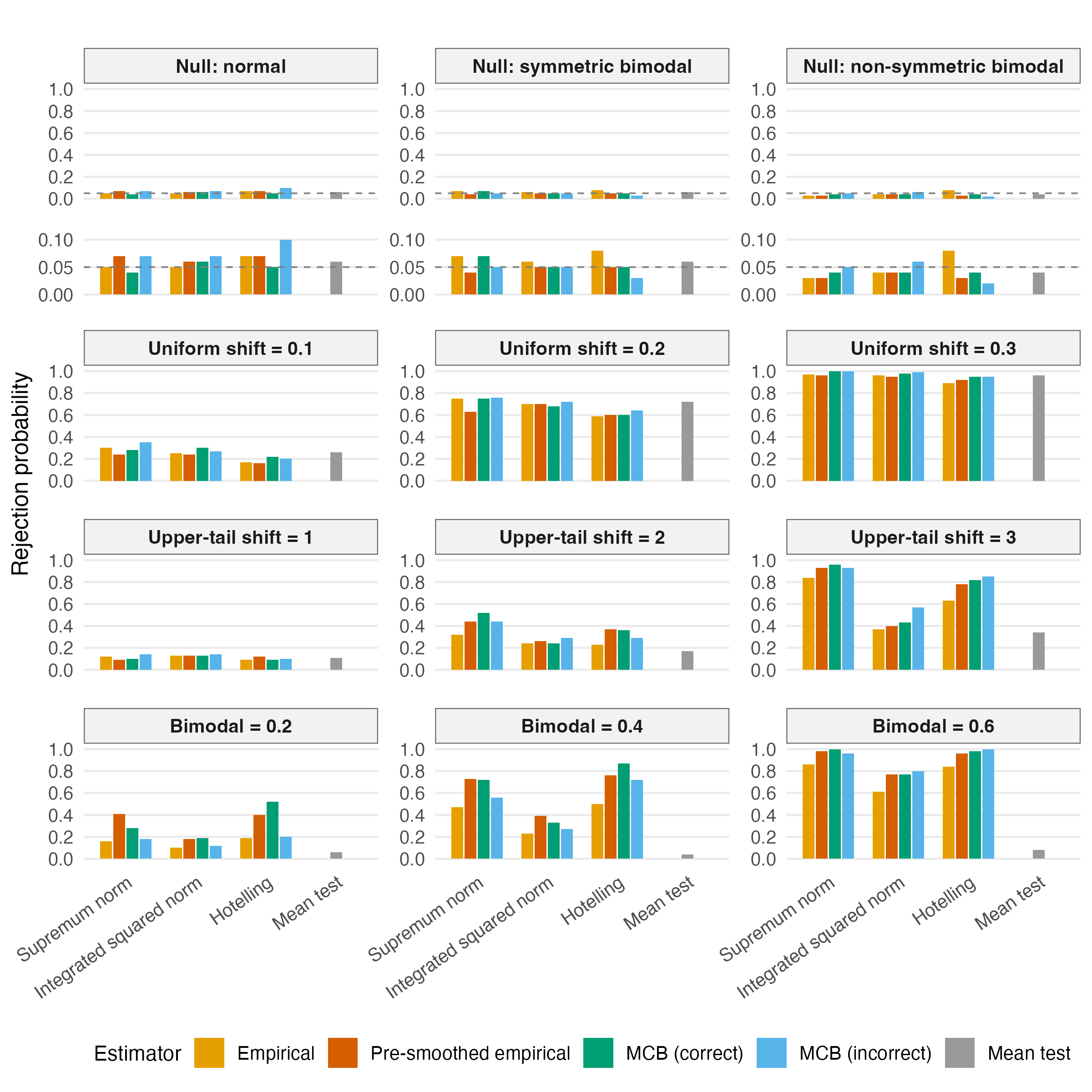}
    \captionof{figure}{Rejection probabilities across the three null scenarios and nine alternative scenarios for each estimator and test statistic, under high differential sequencing depth. The four barycenter estimators considered are the empirical barycenter (“Empirical”), the pre-smoothed empirical barycenter (“Pre-smoothed empirical”), the correctly specified MCB estimator (“MCB (correct)”), and the misspecified spline-based MCB estimator (“MCB (incorrect)”). The mean test is based on averaging the sequence features within each individual and applying Welch's two-sample t-test to the resulting individual-level averages. The second row presents enlarged views of the three null scenarios, with the y-axis restricted to 0–0.10. The dashed horizontal line indicates the nominal significance level of 0.05.}
    \label{fig:sim2_power_high_differential}
\end{center}

\end{document}